\numberwithin{equation}{section}
\theoremstyle{plain}
\newtheorem{dummytheorem}{Dummy-Theorem}[section]
\newcommand{\proofendsign}{$\Box$} 
\newtheorem{lemma}[dummytheorem]{Lemma}
\newtheorem{theorem}[dummytheorem]{Theorem}
\newtheorem{proposition}[dummytheorem]{Proposition}
\newtheorem{corollary}[dummytheorem]{Corollary}
\newtheorem{example}[dummytheorem]{Example}
\newtheorem{remark}[dummytheorem]{Remark}
\renewenvironment{proof}{{\noindent \bf Proof }}
 {{\hspace*{\fill}\proofendsign\par\bigskip}}
\newcommand{\N}{\mathbb{N}}
\newcommand{\Q}{\mathrm{Q}}
\newcommand{\R}{\mathbb{R}}
\newcommand{\E}{\mathbb{E}}
\newcommand{\HHH}{\mathbb{H}}
\newcommand{\TTT}{\mathbb{T}}
\newcommand{\pr}{\mathrm{P}}
\newcommand{\ex}{\mathbb{E}}
\newcommand{\eins}{\mathbbm{1}}
\newcommand{\cA}{\mathcal A}
\newcommand{\cB}{\mathcal B}
\newcommand{\cC}{\mathcal C}
\newcommand{\cF}{\mathcal F}
\newcommand{\cM}{\mathcal M}
\newcommand{\cP}{\mathcal P}
\newcommand{\cQ}{\mathcal Q}
\newcommand{\cT}{\mathcal T}
\newcommand{\cTr}{\mathcal T^{r}}
\newcommand{\cTT}{{\mathcal T}_{\mathbb{T}}}
\newcommand{\cU}{\mathcal U}
\newcommand{\cX}{\mathcal X}
\newcommand{\esssup}{\mathop{\mathrm{ess\,sup}}\displaylimits}
\newcommand{\essinf}{\mathop{\mathrm{ess\,inf}}\displaylimits}
\newcommand{\OFP}{(\Omega,{\cal F},\pr)}
\newcommand{\tOFP}{(\overline{\Omega},\overline{{\cal F}},\overline{\pr})}
\newcommand{\OFPk}{(\Omega,{\cal F}_{t_{k}},\pr_{|{\cal F}_{t_{k}}})}
\newcommand{\OFFP}{(\Omega,{\cal F}, ({\cal F}_{t})_{0\leq t\leq T},\pr)}
\newcommand{\tOFFP}{(\Omega,{\cal F}, (\widetilde{{\cal F}}_{t})_{0\leq t\leq \infty},\pr)}
\newcommand{\oOmega}{\overline{\Omega}}
\newcommand{\ocF}{\overline{{\mathcal F}}}
\newcommand{\oP}{\overline{\pr}}
\newcommand{\ocPinfty}{\overline{{\mathcal P}}^{\infty}}
\newcommand{\taur}{\tau^{r}}
\def\bcswitch{\left\{\renewcommand{\arraystretch}{1.2}\begin{array}{c@{,~}c}}
\def\ecswitch{\end{array}\right.}
\newcommand{\TimestT}{\raisebox{-2.2mm}{
\unitlength 1.00pt
\linethickness{0.5pt}
\begin{picture}(10.00,20.00)
\multiput(2.00,6.00)(0.12,0.18){50}{\line(0,1){0.18}}
\multiput(2.00,15.00)(0.12,-0.18){50}{\line(0,-1){0.18}}
\put(5.00,2.00){\makebox(0,0)[cc]{$\scriptstyle t=1$}}
\put(5.00,18.00){\makebox(0,0)[cc]{$\scriptstyle T$}}
\end{picture}
}}
\newcommand{\Timesim}{\raisebox{-2.2mm}{
\unitlength 1.00pt
\linethickness{0.5pt}
\begin{picture}(10.00,20.00)
\multiput(2.00,6.00)(0.12,0.18){50}{\line(0,1){0.18}}
\multiput(2.00,15.00)(0.12,-0.18){50}{\line(0,-1){0.18}}
\put(5.00,2.00){\makebox(0,0)[cc]{$\scriptstyle i=1$}}
\put(5.00,18.00){\makebox(0,0)[cc]{$\scriptstyle m$}}
\end{picture}
}}
\newcommand{\Timesikm}{\raisebox{-2.2mm}{
\unitlength 1.00pt
\linethickness{0.5pt}
\begin{picture}(10.00,20.00)
\multiput(2.00,6.00)(0.12,0.18){50}{\line(0,1){0.18}}
\multiput(2.00,15.00)(0.12,-0.18){50}{\line(0,-1){0.18}}
\put(5.00,2.00){\makebox(0,0)[cc]{$\scriptstyle i=k$}}
\put(5.00,18.00){\makebox(0,0)[cc]{$\scriptstyle m$}}
\end{picture}
}}
\newcommand{\Timesk}{\raisebox{-2.2mm}{
\unitlength 1.00pt
\linethickness{0.5pt}
\begin{picture}(10.00,20.00)
\multiput(2.00,6.00)(0.12,0.18){50}{\line(0,1){0.18}}
\multiput(2.00,15.00)(0.12,-0.18){50}{\line(0,-1){0.18}}
\put(5.00,2.00){\makebox(0,0)[cc]{$\scriptstyle k=1$}}
\put(5.00,18.00){\makebox(0,0)[cc]{$\scriptstyle k_{j}$}}
\end{picture}
}}
\begin{document}

\begin{frontmatter}
\title{Optimal stopping under  model uncertainty: randomized stopping times approach }
\runtitle{Optimal stopping under uncertainty}

\begin{aug}
\author{\snm{Denis Belomestny}
\ead[label=e1]{denis.belomestny@uni-due.de}}
\and
\author{\snm{Volker Kr\"atschmer}\ead[label=e2]{volker.kraetschmer@uni-due.de}}


\thankstext{T1}{This research was partially supported by the Deutsche
      Forschungsgemeinschaft through the SPP 1324 ``Mathematical methods for extracting quantifiable     information from complex systems'' and   by
Laboratory for Structural Methods of Data Analysis in Predictive Modeling, MIPT, RF government grant, ag. 11.G34.31.0073.}

\runauthor{D. Belomestny and V. Kr\"atschmer}

\affiliation{Duisburg-Essen University}

\address{
Duisburg-Essen University\\
Faculty of Mathematics\\
Thea-Leymann-Str. 9\\
D-45127 Essen\\
Germany\\
\printead{e1}\\
\phantom{E-mail: \ }\printead*{e2}}

\end{aug}

\begin{abstract}
In this work we consider optimal stopping problems with conditional convex risk measures  
of the form
$$
\rho^{\Phi}_t(X)=\sup_{\Q\in\cQ_{t}}\left(\ex_{\Q}[-X|\mathcal{F}_t]- \ex\left[\Phi\left(\frac{d\Q}{d\pr}\right)\big|\cF_{t}\right]\right), 
$$
where $\Phi: [0,\infty[\rightarrow [0,\infty]$ is a lower semicontinuous convex mapping and $\cQ_{t}$ stands for the set of all probability measures $\Q$ which are absolutely continuous w.r.t. a given measure $\pr$   and $\Q= \pr$ on \(\cF_{t}.\) Here the \textit{model uncertainty risk} depends on a (random) divergence 
$\ex\left[\Phi\left(\frac{d\Q}{d\pr}\right)\big|\cF_{t}\right]$ measuring the distance between a hypothetical probability measure we are uncertain about and a reference one at time $t.$
Let \((Y_t)_{t\in [0,T]}\) be an adapted nonnegative, right-continuous stochastic process  fulfilling some proper integrability condition and let \(\cT\) be the set of stopping times on \([0,T]\), then without assuming any kind of time-consistency for the family \((\rho_t^{\Phi}),\) we derive a novel representation 
\begin{eqnarray*}
\sup_{\tau\in\cT}\rho^{\Phi}_0(-Y_\tau)= \inf_{x\in\R}\left\{\sup\limits_{\tau\in\cT}\ex\bigl[\Phi^*(x + Y_{\tau}) - x\bigr]\right\},
\end{eqnarray*}
which makes the application of the standard dynamic programming based approaches possible.
In particular, we generalize the additive dual representation of Rogers, \cite{Rogers2002} to the   case of optimal stopping  under uncertainty. Finally, we develop several Monte Carlo algorithms and illustrate their power for  optimal stopping under Average Value at Risk.
\end{abstract}

\begin{keyword}[class=MSC]
\kwd[Primary ]{60G40}
\kwd{60G40}
\kwd[; secondary ]{91G80}
\end{keyword}

\begin{keyword}
\kwd{Optimized certainty equivalents}
\kwd{optimal stopping}
\kwd{primal representation}
\kwd{additive dual representation}
\kwd{randomized stopping times}
\kwd{thin sets}
\end{keyword}

\end{frontmatter}
\section{Introduction}
In this paper we study the optimal stopping problems in an uncertain environment. 
The classical solution to the optimal stopping problems based on the dynamic programming principle assumes that there is a unique subjective prior distribution driving the reward process. 
However, for example, in incomplete financial markets, we  have to deal  with multiple equivalent martingale measures not being sure which one underlies the market.    In fact under the presence of the multiple possible distributions, a solution of the optimal stopping problem by maximization with respect to some subjective prior cannot be reliable. Instead, it is reasonable  to view the multitude of possible distributions as a kind of \textit{model uncertainty risk} which should be taken into account while formulating an optimal stopping problem. Here one may draw on concepts from the theory of risk measures. As the established generic notion for static risk assessment at present time $0$, convex risk measures are specific functionals $\rho_{0}$ on vector spaces of random variables viewed as financial risks ({see \cite{FrittelliRosazza2002} and \cite{FrittelliRosazza2004}}). They typically have the following type of robust representation 
\begin{eqnarray}
\label{rhostatic}
\rho_{0}(X)=\sup_{\Q\in \cQ(\pr)}\bigl\{\E_{\Q}[-X] - \gamma_{0}(\Q)\bigr\},
\end{eqnarray}
where $\cQ(\pr)$ denotes the set of probability measures which are absolutely continuous w.r.t. a given reference probability measure $\pr,$ and $\gamma_{0}$ is some penalty function (see e.g. {\cite{CheriditoDelbaen2004}} and \cite{FoellmerSchied2010}). In this way, model uncertainty  is incorporated, as no specific probability measure is assumed. Moreover, the penalty function  scales the plausibility of models. 

Turning over from static to dynamic risk assessment, convex risk measures have been extended to the concept of conditional convex risk measures $\rho_{t}$ at a future time $t,$ which are specific functions on the space of financial risks with random outcomes {(see \cite{BionNadal2008}, \cite{DetlefsenScandolo2005} and \cite{CheriditoDelbaen2006})}. Under some regularity conditions, they have a robust representation of the form (see e.g. \cite{FoellmerPenner2006}, {\cite{DelbaenPeng2010}} or \cite[Chap. 11]{FoellmerSchied2010})
\begin{eqnarray}
\label{rho_conv}
\rho_t(X)=\sup_{\Q\in \mathcal{Q}_t} \bigl\{\E_{\Q}[-X|\mathcal{F}_t]-\gamma_t(\Q)\bigr\},
\end{eqnarray}
where \(\gamma_t\) is a (random) penalty function and \(\mathcal{Q}_t\) consists of all $\Q\in \mathcal{Q}(\pr)$ with $\Q=\pr \mbox{ on } \mathcal{F}_t.$ As in (\ref{rhostatic}), the robust representation \eqref{rho_conv} mirrors the model uncertainty, but now at a future time $t.$  

In recent years the optimal stopping with families $(\rho_{t})_{t\in [0,T]}$ of conditional convex risk measures was subject of several studies. For example, the works 
\cite{Riedel2009} and \cite{KraetschmerSchoenmakers2010} are settled within a time-discrete framework, where in addition the latter one provides some dual representations extending the well-known ones from the classical optimal stopping. Optimal stopping in continuous time was considered in \cite{BayraktarKaratzasYao2010}, \cite{BayraktarYao2011I}, \cite{BayraktarYao2011II}, \cite{ChengRiedel2013}.  All these contributions restrict their analysis to the families $(\rho_{t})_{t\in [0,T]}$ satisfying the property of time consistency, sometimes also called recursiveness, defined to mean
\begin{eqnarray*}
\rho_s(-\rho_{t})=\rho_s,\quad 0\leq s < t\leq T. 
\end{eqnarray*}
Hence the results of the above papers can not be, for example, used to solve optimal stopping problems under such very popular convex risk measure as Average Value at Risk.
The only paper which tackled the case of non time-consistent families of conditional convex risk measures so far is \cite{XuZhou2013}, where the authors considered the so-called  distorted mean payoff functionals.  
However, the analysis of \cite{XuZhou2013} excludes the case of Average Value at Risk as well. Moreover, the class of processes to be stopped is limited to the functions of a one-dimensional geometric Brownian motion. The main probabilistic tool used in \cite{XuZhou2013} is the Skorokhod embedding. 
\par
 In this paper we consider a rather general class of conditional convex risk measures having representation  \eqref{rho_conv}
with  
 $\gamma_{t}(\Q) = \ex\big[\Phi\big(d\Q/d\pr\big)\big|\cF_{t}\big]$  for some lower semicontinuous convex mapping $\Phi: [0,\infty[\rightarrow [0,\infty].$ The related class of risk measures $\rho_{0}$ known as the class of \textit{divergence risk measures} or \textit{optimized certainty equivalents}  was first introduced in \cite{Ben-TalTeboulle1987}, 
 \cite{Ben-TalTeboulle2007}. Any divergence risk measure has the representation
 $$
 \rho_{0}(X) = \inf_{x\in\R}\ex\big[\Phi^{*}\big(x - X\big) - x\big]
 $$ 
 with
 $$
 \Phi^{*}: \R\rightarrow [0,\infty],\, y\mapsto\sup_{x\geq 0}(xy - \Phi(x)).
 $$
 (cf. \cite{Ben-TalTeboulle1987},  \cite{Ben-TalTeboulle2007}, \cite{CheriditoLi2009}, or Appendix \ref{AppendixAA}).
Here we study the  problem of optimally stopping the reward process \(\rho_0(-Y_t),\) where \((Y_t)_{t\in [0,T]}\) is an adapted nonnegative, right-continuous stochastic process with \(\sup_{t\in [0,T]} Y_t\)  satisfying some suitable integrability condition. We do not assume any time-consistency  for the family \(\rho_t \) and  basically impose no further restrictions on \((Y_t)\). Our main result is the representation  
\begin{eqnarray}
\label{repr_abs}
\sup_{\tau\in\cT}\rho_0(-Y_\tau)= \inf_{x\in\R}\left\{\sup_{\tau\in\cT}\ex[\Phi^{*}(x + Y_{\tau}) - x]\right\},
\end{eqnarray}
which allows one to apply the well known methods from  the theory of ordinary optimal stopping problems. In particular, we derive the so-called  additive dual representation of the form:
\begin{eqnarray}
\label{dual_intr}
\inf_{x\in\R}\inf_{M\in \cM_{0}}\ex\left[\sup_{t\in [0,T]}\big(\Phi^{*}(x+ Y_{t}) - x - M_{t}\big)\right],
\end{eqnarray}
where \(\cM_{0}\) is the class of adapted martingales vanishing at time 0. This dual representation generalizes the well-known dual representation of Rogers, \cite{Rogers2002}.
The representation \eqref{dual_intr} together with \eqref{repr_abs} can be used to efficiently construct lower and upper bounds for the optimal value \eqref{repr_abs} by Monte Carlo.
\par
The paper is organised as follows. In Section~\ref{setup} we introduce  notation and set up the optimal stopping problem.  The main results are presented in Section~\ref{main_results} where in particular a criterion  ensuring the existence of a saddle-point in \eqref{repr_abs} is formulated.  Section~\ref{discussion} contains some discussion on the main results and on their relation to the previous literature.  A Monte Carlo algorithm for computing lower and upper bounds for the value function is formulated in Section~\ref{MC}, where also an  example of  optimal stopping under Average Value at Risk is numerically analized.\par 
The crucial idea to derive representation \eqref{repr_abs} is to consider the optimal stopping problem
$$
\mbox{maximize}~\rho_{0}(-Y_{\taur})~\mbox{over}~\taur\in\cT^{r},
$$
where $\cT^{r}$ denotes the set of all randomized stopping times on $[0,T].$ It will be studied in Section \ref{optimalrandomizedstoppingtimes}, where in particular it will turn out that this optimal stopping problem has the same optimal value as the originial one. Finally, the proofs are collected in Section~\ref{proofs}.
\section{The set-up}
\label{setup}
Let $(\Omega,\cF, \pr)$ be a probability space and denote by $L^0:=L^0(\Omega,\cF,\pr)$ the class of all finitely-valued random variables (modulo the $\pr$-a.s. equivalence).  Let \(\Psi\) be  a Young function, i.e., a left-continuous, nondecreasing convex function $\Psi:\R_+\to[0,\infty]$ such that $0=\Psi(0)=\lim_{x\to 0}\Psi(x)$ and $\lim_{x\to\infty}\Psi(x)=\infty$.   The Orlicz space associated with $\Psi$ is defined as
$$L^\Psi:=L^\Psi(\Omega,\cF,\pr)=\big\{X\in L^0: \,\ex[\,\Psi(c|X|)\,]<\infty~
\mbox{ for some $c>0$}\big\}.
$$
It is a Banach space when endowed with the Luxemburg norm
 $$
    \|X\|_{\Psi} := \inf\left\{\lambda > 0\,:\,\ex [\,\Psi(|X|/\lambda)\,]\leq 1\right\}.
$$
The Orlicz heart is 
$$H^\Psi:=H^\Psi(\Omega,\cF,\pr)=\big\{X\in L^0\,:\,\ex [\,\Psi(c|X|)\,]<\infty~
\mbox{ for all $c>0$}\big\}.
$$
For example, if $\Psi(x)=x^p/p$ for some $p\in[1,\infty[,$ then $H^\Psi=L^\Psi=L^p := L^{p}\OFP$ is the usual $L^{p}-$space. In this case $ \|Y\|_{\Psi}=p^{-1/p}\|Y\|_p,$ where 
$\|\cdot\|_{p}$ stands for $L^{p}-$norm. If $\Psi$ takes the value $+\infty$, then $H^\Psi=\{0\}$ and $L^\Psi=L^\infty := 
L^{\infty}\OFP$ is defined to consist of all $\pr-$essentially bounded random variables. By Jensen inequality, we always have $H^{\Psi}\subseteq L^{1}.$ In the case of finite $\Psi,$ we see that $L^{\infty}$ is a linear subspace of $H^{\Psi},$ which is dense w.r.t. $\|\cdot\|_{\Psi}$ (see Theorem 2.1.14 in \cite{EdgarSucheston1992}).
\par
Let \(0<T<\infty\) and let $\OFFP$  be a filtered probability space, where 
$(\cF_{t})_{t\in [0,T]}$ is a right-continuous filtration with $\cF_{0}$ containing only the sets with probability $0$ or $1$ as well as all the null sets of $\cF$. Furthermore, consider a lower semicontinuous convex mapping $\Phi: [0,\infty[\rightarrow [0,\infty]$ satisfying 
$\Phi(x_{0}) < \infty$ for some $x_{0} > 0,$ $\inf_{x\geq 0}\Phi(x) = 0,$ and 
$\lim_{x\to\infty}\frac{\Phi(x)}{x} = \infty.$ Its Fenchel-Legendre transform
$$
\Phi^{*}:\R\rightarrow \R\cup\{\infty\},~y\mapsto\sup_{x\geq 0}~\big(xy - \Phi(x)\big)
$$
is a finite nondecreasing convex function whose restriction $\Phi^{*}\bigr |_{[0,\infty[}$ to 
$[0,\infty[$ is a finite Young function (cf. Lemma \ref{optimizedcertaintyequivalent} in Appendix \ref{AppendixAA}). We shall use  
$H^{\Phi^{*}}$ to denote the Orlicz heart w.r.t. $\Phi^{*}\bigr |_{[0,\infty[}.$ Then we can define a conditional convex  risk measure   $(\rho^{\Phi}_{t})_{t\in [0,T]}$ via 
$$
\rho^{\Phi}_{t}(X)=\esssup_{\Q\in\cQ_{\Phi,t}}\left(\ex_{\Q}[-X\,|\,\cF_{t}]- 
\ex\left[\left.\Phi\left(\frac{d\Q}{d\pr}\right)\right |\cF_{t}\right]\right)
$$
for all \(X\in H^{\Phi^{*}},\)  where $\cQ_{\Phi,t},$ denotes the set of all probability measures $\Q$ which are absolutely continuous w.r.t. $\pr$  such that $\Phi\left(\frac{d\Q}{d\pr}\right)$ is $\pr-$integrable and $\Q= \pr$ on \(\cF_{t}.\) Note that 
$\frac{d\Q}{d\pr} X$ is $\pr-$integrable for every $\Q\in \cQ_{\Phi,0}$ and any 
$X\in H^{\Phi^{*}}$ due to the Young's inequality. Consider now a right-continuous nonnegative stochastic process $(Y_{t})$  adapted to $(\mathcal{F}_t).$    Furthermore, let $\cT$ contain all finite stoping times $\tau\leq T$ w.r.t. $(\mathcal{F}_t).$ The main object of our study is the following optimal stopping problem
\begin{equation}
\label{stoppproblem}
\sup_{\tau\in\cT}\, \rho^{\Phi}_0(-Y_\tau).
\end{equation}
If we set $\Phi(x) = 0$ for $x\leq 1,$ and 
$\Phi(x) = \infty$ otherwise, we end up with the classical stopping problem
\begin{equation}
\label{riskneutral}
\sup_{\tau\in\cT}\ex[Y_{\tau}].
\end{equation}
It is well known that the optimal value of the problem \eqref{riskneutral} may be viewed as a risk neutral price of the American option with the discounted payoff $(Y_{t})_{t\in [0,T]}$ at time $t = 0.$ However, in face of incompleteness, it seems to be not appropriate to assume  the uniqueness of the risk neutral measure. Instead, the uncertainty about the stochastic process driving the payoff \(Y_t\) should be taken into account. Considering the optimal value of the problem (\ref{stoppproblem}) as an alternative pricing rule, model uncertainty risk is incorporated by taking the supremum over \(\cQ_{\Phi,t},\)  where the penalty function is used to assess the plausibility of possible models. The more plausible is the model, the lower is the value of the penalty function. 
\begin{example}
Let us illustrate our setup in the case of the so called Average Value at Risk risk measure. The Average Value at Risk risk measure at level $\alpha\in ]0,1]$ is defined as the following functional:
$$
AV@R_{\alpha}: \, X\mapsto -\frac{1}{\alpha}\,\int_{0}^{\alpha}
F^{\leftarrow}_{X}(\beta)\,d\beta,
$$
where \(X\in L^{1}\) and $F^{\leftarrow}_{X}$ denotes the left-continuous quantile function of the distribution function $F_{X}$ of $X$ {defined by $F^{\leftarrow}_{X}(\alpha) = \inf\{x\in\R\mid F_{X}(x)\geq\alpha\}$ for $\alpha\in ]0,1[$.} Note that $AV@R_{1}(X) = \ex[- X]$  for any $X\in L^{1}.$ Moreover, it is well known that 
$$
AV@R_{\alpha}(X) = \sup\limits_{\Q\in\cQ_{\Phi_{\alpha},0}}\ex_{\Q}[-X]\quad\mbox{for}\, X\in L^{1},
$$
where $\Phi_{\alpha}$ is the Young function defined by 
$\Phi_{\alpha}(x) = 0$ for $x\leq 1/\alpha,$ and $\Phi_{\alpha}(x) = \infty$ otherwise (cf. \cite[Theorem 4.52]{FoellmerSchied2010} and \cite{KainaRueschendorf2007}). Observe that the set $\cQ_{\Phi_{\alpha},0}$ consists of all probability measures on $\cF$ with 
$\frac{d\Q}{d\pr}\leq 1/\alpha\quad \pr-$a.s.. Hence the optimal stopping problem 
(\ref{stoppproblem}) reads as follows
\begin{equation}
\label{AV@Rstopping}
\sup_{\tau\in\cT} \, AV@R_{\alpha}(-Y_{\tau}) = \sup_{\tau\in\cT} \left\{\frac{1}{\alpha}\,\int_{1- \alpha}^{1}
F^{\leftarrow}_{Y_{\tau}}(\beta)\,d\beta \right\}.
\end{equation}
The family $\big(\rho^{\Phi_{\alpha}}_{t}\big)_{t\in [0,T]}$ of conditional convex risk measure associated with $\Phi_{\alpha}$ 
is also known as the conditional AV@R $(AV@R_{\alpha}(\cdot\,|\,\cF_{t}))_{t\in [0,T]}$ at level $\alpha$ 
(cf. \cite[Definition 11.8]{FoellmerSchied2010}). 
\end{example}
\begin{example}
\label{entropicriskmeasure}
Let us consider, for any $\gamma > 0,$ the continuous convex mapping $\Phi_{[\gamma]}: [0,\infty[\rightarrow\R$ defined by $\Phi_{[\gamma]}(x) = (x\ln(x) - x + 1)/\gamma$ for $x > 0$ and $\Phi_{[\gamma]}(0) = 1/\gamma.$ The Fenchel-Legendre transform of $\Phi_{[\gamma]}$ is given by
$\Phi_{[\gamma]}^{*}(y) = (\exp(\gamma y) - 1)/\gamma$ for $y\in\R.$ In view of Lemma \ref{optimizedcertaintyequivalent} (cf. Appendix \ref{AppendixAA}) the corresponding risk measure 
$\rho_{0}^{\Phi_{[\gamma]}}$ has the representation
\begin{equation}
\label{entropic}
\rho_{0}^{\Phi_{[\gamma]}}(X) = \inf_{x\in\R}\ex\left[\frac{\exp(\gamma x - \gamma X) - 1}{\gamma} - x\right] = 
\frac{\ln\big(\ex[\exp( - \gamma X)]\big)}{\gamma}
\end{equation}
for $X\in H^{\Phi_{[\gamma]}^{*}}.$ This is the well-known entropic risk measure. Optimal stopping with the entropic risk measures is easy to handle, since  it can be reduced to the standard optimal stopping problems via 
\begin{equation}
\label{entropicstop}
\sup_{\tau\in\cT}\rho_{0}^{\Phi_{[\gamma]}}(-Y_{\tau}) = 
\frac{1}{\gamma}\cdot\ln\big(\sup_{\tau\in\cT}\ex\Big[\exp(\gamma Y_{\tau})\big]\Big).
\end{equation}

\end{example}
\begin{example}
\label{polynomial}
Set $\Phi^{[p]} = x^{p}/p$ for  any $p\in ~]1,\infty[,$  then the set $\cQ_{\Phi^{[p]},0}$ contains all probability measures $\Q$ on $\cF$ with 
$\frac{d\Q}{d\pr}\in L^{p},$ and 
$$
\rho^{\Phi^{[p]}}(X) = \sup\limits_{\Q\in \cQ_{\Phi^{[p]},0}}\left(\ex_{\Q}[- X] - 
\frac{1}{p}~\ex\left[\left(\frac{d\Q}{d\pr}\right)^{p}\right]\right)\quad\mbox{for}~X\in L^{p/(p - 1)}.
$$  
\end{example}
\section{Main results}
\label{main_results}
Let $int(dom(\Phi))$ denote the topological interior of the effective domain of the mapping $\Phi: [0,\infty[\rightarrow [0,\infty].$ We shall assume $\Phi$ to be a lower semicontinuous convex function satisfying
\begin{equation}
\label{Annahmen Young function}
1\in int(dom(\Phi)),\quad\inf_{x\geq 0}\Phi(x) = 0,\quad\mbox{and},\quad\lim_{x\to\infty}\frac{\Phi(x)}{x} = \infty.  
\end{equation} 

\subsection{Primal representation}
The following theorem is our main result.
\begin{theorem}
\label{new_representation}
Let $(\Omega,\cF_{t},\left.\pr\right |_{\cF_{t}})$ be atomless with countably generated $\cF_{t}$ for every $t > 0.$ Furthermore, let  (\ref{Annahmen Young function}) be fulfilled, and  let $\sup_{t\in [0,T]}Y_{t}\in H^{\Phi^{*}},$ then 
\begin{eqnarray*}
\sup_{\tau\in\cT}\rho^{\Phi}_0 (-Y_\tau)
&=&
\sup\limits_{\tau\in\cT}\inf_{x\in\R}\ex[\Phi^{*}(x + Y_{\tau}) - x]\\ 
&=& 
\inf_{x\in\R}\sup\limits_{\tau\in\cT}\ex[\Phi^{*}(x + Y_{\tau}) - x] < \infty.
\end{eqnarray*}
\end{theorem}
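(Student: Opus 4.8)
The plan is to establish the chain of equalities in three moves: (i) pass from ordinary stopping times to randomized stopping times, (ii) apply a minimax theorem on the enlarged set $\cT^r$, and (iii) descend again from $\cT^r$ back to $\cT$. First I would invoke the divergence-risk-measure (optimized certainty equivalent) representation from Appendix \ref{AppendixAA}, namely $\rho^{\Phi}_0(-Y_\tau)=\inf_{x\in\R}\ex[\Phi^{*}(x+Y_\tau)-x]$, which is legitimate because $\sup_{t\in[0,T]}Y_t\in H^{\Phi^{*}}$ guarantees all the relevant integrals are finite; this immediately identifies the left-hand side with $\sup_{\tau\in\cT}\inf_{x\in\R}\ex[\Phi^{*}(x+Y_\tau)-x]$, so the first equality is essentially definitional. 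The substantive content is the interchange of $\sup_{\tau}$ and $\inf_x$, for which the weak inequality $\sup\inf\le\inf\sup$ is automatic and only the reverse needs work.

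For the reverse inequality I would introduce the set $\cT^r$ of randomized stopping times (as announced in Section \ref{optimalrandomizedstoppingtimes} of the paper) and use the fact, to be proved there, that $\sup_{\tau\in\cT}\rho^{\Phi}_0(-Y_\tau)=\sup_{\taur\in\cT^r}\rho^{\Phi}_0(-Y_{\taur})$ and likewise $\sup_{\tau\in\cT}\ex[\Phi^{*}(x+Y_\tau)-x]=\sup_{\taur\in\cT^r}\ex[\Phi^{*}(x+Y_{\taur})-x]$ for each fixed $x$. The point of randomization is convexity: $\cT^r$, suitably topologized (e.g. as a subset of the probability measures on $[0,T]\times\Omega$ with the appropriate adaptedness constraint), is convex and compact in a suitable weak topology, and the map $(x,\taur)\mapsto\ex[\Phi^{*}(x+Y_{\taur})-x]$ is affine (hence concave and convex) in $\taur$ and convex in $x$, with the requisite lower/upper semicontinuity coming from right-continuity of $Y$ and the monotone convexity of $\Phi^{*}$. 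One also restricts $x$ to a compact interval: since $\Phi^{*}$ is nondecreasing and $\Phi^{*}|_{[0,\infty[}$ is a Young function with $\Phi^{*}(y)/|y|\to\infty$ is not needed, but $\lim_{x\to\infty}\Phi(x)/x=\infty$ forces $\Phi^{*}$ finite everywhere and one checks that candidate minimizers $x$ stay in a bounded set depending only on the law of $\sup_t Y_t$ and on $\Phi^{*}$; outside that set the objective blows up. With convexity/compactness on the $\taur$-side and convexity/coercivity on the $x$-side in hand, a standard minimax theorem (Sion's theorem, or the Ky Fan inequality) yields
\begin{eqnarray*}
\sup_{\taur\in\cT^r}\inf_{x\in\R}\ex[\Phi^{*}(x+Y_{\taur})-x]
=\inf_{x\in\R}\sup_{\taur\in\cT^r}\ex[\Phi^{*}(x+Y_{\taur})-x].
\end{eqnarray*}
Combining with the two transfer identities between $\cT$ and $\cT^r$ closes the loop and gives both equalities; finiteness of the common value follows from $\sup_t Y_t\in H^{\Phi^{*}}\subseteq L^1$, since then $\sup_{\tau}\ex[\Phi^{*}(x+Y_\tau)-x]\le\ex[\Phi^{*}(x+\sup_t Y_t)]-x<\infty$ for any fixed admissible $x$.

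The main obstacle is the compactness and semicontinuity analysis on $\cT^r$: one must choose the right topology on randomized stopping times so that $\cT^r$ is compact while the functional $\taur\mapsto\ex[\Phi^{*}(x+Y_{\taur})-x]$ remains upper semicontinuous (this is where right-continuity of $Y$, rather than mere measurability, is used, via a Skorokhod-type or Baxter--Chacon argument), and simultaneously so that ordinary stopping times are dense enough in $\cT^r$ to make the transfer identities hold. The atomlessness and countable generation of $(\Omega,\cF_t,\pr|_{\cF_t})$ for $t>0$ enters precisely here, guaranteeing that the extra randomization in $\cT^r$ can be realized on $\Omega$ itself and hence does not actually enlarge the value — this is the technical heart deferred to Section \ref{optimalrandomizedstoppingtimes}. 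A secondary, more routine obstacle is verifying the integrability bookkeeping (that $\Phi^{*}(x+Y_{\taur})$ is integrable uniformly enough to justify passing limits through the expectation), which follows from $\sup_t Y_t\in H^{\Phi^{*}}$ together with monotonicity and convexity of $\Phi^{*}$ and can be handled by dominated convergence once the compact range of $x$ is fixed.
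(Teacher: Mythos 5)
Your outline follows the same architecture as the paper (optimized certainty equivalent representation for the first equality, enlargement to randomized stopping times to gain convexity, a minimax theorem on $\cT^{r}$, then descent back to $\cT$), but two points prevent it from being a proof. First, the way you set up the minimax is not viable under the hypotheses of Theorem \ref{new_representation}. You propose compactness of $\cT^{r}$ (Baxter--Chacon) together with upper semicontinuity of $\taur\mapsto\ex[\Phi^{*}(x+Y^{r}_{\taur})-x]$, asserting that right-continuity of $Y$ suffices for the latter. It does not: for a merely right-continuous reward (take $Y_{t}=\eins_{[0,1)}(t)$, deterministic times $1-1/n\to 1$) the stopped expectations need not converge to the value at the limit time, so upper semicontinuity in the Baxter--Chacon topology fails; the paper needs quasi-left-continuity for exactly this step, and only in Proposition \ref{solution} (existence of an optimal $\taur$), which is \emph{not} part of Theorem \ref{new_representation}. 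The paper's Proposition \ref{minimax} instead applies Fan's minimax theorem with compactness carried entirely by the $x$-variable (the compact interval \eqref{infcompactneu} coming from coercivity of $\Phi^{*}(x)-x$), and on the $\taur$-side only the ``concave-like'' property is required; that property is obtained with no topology on $\cT^{r}$ at all, by mixing the associated kernels $K_{\taur_{1}},K_{\taur_{2}}$ and constructing a randomized stopping time realizing the mixed kernel (Lemma \ref{stopped distribution} plus the explicit inverse construction). You do mention coercivity and affineness in the kernel parametrization, so the repair is available, but as written your ``main obstacle'' (compactness/semicontinuity on $\cT^{r}$) is a step that would fail under the stated assumptions rather than the step the argument actually needs.

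Second, and more importantly, the identity $\sup_{\tau\in\cT}\inf_{x}\ex[\Phi^{*}(x+Y_{\tau})-x]=\sup_{\taur\in\cT^{r}}\inf_{x}\ex[\Phi^{*}(x+Y^{r}_{\taur})-x]$ is simply cited by you as ``a fact to be proved'' elsewhere, whereas it is the technical heart of the paper's proof of this theorem and the only place where atomlessness and countable generation of $\cF_{t}$ are used. (The easy transfer you also state, $\sup_{\tau}\ex[\Phi^{*}(x+Y_{\tau})-x]=\sup_{\taur}\ex[\Phi^{*}(x+Y^{r}_{\taur})-x]$ for fixed $x$, follows by averaging over the randomization variable and needs none of this; the paper in fact only needs the trivial inequality chain on that side.) The paper's argument for the hard identity (Proposition \ref{derandomize2}) proceeds by dyadic discretization of $\taur$ (Lemma \ref{discretize}, Corollary \ref{discretizedstop}), then approximates the kernel weights $K_{\taur}(\cdot,]t_{(k-1)j},t_{kj}])$ by indicators of an $(\cF_{t_{kj}})$-adapted partition using thin-set/Lyapunov-type extreme-point results (Lemma \ref{missinglink} with Appendices \ref{AppendixA} and \ref{AppendixC}), and finally passes this approximation through the $\inf_{x}$ uniformly on the compact interval via an equicontinuity/Arzel\`a--Ascoli argument. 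Your proposal contains no idea for how this derandomization is to be carried out, so the central step of the theorem remains unproved in your attempt.
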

\begin{remark}
\label{optimalstoppingoptimizedcertainty}
The functional
$$
\rho_{\Phi^{*}}: H^{\Phi^{*}}\rightarrow\R,\,X\mapsto \inf_{x\in\R} \ex[\Phi^{*}(x + X) - x]
$$
is known as the optimized certainty equivalent w.r.t. $\Phi^{*}$ (cf. 
\cite{Ben-TalTeboulle1987},\cite{Ben-TalTeboulle2007}). Thus the relationship 
\begin{eqnarray}
\label{main_relation}
\sup\limits_{\tau\in\cT}\inf_{x\in\R}\ex[\Phi^{*}(x + Y_{\tau}) - x] 
= 
\inf_{x\in\R}\sup\limits_{\tau\in\cT}\ex[\Phi^{*}(x + Y_{\tau}) - x] 
\end{eqnarray}
 may also be viewed as a representation result for optimal stopping with optimized certainty equivalents. 
\end{remark}
Let us illustrate Theorem \ref{new_representation}  for the case $\Phi = \Phi_{\alpha}$ with some $\alpha\in ]0,1].$ The Young function \(\Phi_{\alpha}\) satisfies the conditions of Theorem \ref{new_representation} if and only if $\alpha < 1.$ The Fenchel-Legendre transform $\Phi_{\alpha}^{*}$ of $\Phi$ is given by 
$\Phi_{\alpha}^{*}(x) = x^{+}/\alpha$ and it fullfills the inequality $\Phi_{\alpha}^{*}(x + y) - x\geq 
\Phi_{\alpha}^{*}(y)$ for $x,y\geq 0.$ Then, as an immediate consequence of Theorem \ref{new_representation}, we obtain the following primal representation for the optimal stopping problem 
(\ref{AV@Rstopping}).
\begin{corollary}
\label{representation AV@R}
Let $(\Omega,\cF_{t},\pr |_{\cF_{t}})$ be atomless with countably generated $\cF_{t}$ for every $t > 0.$ If 
$\sup_{t\in [0,T]}Y_{t}\in L^{1},$ then it holds for 
$\alpha\in ]0,1[$
\begin{eqnarray*}
\sup_{\tau\in\cT}AV@R_{\alpha}(-Y_{\tau})
&=&
\inf_{x\in\R}\sup\limits_{\tau\in\cT}\ex\left[\frac{1}{\alpha}\, (x + Y_{\tau})^{+} - x\right]\\
&=& 
\inf_{x\leq 0}\sup\limits_{\tau\in\cT}\ex\left[\frac{1}{\alpha}\, (x + Y_{\tau})^{+} - x\right] < \infty.
\end{eqnarray*}
\end{corollary}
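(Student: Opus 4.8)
The plan is to derive Corollary~\ref{representation AV@R} as a direct specialization of Theorem~\ref{new_representation} to the choice $\Phi = \Phi_{\alpha}$, followed by a short monotonicity argument that shrinks the domain of the outer infimum. First I would check that $\Phi_{\alpha}$ meets the standing assumption \eqref{Annahmen Young function} precisely when $\alpha < 1$: since $dom(\Phi_{\alpha}) = [0,1/\alpha]$, we have $1\in int(dom(\Phi_{\alpha}))$ iff $\alpha < 1$; moreover $\inf_{x\geq 0}\Phi_{\alpha}(x) = 0$ because $\Phi_{\alpha}$ vanishes on its domain, and $\lim_{x\to\infty}\Phi_{\alpha}(x)/x = \infty$ because $\Phi_{\alpha}(x) = \infty$ for $x > 1/\alpha$. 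A one-line computation gives $\Phi_{\alpha}^{*}(y) = \sup_{0\leq x\leq 1/\alpha}xy = y^{+}/\alpha$, whose restriction to $[0,\infty[$ is the Young function $y\mapsto y/\alpha$; hence the associated Orlicz heart is $H^{\Phi_{\alpha}^{*}} = L^{1}$. Consequently the hypothesis ``$\sup_{t\in[0,T]}Y_{t}\in H^{\Phi^{*}}$'' of Theorem~\ref{new_representation} reduces here exactly to ``$\sup_{t\in[0,T]}Y_{t}\in L^{1}$'', which is what is assumed, and in particular $0\le Y_{\tau}\le\sup_{t}Y_{t}\in L^{1}$ for every $\tau\in\cT$.

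With these checks in place, Theorem~\ref{new_representation} applied to $\Phi_{\alpha}$, together with the substitution $\Phi_{\alpha}^{*}(x+Y_{\tau}) = (x+Y_{\tau})^{+}/\alpha$, yields
$$
\sup_{\tau\in\cT}\rho_{0}^{\Phi_{\alpha}}(-Y_{\tau}) \;=\; \inf_{x\in\R}\sup_{\tau\in\cT}\ex\!\left[\tfrac{1}{\alpha}(x+Y_{\tau})^{+} - x\right] \;<\; \infty .
$$
Since $Y_{\tau}\in L^{1}$, the identity $\rho_{0}^{\Phi_{\alpha}}(-Y_{\tau}) = AV@R_{\alpha}(-Y_{\tau})$ recorded in the Example turns the left-hand side into $\sup_{\tau\in\cT}AV@R_{\alpha}(-Y_{\tau})$, giving the first asserted equality, and the finiteness claim is inherited verbatim.

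It then remains to replace $\inf_{x\in\R}$ by $\inf_{x\leq 0}$. Writing $g(x) := \sup_{\tau\in\cT}\ex[\tfrac{1}{\alpha}(x+Y_{\tau})^{+} - x]$, I would observe that for $x\geq 0$ the nonnegativity of $Y$ gives $(x+Y_{\tau})^{+} = x+Y_{\tau}$, whence $g(x) = (\tfrac{1}{\alpha}-1)\,x + \tfrac{1}{\alpha}\sup_{\tau\in\cT}\ex[Y_{\tau}]$, which is nondecreasing on $[0,\infty[$ because $\alpha < 1$; hence $g(x)\geq g(0)$ for all $x\geq 0$, so $\inf_{x\in\R}g(x) = \inf_{x\leq 0}g(x)$, the second equality. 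I do not expect a genuine obstacle: the statement is essentially bookkeeping on top of Theorem~\ref{new_representation}, and the only points requiring a moment's care are the identification $H^{\Phi_{\alpha}^{*}} = L^{1}$ (so that the integrability hypotheses of the theorem and of the corollary coincide) and the use of $\alpha<1$ in the monotonicity step, which is also precisely why the case $\alpha = 1$ is excluded.
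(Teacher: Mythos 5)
Your proposal is correct and takes essentially the same route as the paper: check that $\Phi_{\alpha}$ satisfies \eqref{Annahmen Young function} exactly when $\alpha<1$, compute $\Phi_{\alpha}^{*}(y)=y^{+}/\alpha$ (so $H^{\Phi_{\alpha}^{*}}=L^{1}$ and $\rho_{0}^{\Phi_{\alpha}}=AV@R_{\alpha}$, the penalty vanishing on $\cQ_{\Phi_{\alpha},0}$), and apply Theorem~\ref{new_representation}. Your explicit linear formula for $g$ on $[0,\infty[$ is just a spelled-out version of the paper's observation that $\Phi_{\alpha}^{*}(x+y)-x\geq\Phi_{\alpha}^{*}(y)$ for $x,y\geq 0$, which yields the restriction of the infimum to $x\leq 0$ in the same way.
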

Let us now consider the case $\Phi = \Phi^{[p]}$ for some $p\in ]1,\infty[.$ This mapping meets all  requirements of Theorem \ref{new_representation}, and $\Phi^{[p]^{*}}(x) = \Phi^{\left[p/(p-1)\right]}(x^{+}).$  Then by Theorem \ref{new_representation}, we have the following primal representation of the corresponding optimal stopping problem. 
\begin{corollary}
\label{primalrepresentation}
Let $(\Omega,\cF_{t},\pr |_{\cF_{t}})$ be atomless with countably generated $\cF_{t}$ for every $t > 0.$ If 
$\sup_{t\in [0,T]}Y_{t}\in L^{p/(p-1)}$ for some 
$p\in ]1,\infty[,$ then 
\begin{eqnarray*}
\sup_{\tau\in\cT}\rho^{\Phi^{[p]}}(-Y_{\tau})
=
\inf_{x\in\R}\sup\limits_{\tau\in\cT}\ex\left[\frac{(p-1)\, 
\big((x + Y_{\tau})^{+}\big)^{p/(p-1)}}{p} - x\right]
< \infty.
\end{eqnarray*}

\end{corollary}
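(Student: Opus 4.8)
The plan is to check that $\Phi = \Phi^{[p]}$ satisfies all hypotheses of Theorem~\ref{new_representation} and then simply invoke it, substituting the explicit form of the Fenchel--Legendre transform. First I would verify the structural conditions \eqref{Annahmen Young function}: since $\mathrm{dom}(\Phi^{[p]}) = [0,\infty[$ we trivially have $1\in int(dom(\Phi^{[p]}))$; the infimum $\inf_{x\geq 0}x^{p}/p = 0$ is attained at $x = 0$; and $\Phi^{[p]}(x)/x = x^{p-1}/p\to\infty$ as $x\to\infty$ because $p > 1$. Lower semicontinuity and convexity of $x\mapsto x^{p}/p$ on $[0,\infty[$ are clear.

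Next I would compute $\Phi^{[p]*}$. For $y\leq 0$ the map $x\mapsto xy - x^{p}/p$ is nonincreasing on $[0,\infty[$, so its supremum is $0$; for $y > 0$ it is concave and differentiable with unique maximiser $x = y^{1/(p-1)}$, which gives $\Phi^{[p]*}(y) = \frac{p-1}{p}\,y^{p/(p-1)}$. Hence $\Phi^{[p]*}(x) = \frac{p-1}{p}\big((x^{+})\big)^{p/(p-1)}$, a finite, nondecreasing, convex function whose restriction to $[0,\infty[$ is, up to the positive constant $\frac{p-1}{p}$, the power Young function $x\mapsto x^{p/(p-1)}$. Consequently $H^{\Phi^{[p]*}} = L^{p/(p-1)}$, exactly as in the discussion of power-type Orlicz spaces following the definition of $H^{\Psi}$ in Section~\ref{setup}; in particular the integrability hypothesis $\sup_{t\in[0,T]}Y_{t}\in L^{p/(p-1)}$ of the corollary coincides with the requirement $\sup_{t\in[0,T]}Y_{t}\in H^{\Phi^{[p]*}}$ of Theorem~\ref{new_representation}.

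Finally I would apply Theorem~\ref{new_representation} with $\Phi = \Phi^{[p]}$ to the atomless, countably generated spaces $(\Omega,\cF_{t},\pr|_{\cF_{t}})$, obtaining
\[
\sup_{\tau\in\cT}\rho^{\Phi^{[p]}}_{0}(-Y_{\tau}) \;=\; \inf_{x\in\R}\sup_{\tau\in\cT}\ex\bigl[\Phi^{[p]*}(x + Y_{\tau}) - x\bigr] \;<\; \infty,
\]
where $\rho^{\Phi^{[p]}}_{0}$ is the divergence risk measure of Example~\ref{polynomial}. Substituting the formula $\Phi^{[p]*}(x+Y_{\tau}) = \frac{p-1}{p}\big((x+Y_{\tau})^{+}\big)^{p/(p-1)}$ derived above yields precisely the claimed expression. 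Since every step is a direct verification, I do not expect any real obstacle; the only point requiring (minor) care is the identification $H^{\Phi^{[p]*}} = L^{p/(p-1)}$, which is where the specific power exponent $p/(p-1)$ enters and must be tracked consistently.
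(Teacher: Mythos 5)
Your proposal is correct and follows exactly the paper's route: the corollary is obtained by checking that $\Phi^{[p]}$ satisfies \eqref{Annahmen Young function}, computing $\Phi^{[p]*}(x) = \Phi^{[p/(p-1)]}(x^{+}) = \tfrac{p-1}{p}\big((x^{+})\big)^{p/(p-1)}$ so that $H^{\Phi^{[p]*}} = L^{p/(p-1)}$, and then invoking Theorem~\ref{new_representation}. Your verification of the transform and of the Orlicz-heart identification is exactly the (routine) content the paper leaves implicit.
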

\subsection{The existence of  solutions}
A natural question is whether we can find a real number \(x^*\) and a \((\mathcal{F}_{t})\)-stopping time \(\tau^*\) which solve \eqref{main_relation}. We may give a fairly general answer within the context of discrete time optimal stopping problems. In order to be more precise,  let $\cTT$ denote all stopping times from $\cT$ with values in $\TTT,$ where $\TTT$ is any finite subset of $[0,T]$ containing $\{0,T\}.$  Consider now the stopping problem 
\begin{equation}
\label{discretestopping}
\mbox{maximize}\quad 
\rho^{\Phi}_{0}(-Y_{\tau})
\,\mbox{over}\,\tau\in\cTT.
\end{equation}
Turning over to the filtration $(\cF^{\TTT})_{t\in [0,T]}$ defined by 
$\cF^{\TTT}_{t} := \cF_{[t]}$ with $[t] := {\max}\{s\in\TTT\mid s\leq t\},$ we see that 
$(Y^{\TTT})_{t\in [0,T]}$ with $Y^{\TTT}_{t} := Y_{[t]}$ describes some $(\cF^{\TTT}_{t})-$adapted process. Hence we can apply Theorem \ref{new_representation} to get 
\begin{eqnarray}
\label{discreteminimax} \nonumber
\sup_{\tau\in\cT_{\TTT}}\rho^{\Phi}_{0}(-Y_{\tau}) 
&=& 
\sup_{\tau\in\cT_{\TTT}}\inf_{x\in\R}\ex[\Phi^{*}(x + Y_{\tau}) - x]\\ 
&=&
\inf_{x\in\R}\sup_{\tau\in\cT_{\TTT}}\ex[\Phi^{*}(x + Y_{\tau}) - x]
\end{eqnarray}
In this section we want to find conditions which guarantee the existence of a saddle point for the  optimization problems
\begin{equation}
\label{primalproblem}
\mbox{maximize}\quad \inf_{x\in\R}\ex[\Phi^{*}(x + Y_{\tau}) - x]\,\mbox{over}\,\tau\in\cT_{\TTT}\,
\end{equation}
and
\begin{equation}
\label{dualproblem}
\mbox{minimize}\quad \sup_{\tau\in\cT_{\TTT}} \ex[\Phi^{*}(x + Y_{\tau}) - x]\,\mbox{over}\, x\in\R.
\end{equation}
To this end, we shall borrow some arguments from the theory of Lyapunoff's theorem for infinite-dimensional vector measures. A central concept in this context is the notion of \textit{thin subsets} of integrable mappings. So let us first recall it. For a fixed probability space $\tOFP,$ a subset $M\subseteq L^{1}\tOFP$ is called thin if for any $A\in\overline{\cF}$ with $\overline{\pr}(A) > 0,$ there is some 
nonzero $g\in L^{\infty}\tOFP$ vanishing outside $A$ and satisfying $\ex[g\cdot Z] = 0$ for every $Z\in M$ (cf. \cite{KingmanRobertson1968}, or \cite{Anantharaman2012}). Best known examples are finite subsets of $L^{1}\tOFP$ or 
finite-dimensional linear subspaces of $L^{1}\tOFP$ if 
$\tOFP$ is atomless (cf. \cite{KingmanRobertson1968}, or \cite{Anantharaman2012}).

\begin{proposition}
\label{saddle-point}
Let the assumptions of Theorem \ref{new_representation}
be fulfilled, and let 
$\TTT := \{t_{0},\dots,t_{r+1}\}$ with $t_{0} = 0< t_{1}<\dots< t_{r+1} = T.$
Moreover, let $\left\{\ex\left[\eins_{A}\cdot\Phi^{*}(x + Y_{s})~|~\cF_{t}\right]\mid x\in\R\right\}$ be a thin subset of $L^{1}(\Omega,\cF_{t},\pr|_{\cF_{t}})$ for $s,t\in\TTT$ with $t\leq s$ and $A\in\cF_{T}.$
Then there are $\tau^{*}\in\cT_{\TTT}$ and $x^{*}\in\R$ satisfying
\begin{eqnarray*}
\inf_{x\in\R}\ex[\Phi^{*}(x + Y_{\tau^{*}}) - x] 
&=& 
\sup\limits_{\tau\in\cT_{\TTT}}\inf_{x\in\R}\ex[\Phi^{*}(x + Y_{\tau}) - x]\\ 
&=& 
\inf_{x\in\R}\sup_{\tau\in\cT_{\TTT}}\ex[\Phi^{*}(x + Y_{\tau}) - x]\\
&=& 
\sup_{\tau\in\cT_{\TTT}}\ex[\Phi^{*}(x^{*} + Y_{\tau}) - x^{*}].
\end{eqnarray*}
In particular, it holds
\begin{eqnarray*}
\ex[\Phi^{*}(x^{*} + Y_{\tau}) - x^{*}] 
\leq 
\ex[\Phi^{*}(x^{*} + Y_{\tau^{*}}) - x^{*}] 
\leq 
\ex[\Phi^{*}(x + Y_{\tau^{*}}) - x]
\end{eqnarray*}
for any $x\in\R$ and $\tau\in\cT_{\TTT}.$
\end{proposition}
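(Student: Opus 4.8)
The plan is to convexify by passing to \emph{randomized} stopping times, to extract a saddle point $(x^{*},\pi^{*})$ with $\pi^{*}$ randomized from the minimax identity already recorded in \eqref{discreteminimax}, and finally to \emph{purify} $\pi^{*}$ into an ordinary stopping time $\tau^{*}\in\cT_{\TTT}$; the thin‑set hypothesis is precisely what enables this last step.

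\textbf{Step 1 (Convexification).} Writing $\TTT=\{t_{0},\dots,t_{r+1}\}$, I encode a randomized stopping time with values in $\TTT$ by its vector of stopping weights $\pi=(\pi_{0},\dots,\pi_{r+1})$, where $\pi_{k}\in L^{\infty}(\Omega,\cF_{t_{k}},\pr|_{\cF_{t_{k}}})$, $\pi_{k}\geq 0$ and $\sum_{k}\pi_{k}=1$ (an ordinary $\tau\in\cT_{\TTT}$ being the choice $\pi_{k}=\eins_{\{\tau=t_{k}\}}$). Denoting by $Y_{\pi}$ the reward of the randomized stopping time attached to $\pi$, the optimized‑certainty‑equivalent form of $\rho^{\Phi}_{0}$ (Lemma~\ref{optimizedcertaintyequivalent}), together with disintegration of the expectation over the external randomization, gives
\[
\rho^{\Phi}_{0}(-Y_{\pi})=\inf_{x\in\R}G(x,\pi),\qquad G(x,\pi):=\sum_{k=0}^{r+1}\ex\bigl[\pi_{k}\,\Phi^{*}(x+Y_{t_{k}})\bigr]-x .
\]
Since $\sup_{t}Y_{t}\in H^{\Phi^{*}}$ and $\Phi^{*}$ is convex and nondecreasing, each $\Phi^{*}(x+Y_{t_{k}})\in L^{1}$, so for fixed $x$ the map $\pi\mapsto G(x,\pi)$ is affine and weak${}^{*}$-continuous on the set $\cR_{\TTT}$ of admissible weight vectors; as the $\cF_{t_{k}}$ are countably generated, $\cR_{\TTT}$ is a convex, weak${}^{*}$-compact, metrizable subset of $\prod_{k}L^{\infty}(\cF_{t_{k}})$. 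For fixed $\pi$, $x\mapsto G(x,\pi)$ is finite, convex, and coercive uniformly over $\pi\in\cR_{\TTT}$: choosing $a,b\in dom(\Phi)$ with $0\leq a<1<b$ gives $G(x,\pi)\geq(a-1)x-\Phi(a)$ and $G(x,\pi)\geq(b-1)x-\Phi(b)$ for all $\pi$. Hence there is $R>0$ with $\inf_{x\in\R}G(x,\pi)=\inf_{|x|\leq R}G(x,\pi)$ for every $\pi\in\cR_{\TTT}$, in particular for every $\tau\in\cT_{\TTT}$.

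\textbf{Step 2 (Saddle point and reduction).} The function $\psi(\pi):=\inf_{x}G(x,\pi)=\inf_{|x|\leq R}G(x,\pi)$ is concave and weak${}^{*}$-upper semicontinuous on the compact $\cR_{\TTT}$, hence has a maximizer $\pi^{*}$; and $\phi(x):=\sup_{\tau\in\cT_{\TTT}}\bigl(\ex[\Phi^{*}(x+Y_{\tau})]-x\bigr)$ is finite, convex and, by Step 1, coercive on $\R$, hence continuous with a minimizer $x^{*}$. Because a randomized stopping time disintegrates along its external randomization into ordinary $(\cF_{t})$-stopping times, $\sup_{\pi\in\cR_{\TTT}}G(x,\pi)=\phi(x)$ for every $x$; combining this with \eqref{discreteminimax} yields
\[
\inf_{x}\sup_{\pi\in\cR_{\TTT}}G(x,\pi)=\inf_{x}\phi(x)=\inf_{x}\sup_{\tau\in\cT_{\TTT}}\bigl(\ex[\Phi^{*}(x+Y_{\tau})]-x\bigr)=\sup_{\tau\in\cT_{\TTT}}\inf_{x}\bigl(\ex[\Phi^{*}(x+Y_{\tau})]-x\bigr)\leq\sup_{\pi\in\cR_{\TTT}}\inf_{x}G(x,\pi),
\]
and since $\sup_{\pi}\inf_{x}\leq\inf_{x}\sup_{\pi}$ always, all these quantities coincide; their common finite value $v:=G(x^{*},\pi^{*})$ equals $\sup_{\tau\in\cT_{\TTT}}\rho^{\Phi}_{0}(-Y_{\tau})$ and $(x^{*},\pi^{*})$ is a saddle point of $G$ on $\R\times\cR_{\TTT}$. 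It therefore suffices to exhibit $\tau^{*}\in\cT_{\TTT}$ with $\ex[\Phi^{*}(x+Y_{\tau^{*}})]=\ex[\Phi^{*}(x+Y_{\pi^{*}})]$ for every $x\in\R$: then $G(x,\tau^{*})=G(x,\pi^{*})$ for all $x$, hence $\inf_{x}G(x,\tau^{*})=v=G(x^{*},\tau^{*})$, so $x^{*}$ attains this infimum, and together with $\sup_{\tau\in\cT_{\TTT}}G(x^{*},\tau)=\phi(x^{*})=v$ this gives exactly the four‑fold equality and the two sandwich inequalities asserted in the statement.

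\textbf{Step 3 (Purification); the main difficulty.} Producing such a $\tau^{*}$ from $\pi^{*}$ is the heart of the matter. I would round the weights of $\pi^{*}$ to those of a genuine stopping time by a backward recursion along $\TTT$: once crisp, $\cF_{t_{j}}$-measurable continuation/stopping decisions have been installed at $t_{r+1},\dots,t_{k+1}$, the residual randomization at $t_{k}$ enters every $G(x,\cdot)$ only through a single $\cF_{t_{k}}$-measurable $[0,1]$-valued factor whose coefficient, after the future rewards are conditioned on $\cF_{t_{k}}$, is a linear combination of the variables $\ex[\eins_{A}\Phi^{*}(x+Y_{t_{j}})\mid\cF_{t_{k}}]$ with $j>k$ and $A\in\cF_{T}$. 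By the generalized Lyapunov theorem (in the spirit of \cite{KingmanRobertson1968}), thinness of $\{\ex[\eins_{A}\Phi^{*}(x+Y_{t_{j}})\mid\cF_{t_{k}}]:x\in\R\}$ in $L^{1}(\Omega,\cF_{t_{k}},\pr|_{\cF_{t_{k}}})$ — the standing hypothesis — then furnishes a nonzero admissible perturbation turning that factor into a $\{0,1\}$-valued one while leaving every $\ex[\Phi^{*}(x+Y_{\cdot})]$, $x\in\R$, unchanged, and atomlessness of $(\Omega,\cF_{t_{k}},\pr|_{\cF_{t_{k}}})$ keeps us inside $\cR_{\TTT}$; iterating down the grid delivers $\tau^{*}$. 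The obstacle is precisely this step: one has to run the recursion so that at each stage only thinness (and atomlessness) of the relevant conditional family is used, the rounded weights stay nonnegative, sum to one and respect the nesting that characterizes a stopping time, and all of the (\emph{a priori} uncountably many, $x$-indexed) integral constraints on the future rewards are preserved at once — in effect a conditional, grid‑indexed version of Lyapunov's convexity theorem. The remaining ingredients — compactness, the minimax identity imported from Theorem~\ref{new_representation}, and coercivity in $x$ — are comparatively soft.
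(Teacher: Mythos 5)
Your Steps 1--2 are sound and in fact mirror the paper's own reductions: given \eqref{discreteminimax}, the existence of $x^{*}$ follows from convexity plus the coercivity bound $\ex[\Phi^{*}(x+Y_{\tau})-x]\geq\Phi^{*}(x)-x$ (this is exactly how the paper solves \eqref{dualproblem}), and your relaxation to adapted weight vectors $\pi$ with weak$^{*}$ compactness of the weight simplex is the paper's $\cP_{T}^{\infty}$ together with Lemma \ref{BanachAlaoglu}/\ref{KompaktheitPartI}. The problem is Step 3. Your reduction is correct that it suffices to produce $\tau^{*}\in\cT_{\TTT}$ whose weights reproduce \emph{all} the $x$-indexed integrals $\ex[\pi^{*}_{k}\,\Phi^{*}(x+Y_{t_{k}})]$, but you do not prove that such a $\tau^{*}$ exists -- you describe a backward ``rounding'' recursion, list the constraints it must respect (nonnegativity, summation to one, adaptedness, preservation of uncountably many moment conditions simultaneously), and then state that this is the main obstacle. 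That obstacle \emph{is} the proposition: it is the only place the thin-set hypothesis enters, and without it nothing beyond the already-known minimax identity \eqref{discreteminimax} and the existence of $x^{*}$ has been established. In particular, thinness does not directly hand you a perturbation that turns a given $[0,1]$-valued weight into an indicator while freezing all constraints; the local perturbations it provides produce \emph{pairs} of distinct admissible points averaging to the given one, which is an argument against extremality, not a rounding procedure.

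The paper closes exactly this gap in Appendix \ref{AppendixC}: one considers the set of all adapted weight vectors satisfying the same moment constraints as the given one (your $\pi^{*}$), observes it is convex and weak$^{*}$ compact, extracts an extreme point via Krein--Milman, and then proves by backward induction along the grid (Proposition \ref{KreinMilmanallgemein}) that every extreme point is an indicator partition; the induction step is where conditional thinness of $\{\ex[\eins_{A}\Phi^{*}(x+Y_{t_{j}})\,|\,\cF_{t_{k}}]\mid x\in\R\}$, stability of thinness under finite unions, atomlessness, and a normalization result ($|g|=1$ on its support) are all needed, and Lemma \ref{thin} is required beforehand to pass from $\Phi^{*}(x+Y_{s})$ to $\Phi^{*}(x+Y_{s})-x$. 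This purification (Proposition \ref{KreinMilman}, used in the paper through Lemma \ref{Schlussel} to identify $K=\Lambda(\cP_{T}^{\infty})$) is roughly two pages of nontrivial argument, none of which appears in your proposal. So the approach is the right one and essentially the paper's, but as written the proof has a genuine gap at its decisive step; to complete it you would have to prove a conditional, grid-indexed Lyapunov/purification theorem of the type in Proposition \ref{KreinMilmanallgemein}, or cite an equivalent result.
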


The proof of Proposition \ref{saddle-point} can be found in Section \ref{proof of saddle-point}.

\begin{example} 
Let the mapping $\Phi_{e}^{*}:\R\rightarrow\R$ be defined by $\Phi_e^{*}(y) := \sum_{k = 1}^{n}\alpha_{k}(\exp(\beta_{k}y) - 1)$ for some $\alpha_{1},\dots,\alpha_{n}, \beta_{1},\dots,\beta_{n} > 0.$ Obviously, \(\Phi_e^{*}\) is convex, nondecreasing, and satisfies $\lim_{y\to\infty}(\Phi_e^{*}(y) - y) = \infty$ as well as $\Phi_e^{*}(0) = 0.$  Hence $\Phi_e(x) := \sup_{y\in\R}(xy - \Phi_e^{*}(y))$ defines a lower semicontinuous convex function  which satisfies \eqref{Annahmen Young function}, and whose Fenchel-Legendre transform coincides with $\Phi_e^{*},$ since $\Phi_e^{*}$ is continuous.
Moreover,  for any $s,t\in\TTT$ such that $t\leq s,$ and $A\in\cF_{T},$ the set $\left\{\ex\left[\eins_{A}\cdot\Phi_e^{*}(x + Y_{s})~|~\cF_{t}\right]\mid x\in\R\right\}$ is contained in the finite-dimensional linear subspace of $L^{1}(\Omega,\cF_{t},\pr|_{\cF_{t}})$ spanned by the sequence of r. v. 
$$
\left\{\ex\left[\eins_{A}\cdot\exp(\beta_k Y_s)~|~\cF_{t}\right] ~\big|~ \, k=0,\dots,n\right\},
$$
where by definition \({{\beta}_0} := 0.\)
 As a result, 
$\left\{\ex\left[\eins_{A}\cdot\Phi_e^{*}(x + Y_{s})~|~\cF_{t}\right]\mid x\in\R\right\}$ is a thin subset of $L^{1}(\Omega,\cF_{t},\pr|_{\cF_{t}})$ in the case of atomless $(\Omega,\cF_{t},\pr|_{\cF_{t}})$ (cf. e.g. \cite[Proposition 2.6]{Anantharaman2012}). \end{example}
\subsection{Additive dual representation}
\label{add_dual}
 In this section we generalize the celebrated additive dual representation for optimal stopping problems (see \cite{Rogers2002}) to the case of optimal stopping under uncertainty. The result in \cite{Rogers2002} is formulated in terms of martingales $M$ with $M_{0} = 0$ satisfying  $\sup_{t\in [0, T]}|M_{t}|\in L^{1}.$ The set of all such adapted martingales will be denoted by $\cM_{0}.$ 
\begin{theorem}
\label{dualrepresentation}
Let 
$V_{t} := \esssup_{\tau\in\cT, \tau\geq t}\ex\left[Z_{\tau}\,|\,\cF_{t}\right]$ be the Snell enve\-lope of an integrable right-continuous stochastic process $(Z_{t})_{t\in [0,T]}$ adapted to $\OFFP.$ 
If  $\sup_{t\in [0, T]}|Z_{t}|\in L^{p}$ for some $p > 1,$ then 
$$
V_{0} = \sup_{\tau\in \cT}\ex[Z_{\tau}] = \inf_{M\in \cM_{0}}\ex\left[\sup_{t\in [0,T]}(Z_{t} - M_{t})\right],
$$
where the infimum is attained for $M = M^{*}$ with $M^{*}$ being the martingale part of the Doob-Meyer decomposition of $(V_{t})_{t\in [0,T]}.$ Even more it holds 
$$
\sup_{\tau\in\cT}\ex[Z_{\tau}] = \sup_{t\in [0,T]}({Z}_{t} - M^{*}_{t})\quad\pr-\mbox{a.s.}.
$$
\end{theorem}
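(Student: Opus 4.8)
The plan is to follow the classical route of \cite{Rogers2002}: prove the ``easy'' (weak duality) inequality by optional stopping, and then close the gap by exhibiting the martingale part of the Doob--Meyer decomposition of the Snell envelope as a minimizer. First observe that, since $\cF_{0}$ contains only sets of probability $0$ or $1$, $\ex[Z_{\tau}\,|\,\cF_{0}]$ is $\pr$-a.s. the constant $\ex[Z_{\tau}]$, so $V_{0} = \esssup_{\tau\in\cT}\ex[Z_{\tau}\,|\,\cF_{0}] = \sup_{\tau\in\cT}\ex[Z_{\tau}]$, a finite number because $Z_{\tau}\le\sup_{t\in[0,T]}|Z_{t}|\in L^{1}$; this is the first asserted equality. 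For weak duality, fix $M\in\cM_{0}$ and $\tau\in\cT$: since $\sup_{t\in[0,T]}|M_{t}|\in L^{1}$ and $\tau$ is bounded, optional sampling gives $\ex[M_{\tau}] = M_{0} = 0$, hence $\ex[Z_{\tau}] = \ex[Z_{\tau} - M_{\tau}]\le\ex[\sup_{t\in[0,T]}(Z_{t} - M_{t})]$; taking the supremum over $\tau\in\cT$ and then the infimum over $M\in\cM_{0}$ yields $\sup_{\tau\in\cT}\ex[Z_{\tau}]\le\inf_{M\in\cM_{0}}\ex[\sup_{t\in[0,T]}(Z_{t} - M_{t})]$.

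Next I would construct the candidate minimizer. From $|V_{t}|\le\ex[\sup_{s\in[0,T]}|Z_{s}|\,|\,\cF_{t}]$ together with Doob's $L^{p}$-maximal inequality -- this is exactly where the hypothesis $p>1$ is used -- one gets $\sup_{t\in[0,T]}|V_{t}|\in L^{p}\subseteq L^{1}$. In particular $(V_{t})$ is a right-continuous supermartingale of class (D), so it admits a Doob--Meyer decomposition $V_{t} = V_{0} + M_{t}^{*} - A_{t}$ with $M^{*}$ a uniformly integrable martingale, $M_{0}^{*} = 0$, and $A$ predictable, nondecreasing, with $A_{0} = 0$ and $\ex[A_{T}]<\infty$. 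Since $M_{t}^{*} = V_{t} - V_{0} + A_{t}$ and $0\le A_{t}\le A_{T}$, we have $\sup_{t\in[0,T]}|M_{t}^{*}|\le\sup_{t\in[0,T]}|V_{t}| + |V_{0}| + A_{T}\in L^{1}$, so $M^{*}\in\cM_{0}$.

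Finally I would run the sandwiching argument. Since $V_{t}\ge\ex[Z_{t}\,|\,\cF_{t}] = Z_{t}$ and $A_{t}\ge 0$, we obtain the pathwise bound $Z_{t} - M_{t}^{*}\le V_{t} - M_{t}^{*} = V_{0} - A_{t}\le V_{0}$ for every $t\in[0,T]$, hence $\sup_{t\in[0,T]}(Z_{t} - M_{t}^{*})\le V_{0}$ $\pr$-a.s. and $\ex[\sup_{t\in[0,T]}(Z_{t} - M_{t}^{*})]\le V_{0}$. Combining this with the first equality and weak duality gives
$$
\ex\bigl[\sup_{t\in[0,T]}(Z_{t} - M_{t}^{*})\bigr]\le V_{0} = \sup_{\tau\in\cT}\ex[Z_{\tau}]\le\inf_{M\in\cM_{0}}\ex\bigl[\sup_{t\in[0,T]}(Z_{t} - M_{t})\bigr]\le\ex\bigl[\sup_{t\in[0,T]}(Z_{t} - M_{t}^{*})\bigr],
$$
so every inequality is an equality: the infimum is attained at $M^{*}$, and $\ex[\sup_{t\in[0,T]}(Z_{t} - M_{t}^{*})] = V_{0}$. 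As $\sup_{t\in[0,T]}(Z_{t} - M_{t}^{*})\le V_{0}$ $\pr$-a.s. and has expectation $V_{0}$, it must equal the constant $V_{0} = \sup_{\tau\in\cT}\ex[Z_{\tau}]$ $\pr$-a.s., which is the last claim. The only point that requires more than routine Snell-envelope theory is the integrability of the maximal function $\sup_{t\in[0,T]}|V_{t}|$ (equivalently, of $\sup_{t\in[0,T]}|M_{t}^{*}|$), and this -- rather than mere $L^{1}$-integrability of $\sup_{t}|Z_{t}|$ -- is what forces the assumption $p>1$; I expect this to be the main technical obstacle.
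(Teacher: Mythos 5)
Your proof is correct and follows essentially the same route as the argument the paper relies on (the proof of Theorem 2.1 in Rogers, \cite{Rogers2002}, which the paper cites instead of reproving): weak duality by optional sampling, the Doob--Meyer decomposition of the Snell envelope to produce $M^{*}$, and the pathwise bound $Z_{t}-M^{*}_{t}\leq V_{0}-A_{t}\leq V_{0}$ to close the sandwich and obtain the a.s.\ identity. Your closing observation that $p>1$ is needed only to guarantee the integrability behind the Doob--Meyer decomposition is exactly the point made in Remark \ref{relaxed}.
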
 
\begin{remark}
\label{relaxed}
By inspection of the proof of Theorem 2.1 in \cite{Rogers2002}, one can see that the assumption  $\sup_{t\in [0, T]}\ex[Z_{t}]\in L^{p}$ for some $p > 1$ is only used to guarantee the existence of the Doob-Meyer decomposition of the Snell envelope $(V_{t})_{t\in [0,T]}.$ Therefore this assumption may be relaxed, if  we consider discrete time optimal stopping problems on the set $\TTT$ for some finite $\TTT\subseteq [0,T]$ containing $\{0,T\}.$ In this case, the Doob-Meyer decomposition always exists if $(Z_{t})_{t\in\TTT}$ is integrable, and   Theorem \ref{dualrepresentation} holds  with \(\cT\) replaced by \(\cT_\TTT\) and \([0,T]\) replaced by \(\TTT\) (see also 
\cite[Theorem 5.5]{KraetschmerSchoenmakers2010}).
\end{remark}
Theorem~\ref{new_representation} allows us to extend the additive dual representation to the case of stopping problems 
\eqref{stoppproblem}. We shall use the following notation. For a fixed $\Phi$ and $x\in\R$ we shall denote by $V^{\Phi,x} = (V^{\Phi,x}_{t})_{t\in [0,T]}$ the Snell-envelope  w.r.t. to $\big(\Phi^{*}(x + Y_{t}) - x\big)_{t\in [0,T]}$ defined via
\[
V^{\Phi,x}_{t} := \esssup_{\tau\in\cT, \tau\geq t}\ex\left[(\Phi^{*}(x +Y_{\tau}) - x)\,|\,\cF_{t}\right].
\] 
The application of Theorem \ref{new_representation} together with Theorem \ref{dualrepresentation} provides us with the following additive dual representation of the stopping problem \eqref{stoppproblem}. 
\begin{theorem}
\label{dualrepresentation_utility}
Under assumptions on $\Phi$ and $(\cF_{t})$ of Theorem~\ref{new_representation} and under the condition  $\sup_{t\in [0,T]}|\Phi^{*}(x + Y_{t})|\in L^{p}$ for some $p > 1$ and any $x\in \R,$ the following dual representation holds 
\begin{eqnarray*}
\sup_{\tau\in\cT}\rho^{\Phi}_0 (-Y_\tau) 
\hspace*{-0.25cm}&=&
\inf_{x\in \R}\inf_{M\in \cM_{0}}\ex\big[\sup_{t\in [0,T]}\big(\Phi^{*}(x + Y_{t}) - x - M_{t}\big)\big]\\
&=&
\inf_{x\in\R}\ex\big[\sup_{t\in [0,T]}\big(\Phi^{*}(x + Y_{t}) - x - M^{*,\Phi,x}_{t}\big)\big]\\
&=& 
\essinf_{x\in\R}\sup_{t\in [0,T]}\big(\Phi^{*}(x + Y_{t}) - x - M^{*,\Phi,x}_{t}\big)
\quad \pr-\mbox{a.s.}.
\end{eqnarray*}
Here 
$M^{*,\Phi,x}$ stands for the martingale part of the Doob-Meyer decomposition of the Snell-envelope $V^{\Phi,x}.$
%
\end{theorem}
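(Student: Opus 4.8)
The plan is to combine the primal minimax identity of Theorem~\ref{new_representation} with Rogers' additive dual of Theorem~\ref{dualrepresentation}, the latter being applied separately to each of the reward processes $Z^{x}:=(Z^{x}_{t})_{t\in[0,T]}$, $Z^{x}_{t}:=\Phi^{*}(x+Y_{t})-x$, indexed by the real parameter $x$, and then to pass the infimum over $x\in\R$ through the identities thus obtained.

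First I would fix $x\in\R$ and verify that $Z^{x}$ satisfies the hypotheses of Theorem~\ref{dualrepresentation}. Since $\Phi^{*}$ is a finite convex function on $\R$, it is continuous and nondecreasing; hence $Z^{x}$ is $(\cF_{t})$-adapted and right-continuous (right-continuity of $Y$ being preserved under composition with the continuous $\Phi^{*}$), and $\sup_{t\in[0,T]}|Z^{x}_{t}|\le\sup_{t\in[0,T]}|\Phi^{*}(x+Y_{t})|+|x|\in L^{p}$ for some $p>1$ by assumption. Observing that the Snell envelope of $Z^{x}$ is precisely $V^{\Phi,x}$, Theorem~\ref{dualrepresentation} yields, for every $x\in\R$,
\begin{equation*}
\begin{aligned}
\sup_{\tau\in\cT}\ex[Z^{x}_{\tau}]
&=\inf_{M\in\cM_{0}}\ex\Big[\sup_{t\in[0,T]}\big(Z^{x}_{t}-M_{t}\big)\Big]
=\ex\Big[\sup_{t\in[0,T]}\big(Z^{x}_{t}-M^{*,\Phi,x}_{t}\big)\Big]\\
&=\sup_{t\in[0,T]}\big(Z^{x}_{t}-M^{*,\Phi,x}_{t}\big)\qquad\pr\text{-a.s.},
\end{aligned}
\end{equation*}
with $M^{*,\Phi,x}\in\cM_{0}$ the martingale part of the Doob--Meyer decomposition of $V^{\Phi,x}$; the key observation is that the right-hand side of the last line is $\pr$-a.s.\ equal to the \emph{deterministic} number $c(x):=\sup_{\tau\in\cT}\ex[Z^{x}_{\tau}]$.

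Next I would take the infimum over $x$. By Theorem~\ref{new_representation} one has $\sup_{\tau\in\cT}\rho^{\Phi}_{0}(-Y_{\tau})=\inf_{x\in\R}\sup_{\tau\in\cT}\ex[Z^{x}_{\tau}]$, so applying $\inf_{x\in\R}$ to the first three members of the display above, and using $\inf_{x}\inf_{M\in\cM_{0}}=\inf_{(x,M)}$, already gives the first two equalities of the theorem. For the last, $\pr$-a.s.\ equality I would argue separately: for each fixed $x$ we have $\sup_{t}\big(Z^{x}_{t}-M^{*,\Phi,x}_{t}\big)=c(x)$ $\pr$-a.s., so on the one hand the constant $\inf_{x\in\R}c(x)$ is an a.s.\ lower bound of the family $\{\sup_{t}(Z^{x}_{t}-M^{*,\Phi,x}_{t}):x\in\R\}$ and therefore $\inf_{x}c(x)\le\essinf_{x\in\R}\sup_{t}\big(Z^{x}_{t}-M^{*,\Phi,x}_{t}\big)$ $\pr$-a.s.; on the other hand, picking $x_{n}$ with $c(x_{n})<\inf_{x}c(x)+1/n$ yields $\essinf_{x\in\R}\sup_{t}\big(Z^{x}_{t}-M^{*,\Phi,x}_{t}\big)\le c(x_{n})<\inf_{x}c(x)+1/n$ $\pr$-a.s.\ for all $n$, whence the reverse inequality upon letting $n\to\infty$. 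Hence $\essinf_{x\in\R}\sup_{t}\big(Z^{x}_{t}-M^{*,\Phi,x}_{t}\big)=\inf_{x}c(x)=\sup_{\tau\in\cT}\rho^{\Phi}_{0}(-Y_{\tau})$ $\pr$-a.s.

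The one delicate point is precisely this last step: the $\essinf$ runs over the uncountable index set $\R$, and it may be replaced by the countable infimum $\inf_{n}c(x_{n})$ only because every member of the family is $\pr$-a.s.\ a constant, namely $c(x)$. Apart from this, the argument is just the verification of the hypotheses of Theorems~\ref{new_representation} and~\ref{dualrepresentation} followed by a routine interchange of infima.
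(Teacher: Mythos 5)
Your argument is correct and coincides with the paper's own (implicit) proof: the paper obtains Theorem~\ref{dualrepresentation_utility} precisely by applying Rogers' dual (Theorem~\ref{dualrepresentation}) to each reward process $\Phi^{*}(x+Y_{t})-x$ and combining it with the minimax identity of Theorem~\ref{new_representation}, exactly as you do. Your explicit verification of the hypotheses and the careful treatment of the $\essinf$ over the uncountable parameter set (reducing it to a countable minimizing sequence because each member is $\pr$-a.s.\ constant) only spells out details the paper leaves to the reader.
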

\begin{remark}
\label{relaxed2}
Under the assumptions of Theorem~\ref{new_representation}, we have that $\sup_{t\in [0,T]}Y_{t}\in H^{\Phi^{*}}.$ Furthermore, $\Phi^{*}$ is convex and nondecreasing with $\Phi^{*}(0) = 0$ (see Lemma \ref{optimizedcertaintyequivalent} in Appendix \ref{AppendixAA}) so that  for any $y < 0$
$$
|\Phi^{*}(y)| = \int_{y}^{0}\Phi^{*'}(z)\,dz\leq \Phi^{*'}(0) |y|\leq 
\int_{0}^{|y|}\Phi^{*'}(z)\,dz = \Phi^{*}(|y|),
$$
where $\Phi^{*'}$ denotes the right-sided derivative of $\Phi^{*}.$ Using the monotonicity of $\Phi^{*}$ again, we  conclude that
$$
|\Phi^{*}(x + Y_{t})|\leq \Phi^{*}(|x| + Y_{t})\leq \Phi^{*}(|x| + \sup_{t\in [0,T]}Y_{t})\in L^{1}
$$
for all $x\in\R$ and $t\in [0,T].$ Hence the application of Theorem~\ref{dualrepresentation_utility} to \eqref{discretestopping} is already possible under the assumptions of Theorem \ref{new_representation}.
\end{remark}
The dual representation for the optimal stopping problem under Average Value at Risk reads as follows.
\begin{corollary}
\label{cor_dual_avar}
Let the assumptions on $\Phi$ and $(\cF_{t})$ be as in Theorem~\ref{new_representation}. If  
$\sup_{t\in [0,T]}Y_{t}\in L^{p}$ for some $p > 1,$  then it holds 
$\pr$-a.s.
\begin{eqnarray}
\nonumber
\sup_{\tau\in\cT}AV@R_{\alpha}(-Y_{\tau})
&=&
\nonumber
\inf_{x\in\R}\inf_{M\in \cM_{0}}\ex\left[\sup_{t\in [0,T]}\left(\frac{1}{\alpha}\, (x + Y_{t})^{+} - x - M_{t}\right)\right]\\
&=&
\nonumber
\inf_{x\leq 0}\ex\left[\sup_{t\in [0,T]}\left(\frac{1}{\alpha}\, (x + Y_{t})^{+} - x - M^{*,\alpha,x}_{t}\right)\right]\\
&=&
\label{dual_avr}
\essinf_{x\leq 0}\sup_{t\in [0,T]}\left(\frac{1}{\alpha}\, (x + Y_{t})^{+} - x - M^{*,\alpha,x}_{t}\right)\quad \pr-\mbox{a.s.}.
\end{eqnarray}
Here 
$M^{*,\alpha,x}$ denotes the martingale part of the Doob-Meyer decomposition of 
the Snell-envelope $V^{\Phi_{\alpha},x}.$
%
\end{corollary}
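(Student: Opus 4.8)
The plan is to obtain the corollary from Theorem~\ref{dualrepresentation_utility} applied to $\Phi=\Phi_{\alpha}$, and then, in its second and third lines, to shrink the range of the outer (essential) infimum from $\R$ down to $]-\infty,0]$.

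First I would record the elementary facts about $\Phi_{\alpha}$ that are needed. As already noted after Theorem~\ref{new_representation}, the Young function $\Phi_{\alpha}$ satisfies \eqref{Annahmen Young function} if and only if $\alpha\in\,]0,1[$ (which is implicit here since $\Phi=\Phi_{\alpha}$ is assumed to fulfil the hypotheses of Theorem~\ref{new_representation}), its Fenchel--Legendre transform is $\Phi_{\alpha}^{*}(y)=y^{+}/\alpha$, and $H^{\Phi_{\alpha}^{*}}=L^{1}$; in particular $\sup_{t\in[0,T]}Y_{t}\in L^{p}\subseteq L^{1}=H^{\Phi_{\alpha}^{*}}$, so all the assumptions of Theorem~\ref{new_representation} are in force. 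To be allowed to invoke Theorem~\ref{dualrepresentation_utility} I still have to check its integrability hypothesis; but since $Y_{t}\geq 0$,
\[
\sup_{t\in[0,T]}\bigl|\Phi_{\alpha}^{*}(x+Y_{t})\bigr|=\frac{1}{\alpha}\sup_{t\in[0,T]}(x+Y_{t})^{+}\leq\frac{1}{\alpha}\Bigl(|x|+\sup_{t\in[0,T]}Y_{t}\Bigr)\in L^{p}\qquad\text{for every }x\in\R .
\]
Specialising Theorem~\ref{dualrepresentation_utility} to $\Phi=\Phi_{\alpha}$, and abbreviating $M^{*,\Phi_{\alpha},x}$ to $M^{*,\alpha,x}$, therefore immediately yields all three asserted identities, except that in the second and third line the outer $\inf$ and $\essinf$ initially run over $x\in\R$ rather than over $x\leq 0$.

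It remains to show that in those two lines the (essential) infimum over $x\in\R$ agrees with the one over $x\leq 0$. The key point is that for $x\geq 0$ one has, for every $t$ and on every path, $(x+Y_{t})^{+}=x+Y_{t}$, since $Y_{t}\geq 0$; hence
\[
\Phi_{\alpha}^{*}(x+Y_{t})-x=\frac{1}{\alpha}(x+Y_{t})^{+}-x=\frac{Y_{t}}{\alpha}+\frac{1-\alpha}{\alpha}\,x ,
\]
so that for $x\geq 0$ the reward process $\bigl(\Phi_{\alpha}^{*}(x+Y_{t})-x\bigr)_{t\in[0,T]}$ differs from $\bigl(Y_{t}/\alpha\bigr)_{t\in[0,T]}$ only by the deterministic constant $\tfrac{1-\alpha}{\alpha}x$. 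Consequently its Snell envelope equals $V^{\Phi_{\alpha},x}=\tfrac{1-\alpha}{\alpha}x+V^{\Phi_{\alpha},0}$, and the $M_{0}=0$ martingale part of its Doob--Meyer decomposition satisfies $M^{*,\alpha,x}=M^{*,\alpha,0}$; thus, for $x\geq 0$, $\pr$-a.s.
\[
\sup_{t\in[0,T]}\Bigl(\tfrac{1}{\alpha}(x+Y_{t})^{+}-x-M^{*,\alpha,x}_{t}\Bigr)=\frac{1-\alpha}{\alpha}\,x+\sup_{t\in[0,T]}\Bigl(\tfrac{Y_{t}}{\alpha}-M^{*,\alpha,0}_{t}\Bigr).
\]
Since $\alpha<1$, the right-hand side is nondecreasing in $x$ on $[0,\infty[$, and so is its expectation; in particular, for both the random variables and their expectations the value at any $x>0$ is (a.s.) not smaller than the value at $x=0$, which is admissible in both ranges. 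Hence adjoining the indices $x>0$ changes neither the infimum of the expectations in the second line nor the essential infimum in the third, and the asserted restriction to $x\leq 0$ follows.

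I do not expect a genuine obstacle here: the corollary is essentially a transcription of Theorems~\ref{new_representation} and~\ref{dualrepresentation_utility} into AV@R language. The only point that needs a short argument is the identity $M^{*,\alpha,x}=M^{*,\alpha,0}$ for $x\geq 0$, on which the restriction to $x\leq 0$ in the last two lines rests; this is immediate from the fact that shifting a reward process by a deterministic constant shifts its Snell envelope by the same constant while leaving the $M_{0}=0$ martingale part of the Doob--Meyer decomposition unaffected.
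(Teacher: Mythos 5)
Your proposal is correct and follows essentially the same route as the paper: the corollary is obtained by specialising Theorem~\ref{dualrepresentation_utility} to $\Phi=\Phi_{\alpha}$ (after checking $H^{\Phi_{\alpha}^{*}}=L^{1}$ and the $L^{p}$-integrability of $\sup_{t}\Phi_{\alpha}^{*}(x+Y_{t})$, exactly as you do), and the restriction of the infima to $x\leq 0$ rests on the same monotonicity observation the paper invokes for Corollary~\ref{representation AV@R}, namely $\Phi_{\alpha}^{*}(x+y)-x\geq\Phi_{\alpha}^{*}(y)$ for $x,y\geq 0$. Your explicit remark that $M^{*,\alpha,x}=M^{*,\alpha,0}$ for $x\geq 0$ (constant shifts leave the Doob--Meyer martingale part unchanged) is a correct and clean way to transfer this monotonicity to the second and third lines, where one could equivalently use that, by Theorem~\ref{dualrepresentation}, those expressions coincide with $V^{\Phi_{\alpha},x}_{0}$.
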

\begin{remark}
\label{relax3}
Let us consider a discrete time optimal stopping problem $\sup_{\tau\in\cT_{\TTT}}AV@R_{\alpha}(-Y_{\tau})$ for some finite $\TTT\subseteq [0,T]$ with $\{0,T\}\in\TTT.$ In view of Remark \ref{relaxed2}, the assumptions of Theorem \ref{new_representation} are already sufficient to obtain the dual representation \eqref{dual_avr} with $\cT$ replaced by $\cT_{\TTT}$ and \([0,T]\) replaced by \(\TTT.\) 
\end{remark}
\section{Discussion}
\label{discussion}
In \cite{KraetschmerSchoenmakers2010}   the optimal stopping problems of the type  
\begin{equation}
\label{DMU}
\sup_{\tau\in \cT}\,\cU_{0}(Y_{\tau})
\end{equation}
were studied, where  for any \(t\geq 0,\) the functional $\cU_{t}$  maps a linear subspace 
$\cX$ of the space $L^{0}$ into $\cX\cap L^{0}(\Omega,\cF_{t}, \pr |_{\cF_{t}})$ and satisfies $\cU_{t}(X)\leq\cU_{t}(Y)$ for $X\leq Y\quad\pr-$a.s.. In fact there is a one-to-one correspondence between conditional convex risk measures $(\rho_{t})_{t\in [0,T]}$  and dynamic utility functionals $\cU := (\cU_{t})_{t\in [0,T]}$  satisfying the following two properties:
\begin{itemize}
\item {\bf conditional translation invariance:}\\ 
$\cU_{t}(X + Y) = \cU_{t}(X) + Y$ for 
$Y\in\cX\cap L^{0}(\Omega,\cF_{t},\pr|_{\cF_{t}})$ and $X\in\cX,$
\item {\bf conditional concavity:}\\
$\cU_{t}(\lambda X + (1 - \lambda)Y) \geq \lambda\,\cU_{t}(X) + (1 - \lambda)\cU_{t}(Y)$ for 
$X, Y\in\cX$ and $\lambda\in\cX\cap L^{0}(\Omega,\cF_{t},\pr|_{\cF_{t}})$ with 
$0\leq \lambda\leq 1.$ 
\end{itemize}
More precisely, any conditionally translation invariant and conditionally concave dynamic utility functional $(\cU_{t})_{t\in [0,T]}$ defines a family $(\rho^{\cU}_{t})_{t\in [0,T]}$ of conditional convex risk measures via $\rho^{\cU}_{t}(X) = -\cU_{t}(X)$ and vice versa. The results of \cite{KraetschmerSchoenmakers2010} essentially rely  on the following additional assumptions  
\begin{itemize}
\item {\bf regularity:}\\ 
$\cU_{t}(\eins_{A}X) = \eins_{A}\cdot\cU_{t}(X)$ for $A\in\cF_{t}$ and $X\in\cX,$
\item {\bf recursiveness:}\\
$\cU_{s}\circ\cU_{t} = \cU_{s}$ for $s\leq t.$
\end{itemize}
{Recursiveness is often also referred to as time consistency.} Obviously, the dynamic utility functional $(\cU_{t}^{\Phi_{\alpha}})_{t\in [0,T]},$ defined by $\cU_{t}^{\phi_{\alpha}}(X) := AV@R_{\alpha}(-X|\cF_{t}),$ satisfies the regularity and the conditional translation invariance, but it fails to be recursive (cf. \cite[Example, 11.13]{FoellmerSchied2010}). Even worse, according to Theorem 1.10 in \cite{KupperSchachermayer2009} for any $\alpha < 1,$ there is in general no regular conditionally translation invariant and recursive dynamic utility functional $\cU$ such that $\cU_{0} = \cU^{\Phi_{\alpha}}_{0}.$ This means that  we can not in general reduce the stopping problem (\ref{AV@Rstopping}) to 
the stopping problem (\ref{DMU}) with a regular, conditionally translation invariant and recursive dynamic utility functional \(\cU\). Note that this conclusion can be drawn from Theorem 1.10 of \cite{KupperSchachermayer2009}, because $AV@R_{\alpha}$ is law-invariant, i.e., 
$AV@R_{\alpha}(X) = AV@R_{\alpha}(Y)$ for identically distributed $X$ and $Y$, and satisfies the properties $AV@R_{\alpha}(0) = 0$ as well as 
$AV@R_{\alpha}(-\varepsilon\eins_{A}) > 0$ for any $\varepsilon>0$ and $A\in\cF$ with 
$\pr(A) > 0.$  
\par
The stopping problem (\ref{AV@Rstopping}) may also be viewed as a special case of the following stopping problem:
\begin{eqnarray}
\label{osf_distortion}
\sup_{\tau \in \mathcal{T}}\int _0^\infty w(\pr(Y_\tau > x))\, dx,
\end{eqnarray}
where \(w:[0,1]\mapsto[0,1]\) is a so-called distortion function, i.e., $w$ is nondecreasing and satisfies $w(0) = 0,$ $w(1) = 1.$ Indeed, if for  
$\alpha\in ]0,1[$ the distortion function $w_{\alpha}$ is defined by  
$w_{\alpha}(u) :=  \frac{u}{{\alpha}}\wedge 1,$ then the stopping problems (\ref{AV@Rstopping}) and (\ref{osf_distortion}) coincide.  Recalling Theorem 1.10 of \cite{KupperSchachermayer2009} again, we see that  the stopping problem 
(\ref{osf_distortion}) is not in general  representable in the form (\ref{DMU}) with some regular, conditionally translation invariant and recursive dynamic utility functional. 
The stopping problem (\ref{osf_distortion}) was recently considered by \cite{XuZhou2013}. However, the analysis in \cite{XuZhou2013} relies on some additional assumptions. First of all, the authors allow for all finite stopping times w.r.t. to some filtered probability space $(\Omega,\cF,(\cF_{t})_{t\geq 0},\pr)$ instead of restricting to those which are bounded by a fixed number. Secondly, they assume a special structure for the process $(Y_{t})_{t\geq 0},$ namely  it is supposed that $Y_{t} = u(S_{t})$ for  an absolutely continuous nonnegative function  $u$ on $[0,\infty[$ and for  a one-dimensional geometric Brownian motion $(S_{t})_{t\geq 0}$.  Thirdly, the authors focus on strictly increasing absolutely continuous distortion functions $w$ so that their analysis does not cover the case of Average Value at Risk. More precisely, 
in \cite{XuZhou2013} the optimal stopping problems of the form 
\begin{eqnarray}
\label{os_distortion}
\sup_{\tau \in \mathcal{T}^{\infty}} \, D_{w}(u(S_\tau))=\sup_{\tau \in \mathcal{T}^{\infty}}\int _0^\infty w(\pr(u(S_\tau)>x))\, dx,
\end{eqnarray}
are studied, where $\mathcal{T}^{\infty}$ denotes the set of all finite stopping times. A crucial step  in the authors' argumentation is the reformulation of the optimal stopping problem \eqref{os_distortion} as 
\begin{eqnarray*}
\sup_{\tau \in \mathcal{T}^{\infty}} \, D_{w}(u(S_\tau)) 
&=& 
\sup_{F\in {\cal D}} \int_{0}^{\infty}w(1 - F(x))u'(x)\,dx \\
&=& 
\sup_{F\in {\cal D}} \int_{0}^{1} u(F^{\leftarrow}(u)) w'(1 - u)\, du,
\end{eqnarray*}
where $u'$ and $w'$ are derivatives of $u$ and $w,$ respectively, and ${\cal D}$ denotes the set of all distribution functions $F$ with a nonnegative support such that 
$\int_{0}^{\infty}(1 - F(x))\,dx\leq S_{0}.$ The main idea of the approach in \cite{XuZhou2013} is that any such distribution function may be described as the distribution function of $S_{\tau}$ for some finite stopping time $\tau\in\cT^{\infty}$ and this makes the application of the Skorokhod embedding technique possible. Hence, the results essentially rely on the special structure of the stochastic process $(Y_{t})_{t\geq 0}$ and   seem to be not extendable to stochastic processes of the form $Y_{t} = U(X_{t}),$ where $(X_{t})_{t\geq 0}$ is a multivariate Markov process. Moreover,  it remains unclear whether the analysis of  \cite{XuZhou2013} can be carried over to the case of bounded stopping times, as  the Skorokhod embedding can not be applied to  the general sets of stopping times $\cT$ (see e.g. \cite{AnkirchnerStrack2011}). 

\section{Numerical example}
\label{MC}

In this section we illustrate how our results can be used to price Bermudan-type  options in uncertain environment. 
Specifically, we consider the model with $d$ identically distributed assets, where each underlying has dividend yield $\delta $.
The  dynamic of assets is given by
\begin{equation}
\label{Xeq}
\frac{dX_{t}^{i}}{X_{t}^{i}}=(r-\delta )dt+\sigma dW_{t}^{i},\quad i=1,\ldots,d,
\end{equation}%
where $W_{t}^{i},\,i=1,\ldots,d$, are independent one-dimensional Brownian
motions and $r,\delta ,\sigma $ are constants. At any time $t\in
\{t_{0},\ldots,t_{J}\}$ the holder of the option may exercise it and
receive the payoff
\begin{equation*}
Y_t=G(X_{t})=e^{-rt}(\max (X_{t}^{1},...,X_{t}^{d})-K)^{+}.
\end{equation*}%
If we are uncertain about our modelling assumption  and  if the Average Value at Risk is used to measure the risk related to this uncertainty, then the risk-adjusted
price of the option is given by
\begin{eqnarray}
\nonumber
\sup_{\tau\in\cT[t_0,\ldots, t_J]}AV@R_{\alpha}(-Y_{\tau})&=& {\sup_{\tau\in\cT[t_0,\ldots, t_J]}\sup\limits_{\Q\in\cQ_{\Phi_{\alpha},0}}\ex_{\Q}[-Y_{\tau}]}
\\
\label{opt_stop_ex}
&=& \inf_{x\leq 0}\sup\limits_{\tau\in\cT[t_0,\ldots, t_J]}\ex\left[\frac{1}{\alpha}\, (x + Y_{\tau})^{+} - x\right],
\end{eqnarray}
{
where $\cQ_{\Phi_{\alpha},t}$ consists of all probability measures \(Q\) on $\cF$ with 
\begin{eqnarray}
\label{class_avr_ex}
\left.\frac{d\Q}{d\pr}\right|_{\mathcal{F}_t}\leq 1/\alpha, \quad \pr|_{\cF_t}-\text{ a.s.}.
\end{eqnarray}
If we restrict our attention to the class of  generalised Black Scholes models of the type
\begin{eqnarray*}
dX^i_t=X^i_t\,(\alpha^i_t\,dt+\sigma^i_t\, dW^i_t),\quad i=1,\ldots, d
\end{eqnarray*}
with adapted processes \((\alpha^i_{t}),\) \((\sigma^i_{t})\) and independent Brownian motions \(W^1_{t},\ldots, W^d_{t}, \)  then
\begin{eqnarray*}
\left.\frac{d\Q}{d\pr}\right|_{\mathcal{F}_t}=\exp\left(-\sum_{i=1}^d \int_0^t \theta^i_s\, dW^i_s-\frac{1}{2}\sum_{i=1}^d\int_{0}^t(\theta^i_s)^2\,ds\right)
\end{eqnarray*}
with \(\theta^i_t=(\alpha^i_t-r+\delta)/\sigma^i_t\) and the condition \eqref{class_avr_ex} transforms to
\begin{eqnarray*}
\exp\left(-\sum_{i=1}^d \int_0^t \theta^i_s\, dW^i_s-\frac{1}{2}\sum_{i=1}^d\int_{0}^t(\theta^i_s)^2\,ds\right)\leq 1/\alpha, \quad \pr|_{\cF_t}-\text{ a.s.}.
\end{eqnarray*}
}
Due to Corollary~\ref{representation AV@R},  one can use the standard methods based on dynamic programming principle to 
solve \eqref{opt_stop_ex} and \(\cT[t_0,\ldots, t_J]\) stands for a set of stopping times with values in \(\{t_0,\ldots, t_J\}.\) 
Indeed, for any fixed \(x,\) the optimal value of the stopping problem 
\begin{eqnarray*}
V=\sup\limits_{\tau\in\cT[t_0,\ldots, t_J]}\ex\left[\frac{1}{\alpha}\, (x + Y_{\tau})^{+} - x\right]
\end{eqnarray*}
can be, for example, numerically approximated via the well known regression methods like Longstaff-Schwartz method. In this way one can get a (suboptimal) stopping rule 
\[
\widehat\tau_x:=\inf\Bigl\{0\leq j\leq J: (x + Y_{t_j})^{+}/\alpha - x\geq \widehat C_j(X_{t_j},x)\Bigr\},
\] 
where \(\widehat C_1,\ldots,\widehat C_J\) are continuation values estimates. Then
\begin{eqnarray}
\label{low_bound}
V^l_N:=\inf_{x\leq 0}\left\{\frac{1}{N}\sum_{n=1}^N \Bigl(x + Y^{(n)}_{t_{\widehat\tau^{(n)}_x}}\Bigr)^{+}/\alpha - x\right\}
\end{eqnarray}
is a low-biased estimate for \(V\).  Note that the infimum in \eqref{low_bound} can be easily computed using a simple search algorithm. An upper-biased estimate can be constructed using the well known Andersen-Broadie dual approach (see \cite{AndersenBroadie2004}). For any fixed \(x\leq 0\) this approach would give us a discrete time martingale \((M_j^x)_{j=0,\ldots,J}\) which in turn can be used to build an upper-biased estimate via the representation (\ref{dual_avr}):
\begin{eqnarray}
\label{upper_bound}
V^u_N:=\inf_{x\leq 0}\left\{\sum_{n=1}^N\left[\sup_{j=0,\ldots,J}\left(\frac{1}{\alpha}\, \Bigl(x + Y^{(n)}_{t_j}\Bigr)^{+} - x - M^{x,(n)}_{j}\right)\right]\right\}.
\end{eqnarray}
Note that \eqref{upper_bound} remains upper biased even if we replace the infimum of the objective function in \eqref{upper_bound} by its value at a fixed point \(x.\)
In Table~\ref{max_call_2d} we present the bounds \(V^l_N\) and \(V^u_N\) together with their standard deviations for different values of \(\alpha.\) As to implementation details, we used \(12\) basis functions for regression (see \cite{AndersenBroadie2004}) and \(10^4\) training paths to compute 
\(\widehat C_1,\ldots,\widehat C_J.\) In the dual approach of Andersen and Broadie, \(10^3\) inner simulations were done to approximate \(M^x.\) In both cases we simulated \(N=10^4\) testing paths  to compute the final estimates.
\par
{
For comparison let us consider a problem of pricing the above Bermudan option under entropic risk measure \eqref{entropic}. Due to \eqref{entropicstop}, we need to solve the optimal stopping problem 
\begin{equation*}
V^\gamma=\sup_{\tau\in\cT[t_0,\ldots, t_J]}\ex\big[\exp(\gamma Y_{\tau})\big].
\end{equation*}
The latter problem can be solved via the standard dynamic programming combined with regression as described
above. In Table~\ref{max_call_2d_entrop} the upper and lower MC bounds for \(\log(V)/\gamma\) are presented for different values of the parameter \(\gamma.\) Unfortunately for larger values of \(\gamma,\) the corresponding MC estimates become unstable due to the presence of exponent in \eqref{entropicstop}.
In Figure~\ref{fig:bounds} the lower bounds for AV@R and the entropic risk measure are shown graphically. 
As can be seen the quality of upper and lower bounds are quite similar. However due to above mentioned instability, AV@R should be preferred under higher uncertainty. 
}

\begin{figure}[h]
\centering
\includegraphics[width=0.9\textwidth]{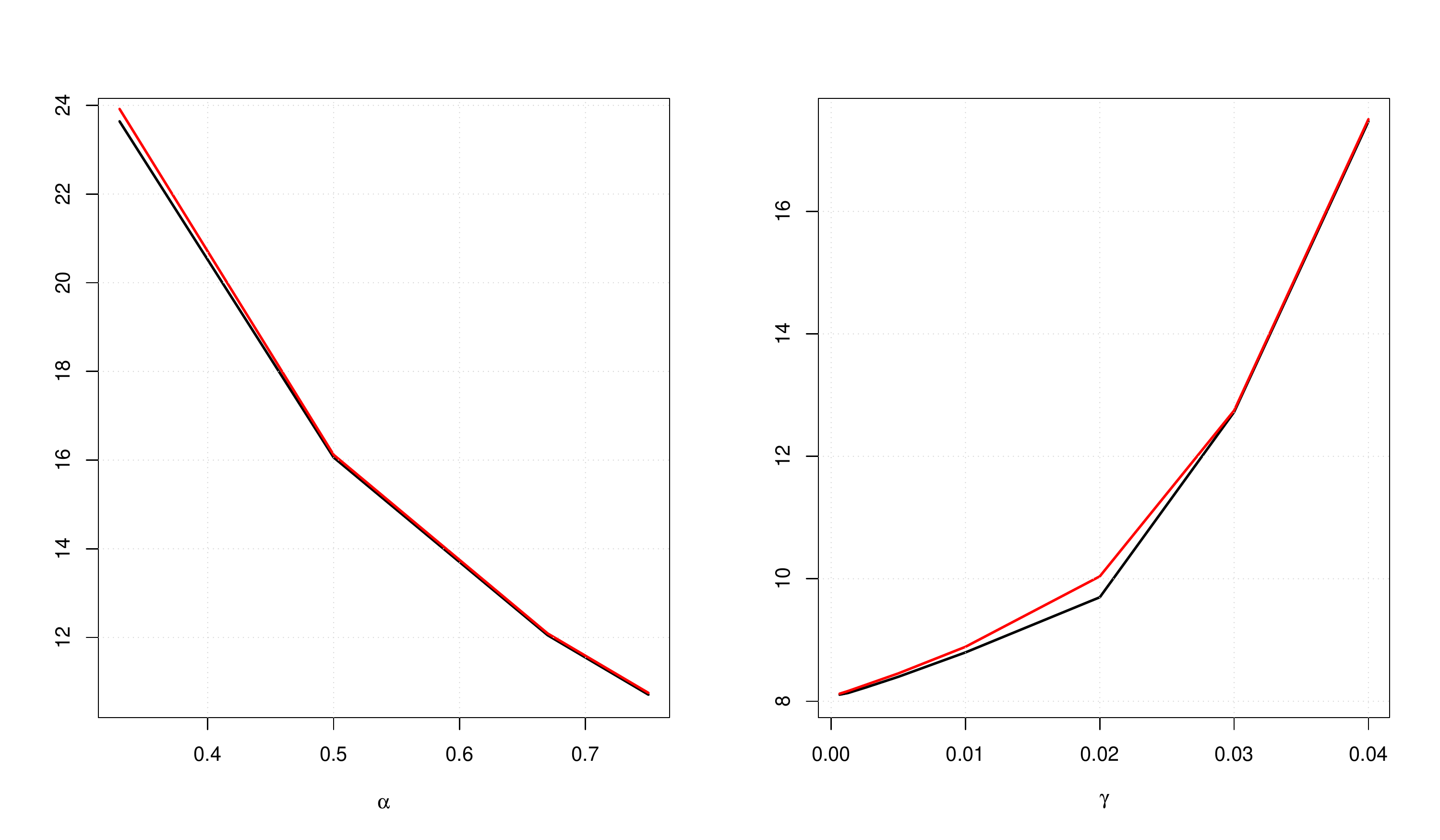}\caption{Lower and upper bounds for Bermudan option prices under AV@R (left) and entropic risk (right) measures.}%
\label{fig:bounds}%
\end{figure}
\begin{table}
\label{max_call_2d}
\caption{Bounds (with standard deviations) for \(2\)-dimensional Bermudan max-call  with parameters $K=100,\,r=0.05$, $\sigma=0.2,$ $\delta=0.1$ under AV@R at level \(\alpha\)}
\begin{tabular}{|c|c|c|}
\hline
\(\alpha\)  & Lower bound \(V^l_N\) & Upper bound \(V^u_N\)\\
\hline
\hline
0.33 & 23.64(0.026) & 23.92(0.108) \\
0.50 & 16.06(0.019)  & 16.12(0.045) \\
0.67 & 12.05(0.014)  & 12.09(0.034) \\
0.75 & 10.71(0.013)   & 10.75(0.030) \\
\hline
\end{tabular}
\end{table}

\begin{table}[h]
\label{max_call_2d_entrop}
\caption{Bounds (with standard deviations) for \(2\)-dimensional Bermudan max-call  with parameters $K=100,\,r=0.05$, $\sigma=0.2,$ $\delta=0.1$ under entropic risk measure with parameter \(\gamma\) }  
\begin{center}
 \begin{tabular}{|c|c|c|}
\hline 
$\gamma$ & Lower bound & Upper bound \\
\hline
\hline
0.0025& 8.218979 (0.011) &8.262082 (0.029) \\ 
0.005&   8.399141 (0.015) & 8.454748 (0.032) \\
0.01& 8.797425 (0.017) &  8.888961 (0.041)\\ 
0.02& 9.698094 (0.020) & 10.03958 (0.058)\\
0.03& 12.72327 (0.020) &12.74784 (0.072)\\
0.04& 17.47090 (0.022) & 17.50481 (0.095)\\ 
\hline
\end{tabular}
 \end{center}
\end{table}

\section{The optimal stopping problem with randomized stopping times}
\label{optimalrandomizedstoppingtimes}

In order to {prove} Theorem \ref{new_representation} we shall proceed as follows. 
First, by Lemma \ref{optimizedcertaintyequivalent} (cf. Appendix \ref{AppendixAA}), we obtain immediately 
\begin{equation}
\label{Hilfsstoppproblem}
\sup\limits_{\tau\in\cT}\sup_{\Q\in\cQ_{\Phi,0}}\left(\ex_{\Q}[Y_{\tau}]- \ex\left[\Phi\left(\frac{d\pr}{d\Q}\right)\right]\right)
= 
\sup\limits_{\tau\in\cT}\inf_{x\in\R}\ex[\Phi^{*}(x + Y_{\tau}) - x].
\end{equation}
The proof of Theorem \ref{new_representation} would be completed, if we can show that
\begin{equation}
\label{minimaxrelationship}
\sup\limits_{\tau\in\cT}\inf_{x\in\R}\ex[\Phi^{*}(x + Y_{\tau}) - x] 
= 
\inf_{x\in\R}\sup\limits_{\tau\in\cT}\ex[\Phi^{*}(x + Y_{\tau}) - x].
\end{equation}
Using Fubini's theorem, we obtain for any $\tau\in\cT$ and every $x\in\R$
$$
\ex[\Phi^{*}((x + Y_{\tau})^{+}) - x] = \int_{x^{-}}^{\infty}\Phi^{*'}(x + z)[1 - F_{Y_{\tau}}(z)]~dz + \Phi^{*}(x^{+}) - x,
$$
where $F_{Y_{\tau}}$ stands for the distribution function of $Y_{\tau}$ and $\Phi^{*'}$ denotes the right-sided derivative of the convex function $\Phi^{*}.$ In the same way we may also find 
$$
\ex[\Phi^{*}(-(x + Y_{\tau})^{-})] = -\int_{0}^{x^{-}}\Phi^{*'}(x + z)F_{Y_{\tau}}(z)~dz.
$$
Hence the property $\Phi^{*}(x) = \Phi^{*}(x^{+}) + \Phi^{*}(- x^{-})$ for $x\in\R$ yields
\begin{equation}
\label{DarstellungZielfunktion} 
\ex[\Phi^{*}(x + Y_{\tau}) - x]
=
\int_{0}^{\infty}\Phi^{*'}(x + z)[1 - F_{Y_{\tau}}(z)]~dz + \Phi^{*}(x) - x
\end{equation}
for $\tau\in\cT$ and $x\in\R.$ Since the set $\mathbb{F} := \{F_{Y_{\tau}}\mid \tau\in\cT\}$ of distribution functions 
$F_{Y_{\tau}}$ of $Y_{\tau}$ is not, in general, a convex subset of the set of distribution functions on $\R,$ we can not apply the known minimax results. The idea is to first establish (\ref{minimaxrelationship}) for the larger class of randomized stopping times, and then to show  that the optimal value coincides with the optimal value $\sup_{\tau\in\cT}\inf_{x\in\R}\ex[\Phi^{*}(x + Y_{\tau}) - x].$ 

Let us recall the notion of randomized stopping times. By definition 
(see e.g. \cite{EdgarMilletSucheston1981}), a randomized stopping time w.r.t. $\OFFP$ is 
a mapping $\tau^{r}:\Omega\times [0,1]\rightarrow [0,\infty]$ which is nondecreasing and left-continuous in the second component such that $\tau^{r}(\cdot,u)$ is a stopping time w.r.t. $(\cF_{t})_{t\in [0,T]}$ for any $u\in [0,1].$ Notice that any randomized stopping time $\taur$ is also an ordinary stopping time w.r.t. the enlarged filtered probability space 
$\big(\Omega\times [0,1],\cF\otimes \cB([0,1]),
\big(\cF_{t}\otimes \cB([0,1])\big)_{t\in [0,T]},\pr\otimes \pr^{U}\big).$ Here 
$\pr^{U}$ denotes the uniform distribution on $[0,1],$ defined on $\cB([0,1]),$ the usual Borel $\sigma-$algebra on $[0,1].$ We shall call a randomized stopping time $\taur$ to be degenerated if $\taur(\omega,\cdot)$ is constant for every $\omega\in\Omega.$ There is an obvious one-to-one correspondence between stopping times and degenerated randomized stopping times.
\par
Consider the stochastic process $(Y_{t}^{r})_{t\geq 0},$ defined by 
$$
Y_{t}^{r} :\Omega\times [0,1]\rightarrow\R,\, (\omega,u)\mapsto Y_{t}(\omega).
$$
which is adapted w.r.t. the enlarged filtered probability space. 
Denoting by $\cTr$ the set of all randomized stopping times $\tau^{r}\leq T,$ we shall study the following new stopping problem
\begin{equation}
\label{randomstop}
\mbox{maximize } \inf_{x\in\R}\ex[\Phi^{*}(x + Y^{r}_{\taur}) - x]~\mbox{over}~\tau^r\in\cTr.
\end{equation}
Obviously, $\inf_{x\in\R}\ex[\Phi^{*}(x + Y_{\tau}) - x] = \inf_{x\in\R}\ex[\Phi^{*}(x + Y^{r}_{\taur}) - x]$ is valid for every stopping time $\tau\in\cT,$ where $\taur\in\cTr$ is the corresponding degenerated randomized stopping time such that \(\taur(\omega,u)=\tau(\omega),\) \(u\in [0,1].\) Thus, in general the optimal value of the stopping problem (\ref{randomstop}) is at least as large as the one of the original stopping problem 
(\ref{stoppproblem}) due to (\ref{Hilfsstoppproblem}). One reason to consider the new stopping problem \eqref{randomstop} is that it  has a solution under fairly general conditions.
\begin{proposition}
\label{solution}
{Let $(Y_{t})_{t\in [0,T]}$ be quasi-left-continuous, defined to mean $Y_{\tau_{n}}\to Y_{\tau}$ $\pr-$a.s. whenever $(\tau_{n})_{n\in\N}$ is a sequence in $\cT$ satisfying $\tau_{n}\nearrow\tau$ for some $\tau\in\cT$.} If $\cF_{T}$ is countably generated, then there exists a randomized stopping time $\taur_*\in\cTr$ such that
$$
\inf_{x\in\R}\ex\left[\Phi^{*}(x + Y^{r}_{\taur_*}) - x\right] 
= 
\sup_{\taur\in \cTr}\inf_{x\in\R}\ex\left[\Phi^{*}(x + Y^{r}_{\taur}) - x\right].
$$
\end{proposition}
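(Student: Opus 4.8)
The plan is to realise the supremum in \eqref{randomstop} as the maximum of an upper semicontinuous functional over a sequentially compact set. Since $Y_{t}$ is $\cF_{t}\subseteq\cF_{T}$-measurable for every $t\le T$ and elements of $\cT,\cTr$ are governed by $(\cF_{t})_{t\in[0,T]}$, there is no loss of generality in assuming $\cF=\cF_{T}$, so that $(\Omega,\cF,\pr)$ is countably generated and $L^{1}(\Omega,\cF,\pr)$ is separable. To each $\taur\in\cTr$ I associate the random probability measure $\mu^{\taur}$ on $[0,T]$ obtained by pushing the uniform law on $[0,1]$ forward under $\taur(\omega,\cdot)$; because $\taur(\cdot,u)\in\cT$ for each $u$, the map $\omega\mapsto\mu^{\taur}_{\omega}([0,t])$ is $\cF_{t}$-measurable, i.e.\ $\mu^{\taur}$ is an \emph{adapted} random probability measure, and conversely every such random measure comes from a unique $\taur\in\cTr$ (take left-continuous generalised inverses, using right-continuity of $(\cF_{t})$). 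Fubini's theorem turns the objective into
\begin{equation*}
g(\taur,x):=\ex\bigl[\Phi^{*}(x+Y^{r}_{\taur})-x\bigr]=\ex\Bigl[\int_{[0,T]}\Phi^{*}(x+Y_{t})\,\mu^{\taur}_{\omega}(dt)\Bigr]-x .
\end{equation*}

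Two elementary facts about $x\mapsto g(\taur,x)$ come first. It is convex, since $\Phi^{*}$ is; and it is coercive uniformly in $\taur$, because $Y\ge 0$ and $\Phi^{*}$ nondecreasing give $g(\taur,x)\ge\Phi^{*}(x)-x$, and $\Phi^{*}(x)-x\to\infty$ both as $x\to-\infty$ (as $\Phi^{*}\ge 0$) and as $x\to+\infty$ (here $1\in int(dom(\Phi))$ forces the right derivative of $\Phi^{*}$ eventually to exceed $1$). Hence for every $\taur$ the infimum $J(\taur):=\inf_{x\in\R}g(\taur,x)$ is attained at some $x(\taur)\in\R$, and there is a compact interval $I\subset\R$, not depending on $\taur$, with $J(\taur)=\inf_{x\in I}g(\taur,x)$ (combine the uniform lower bound $\Phi^{*}(x)-x$ with the uniform upper bound $J(\taur)\le g(\taur,0)\le\ex[\Phi^{*}(\sup_{t}Y_{t})]<\infty$, which uses $\sup_{t}Y_{t}\in H^{\Phi^{*}}$).

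The core is a compactness-and-continuity argument. By the Baxter--Chacon compactness theorem for randomized stopping times (cf.\ \cite{EdgarMilletSucheston1981}), $\cTr$, identified as above with the adapted random probability measures on $[0,T]$, is compact in the topology in which $\taur_{n}\to\taur$ means $\ex[\int_{[0,T]}\xi_{t}\,\mu^{\taur_{n}}_{\omega}(dt)]\to\ex[\int_{[0,T]}\xi_{t}\,\mu^{\taur}_{\omega}(dt)]$ for all bounded measurable processes $\xi$ with continuous paths; since $L^{1}(\Omega,\cF,\pr)$ is separable this topology is metrizable, hence $\cTr$ is sequentially compact. I then claim that for each fixed $x$ the functional $\taur\mapsto g(\taur,x)$ is continuous for this topology. \emph{This is where quasi-left-continuity is indispensable, and the only place the hypotheses go beyond those of Theorem~\ref{new_representation}}: the process $t\mapsto\Phi^{*}(x+Y_{t})$ is merely right-continuous, and for a general right-continuous adapted process the expectation of its value at a randomized stopping time is \emph{not} continuous in the stopping time — mass of $\mu^{\taur_{n}}$ may escape, in the limit, onto a downward jump of the path (e.g.\ $\xi_{t}=\eins_{[1,2[}(t)$, $\taur_{n}\equiv 2-1/n$). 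Quasi-left-continuity of $(Y_{t})$, inherited by $t\mapsto\Phi^{*}(x+Y_{t})$ because $\Phi^{*}$ is continuous, rules this out: the jumps occur only at totally inaccessible times, which cannot be approached in the adapted topology. Concretely one approximates $t\mapsto\Phi^{*}(x+Y_{t})$ by continuous-in-$t$ adapted processes (right-sided moving averages), applies the definition of convergence to each approximant, and removes the approximation using the single dominating function $\Phi^{*}(|x|+\sup_{s}Y_{s})\in L^{1}$ — note $\int_{[0,T]}\Phi^{*}(|x|+\sup_{s}Y_{s})\,\mu^{\taur}_{\omega}(dt)=\Phi^{*}(|x|+\sup_{s}Y_{s})$ for all $\taur$ — together with quasi-left-continuity to identify the limit; this step is the main technical obstacle and is, I expect, the bulk of the proof.

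It remains to assemble the pieces. Being an infimum of continuous functionals, $J$ is upper semicontinuous; explicitly, if $\taur_{n}\to\taur_{*}$ then, with $x_{*}:=x(\taur_{*})$,
\begin{equation*}
\limsup_{n}J(\taur_{n})=\limsup_{n}\inf_{x}g(\taur_{n},x)\le\limsup_{n}g(\taur_{n},x_{*})=g(\taur_{*},x_{*})=J(\taur_{*}).
\end{equation*}
Choosing a maximizing sequence $(\taur_{n})$ for $s:=\sup_{\taur\in\cTr}J(\taur)$, extracting a convergent subsequence $\taur_{n}\to\taur_{*}$ by sequential compactness, and invoking the last display gives $s\le J(\taur_{*})\le s$; thus $\taur_{*}$ attains the supremum, which together with the identification of $J(\taur)$ with $\inf_{x}\ex[\Phi^{*}(x+Y^{r}_{\taur})-x]$ is the assertion. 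In summary: quasi-left-continuity is used precisely to reconcile the mere right-continuity of $\Phi^{*}(x+Y_{\cdot})$ with the adapted-weak nature of the Baxter--Chacon topology, while countable generation of $\cF_{T}$ is what upgrades abstract compactness to the sequential compactness used in this last step.
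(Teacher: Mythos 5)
Your proposal is correct and follows essentially the same route as the paper: compactness of $\cTr$ in the Baxter--Chacon topology (metrizable because $\cF_{T}$ is countably generated), continuity of $\taur\mapsto\ex[\Phi^{*}(x+Y^{r}_{\taur})-x]$ for each fixed $x$ via quasi-left-continuity, and upper semicontinuity of the infimum over $x$ combined with (sequential) compactness to extract a maximizer. The only real difference is that the continuity step you sketch by approximating with continuous adapted processes is precisely Theorem 4.7 of \cite{EdgarMilletSucheston1981}, which the paper invokes directly (after first extending the filtration and the process beyond $T$ and truncating the resulting optimal time with $\wedge\, T$).
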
 
Proposition will be proved in Section~\ref{proof of solution}.
Moreover 
the following important minimax result for the stopping problem 
(\ref{randomstop}) holds.
\begin{proposition}
\label{minimax}
If (\ref{Annahmen Young function}) is fulfilled, and if $\sup_{t\in [0,T]}Y_{t}\in H^{\Phi^{*}},$ then 
\begin{eqnarray*}
\sup\limits_{\tau^{r}\in\cT^{r}}\inf_{x\in\R}\ex[\Phi^{*}(x + Y_{\taur}^{r}) - x]
= 
\inf_{x\in\R}\sup\limits_{\tau^{r}\in\cT^{r}}\ex[\Phi^{*}(x + Y_{\taur}^{r}) - x].
\end{eqnarray*}
Moreover, if $(Y_{t})_{t\in [0,T]}$ is quasi-left-continuous and if $\cF_{T}$ is countably generated, then there exist $\tau^{r*}\in\cT^{r}$ and $x^{*}\in\R$ such that 
\begin{eqnarray*}
\ex[\Phi^{*}(x^{*} + Y^{r}_{\taur}) - x^{*}] 
\leq 
\ex[\Phi^{*}(x^{*} + Y^{r}_{\tau^{r*}}) - x^{*}] 
\leq 
\ex[\Phi^{*}(x + Y^{r}_{\tau^{r*}}) - x]
\end{eqnarray*}
for $x\in\R$ and $\tau\in\cT^{r}.$
\end{proposition}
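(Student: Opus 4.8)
\textbf{Proof proposal for Proposition \ref{minimax}.}
The plan is to verify the hypotheses of a Sion-type minimax theorem for the function
$
(\taur,x)\mapsto \ex[\Phi^{*}(x + Y^{r}_{\taur}) - x]
$
on the product $\cT^{r}\times\R$, and then to extract the saddle point from the existence results already available. The essential gain over working with ordinary stopping times is that the collection of distribution functions $\{F_{Y^{r}_{\taur}}\mid \taur\in\cT^{r}\}$ \emph{is} convex: a convex combination $\lambda F_{Y^{r}_{\taur_{1}}} + (1-\lambda)F_{Y^{r}_{\taur_{2}}}$ is again of the form $F_{Y^{r}_{\taur}}$, obtained by randomizing between $\taur_{1}$ and $\taur_{2}$ over the auxiliary $[0,1]$-coordinate with weights $\lambda$ and $1-\lambda$. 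Using the representation \eqref{DarstellungZielfunktion} (which holds verbatim for $Y^{r}_{\taur}$ since it only depends on the law of the stopped value), the objective is affine in $F_{Y^{r}_{\taur}}$ and hence concave (indeed affine) in the randomized-stopping-time variable once we parametrize $\cT^{r}$ by these laws; it is manifestly convex in $x$ because $\Phi^{*}$ is convex and $x\mapsto -x$ is linear.

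First I would record the measurability/finiteness facts: $\sup_{t}Y_{t}\in H^{\Phi^{*}}$ together with monotonicity of $\Phi^{*}$ (Lemma \ref{optimizedcertaintyequivalent}) and the bound $|\Phi^{*}(x+Y^{r}_{\taur})|\le \Phi^{*}(|x|+\sup_{t}Y_{t})\in L^{1}$ from Remark \ref{relaxed2} guarantee that the expectation is finite for every $(\taur,x)$ and that $\inf_{x}\sup_{\taur}$ and $\sup_{\taur}\inf_{x}$ are both finite. Next I would observe that only $x$ in a bounded interval matters: since $\Phi^{*}(x)/|x|\to\infty$ is false in general, but $\Phi^{*}$ is a finite Young function on $[0,\infty[$ with $\lim_{x\to\infty}\Phi^{*}(x) = \infty$ and $\Phi^{*}\ge 0$, one checks that $x\mapsto \ex[\Phi^{*}(x+Y^{r}_{\taur})-x]$ is coercive (tends to $+\infty$ as $|x|\to\infty$), uniformly over $\taur$ because $0\le Y^{r}_{\taur}\le\sup_{t}Y_{t}$; hence the infimum over $x\in\R$ may be replaced by an infimum over a fixed compact interval $[-c,c]$, which makes the $x$-domain compact. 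For the minimax theorem I then need a compatible topology on $\cT^{r}$: the standard one is the weak topology induced by viewing randomized stopping times as elements of the set of probability measures on $\Omega\times[0,T]$ with first marginal $\pr$, which is compact (this is the classical Baxter--Chacon / Meyer compactness of randomized stopping times); in this topology $\taur\mapsto \ex[f(Y^{r}_{\taur})]$ is continuous for bounded continuous $f$, and by right-continuity and the integrable envelope one upgrades this to continuity (or at least upper semicontinuity) of $\taur\mapsto\ex[\Phi^{*}(x+Y^{r}_{\taur})-x]$ for each fixed $x$.

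With compactness of the (reparametrized) $\taur$-domain, convexity of both domains, concavity--upper-semicontinuity in $\taur$ and convexity--lower-semicontinuity in $x$ in hand, Sion's minimax theorem yields
$
\sup_{\taur}\inf_{x} = \inf_{x}\sup_{\taur},
$
which is the first assertion. For the saddle point under the additional hypotheses (quasi-left-continuity of $(Y_{t})$ and countably generated $\cF_{T}$), Proposition \ref{solution} supplies a maximizer $\taur_{*}\in\cT^{r}$ of $\taur\mapsto\inf_{x}\ex[\Phi^{*}(x+Y^{r}_{\taur})-x]$; the coercivity in $x$ noted above gives a minimizer $x^{*}$ of $x\mapsto\sup_{\taur}\ex[\Phi^{*}(x+Y^{r}_{\taur})-x]$ on the compact interval; and the minimax equality forces the pair $(\taur_{*},x^{*})$ to be a saddle point, which is exactly the chain of inequalities claimed (with $\tau^{r*}=\taur_{*}$). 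The main obstacle I anticipate is the topological bookkeeping on $\cT^{r}$: one must choose the compactification of randomized stopping times carefully so that (i) the set is compact, (ii) the map $\taur\mapsto F_{Y^{r}_{\taur}}$ is continuous enough to carry the semicontinuity needed for Sion, and (iii) the limit points are again genuine randomized stopping times rather than mere "fictitious" measures — here the right-continuity of $(Y_{t})$ and the quasi-left-continuity assumption do the work, and this is where the proof in Section \ref{optimalrandomizedstoppingtimes} will have to be most careful; the convexity and the $x$-side are routine by comparison.
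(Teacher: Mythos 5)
Your plan hinges on making $\cTr$ (or its reparametrization by laws) a compact topological object and on (semi)continuity of $\taur\mapsto\ex[\Phi^{*}(x+Y^{r}_{\taur})-x]$ in the Baxter--Chacon topology, so that Sion's theorem applies. That step is exactly where the proposal fails to deliver the statement as claimed: the first assertion of Proposition \ref{minimax} is asserted under \eqref{Annahmen Young function} and $\sup_{t}Y_{t}\in H^{\Phi^{*}}$ only, \emph{without} quasi-left-continuity of $(Y_{t})$ and without $\cF_{T}$ being countably generated. But the continuity you need is obtained (in the paper, inside Proposition \ref{solution}) from Theorem 4.7 of Edgar--Millet--Sucheston, which uses quasi-left-continuity; right-continuity plus an integrable envelope does not by itself give upper semicontinuity of $\taur\mapsto\ex[Z_{\taur}]$ along Baxter--Chacon limits, and metrizability of that topology is where countably generated $\cF_{T}$ enters. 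So your route proves the minimax equality only under the additional hypotheses of the second assertion and leaves the first assertion, in its stated generality, open. A smaller (fixable) gap: realizing $\lambda F_{Y^{r}_{\taur_{1}}}+(1-\lambda)F_{Y^{r}_{\taur_{2}}}$ by ``randomizing over the $[0,1]$-coordinate'' does not directly produce an element of $\cTr$, since the naive concatenation need not be nondecreasing and left-continuous in $u$; one must mix the kernels, $K:=\lambda K_{\taur_{1}}+(1-\lambda)K_{\taur_{2}}$, and take $\taur(\omega,u):=\inf\{t\in[0,T]\mid K(\omega,[0,t])\geq u\}$, after which Lemma \ref{stopped distribution} yields the desired law.

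The paper avoids the topological issue on $\cTr$ entirely by putting the compactness on the $x$-side and using Fan's minimax theorem for convexlike/concavelike functions instead of Sion: the bound $h(\taur,x)\geq\Phi^{*}(x)-x$ together with $\lim_{|x|\to\infty}(\Phi^{*}(x)-x)=\infty$ (Lemma \ref{optimizedcertaintyequivalent}, using $1\in int(dom(\Phi))$) reduces both optimizations to a compact interval $I_{\beta}$; $h(\taur,\cdot)$ is finite, convex, hence continuous there; and on the $\taur$-side Fan's theorem needs no topology at all, only the concave-like property, which is exactly the kernel-mixing construction above combined with the representation \eqref{DarstellungZielfunktion}. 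Your treatment of the $x$-side (coercivity uniform in $\taur$, attainment of $x^{*}$) and of the saddle point (maximizer from Proposition \ref{solution} under the extra hypotheses, then gluing via the minimax equality) coincides with the paper and is fine; it is the reliance on compactness and semicontinuity over $\cTr$ for the equality itself that must be replaced.
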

The proof of Proposition \ref{minimax} can be found in Section \ref{beweis minimax}.
In the next step we shall provide conditions  ensuring that the stopping problems 
(\ref{stoppproblem}) and (\ref{randomstop}) have the same optimal value. 
\begin{proposition}
\label{derandomize2}
Let $(\Omega,\cF_{t},\pr|_{\cF_{t}})$ be atomless with countably generated $\cF_{t}$ for every $t > 0.$ If (\ref{Annahmen Young function}) is fulfilled, and if $\sup_{t\in [0,T]}Y_{t}$ belongs to $H^{\Phi^{*}},$ then 
\begin{eqnarray*}
\sup\limits_{\tau^{r}\in\cT^{r}}\inf_{x\in\R}\ex[\Phi^{*}(x + Y_{\taur}^{r}) - x]
&=&
\sup\limits_{\tau\in\cT}\inf_{x\in\R}\ex[\Phi^{*}(x + Y_{\tau}) - x]\\
&=& 
\sup\limits_{\tau\in\cT}\sup_{\Q\in\cQ_{\Phi,0}}\left(\ex_{\Q}[Y_{\tau}]- \ex\left[\Phi\left(\frac{d\pr}{d\Q}\right)\right]\right) 
\end{eqnarray*}
\end{proposition}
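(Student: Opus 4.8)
The last equality is exactly \eqref{Hilfsstoppproblem}, and the inequality ``$\ge$'' in the first one is immediate, since every $\tau\in\cT$ induces a degenerate $\taur\in\cTr$ with $\ex[\Phi^{*}(x+Y^{r}_{\taur})-x]=\ex[\Phi^{*}(x+Y_{\tau})-x]$ for all $x$, as already observed after \eqref{randomstop}. So the whole content is the reverse inequality
\[
\sup_{\taur\in\cTr}\inf_{x\in\R}\ex[\Phi^{*}(x+Y^{r}_{\taur})-x]\ \le\ \sup_{\tau\in\cT}\inf_{x\in\R}\ex[\Phi^{*}(x+Y_{\tau})-x],
\]
and the plan is to prove it by de-randomisation at the level of laws. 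The starting point is that, by \eqref{DarstellungZielfunktion} (which holds verbatim with $Y^{r}_{\taur}$ in place of $Y_{\tau}$, viewing $\taur$ as an ordinary stopping time on the enlarged space), the functional $\tau\mapsto\inf_{x}\ex[\Phi^{*}(x+Y_{\tau})-x]$ depends on $\tau$ only through $\law(Y_{\tau})$. Since $\sup_{t\in[0,T]}Y_{t}\in H^{\Phi^{*}}$, the family $\{\Phi^{*}(x+Y_{\tau})\mid\tau\in\cT\cup\cTr,\ |x|\le M\}$ is dominated in $L^{1}$ by $\Phi^{*}(M+\sup_{t}Y_{t})$ (cf. Remark \ref{relaxed2}) and hence uniformly integrable; together with the continuity of $\Phi^{*}$ this makes $x\mapsto\ex[\Phi^{*}(x+Y_{\tau})-x]$ a function of $\law(Y_{\tau})$ that is continuous along weakly convergent sequences of laws. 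As these maps are moreover convex in $x$ and, by \eqref{Annahmen Young function}, coercive uniformly in $\tau$ (one has $\Phi^{*}(x)-x\to\infty$ as $|x|\to\infty$, while $\inf_{x}\ex[\Phi^{*}(x+Y_{\tau})-x]\le\ex[\Phi^{*}(\sup_{t}Y_{t})]$), their infimum over $x$ may be taken over one fixed compact interval for every $\tau$ in the family, and therefore also depends continuously on $\law(Y_{\tau})$. It thus suffices to show: \emph{for every $\taur\in\cTr$ there is a sequence $(\tau_{n})$ in $\cT$ with $Y_{\tau_{n}}\Rightarrow Y^{r}_{\taur}$ in distribution.}

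I then reduce to finitely valued $\taur$. Given a finite grid $\TTT\subseteq[0,T]$ containing $0$ and $T$, put $\taur_{\TTT}(\omega,u):=\min\{t\in\TTT\mid t\ge\taur(\omega,u)\}$; this is again a randomised stopping time, and by right-continuity of $(Y_{t})$ one has $Y^{r}_{\taur_{\TTT}}\to Y^{r}_{\taur}$ pointwise as $\mesh(\TTT)\to 0$, hence in law. So I may assume $\taur$ takes values in $\TTT=\{t_{0}<\dots<t_{r+1}\}$, and then it is encoded by $\cF_{t_{k}}$-measurable cumulative stopping probabilities $0=q_{-1}\le q_{0}\le\dots\le q_{r+1}=1$ through $\{\taur=t_{k}\}=\{(\omega,u)\mid q_{k-1}(\omega)<u\le q_{k}(\omega)\}$, so that $(\pr\otimes\pr^{U})(Y^{r}_{\taur}>z)=\sum_{k}\ex[(q_{k}-q_{k-1})\eins_{\{Y_{t_{k}}>z\}}]$.

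The de-randomisation step is where the assumption that each $(\Omega,\cF_{t},\pr|_{\cF_{t}})$, $t>0$, is atomless and countably generated is used, together with right-continuity of the filtration, $\cF_{t+}=\cF_{t}$. Fix a countable dense set of thresholds. For each $n$ I build $\tau_{n}\in\cT$ that stops only at times $t_{k}+\varepsilon_{n}$, with $\varepsilon_{n}\downarrow 0$ chosen small enough that $Y_{t_{k}+\varepsilon_{n}}$ is within $1/n$ of $Y_{t_{k}}$ in $L^{1}$. Proceeding inductively in $k$, and carrying the $\cF$-measurable ``not yet stopped'' set $B_{k-1}$ along, I select a stopping set $A^{n}_{k}\in\cF_{t_{k}+\varepsilon_{n}}$ with $A^{n}_{k}\subseteq B_{k-1}$ by applying Lyapunov's convexity theorem on the atomless space $(\Omega,\cF_{t_{k}+\varepsilon_{n}},\pr|_{\cF_{t_{k}+\varepsilon_{n}}})$: the range of $E\mapsto(\ex[\eins_{E}\,g])_{g}$ over $\cF_{t_{k}+\varepsilon_{n}}$-measurable $E\subseteq B_{k-1}$, with $g$ running through the finite family of indicators $\eins_{\{Y_{t_{i}}>z\}}$ (finitely many thresholds $z$, all $i$) together with $g\equiv 1$, is convex and compact, hence contains the value realised by the $\cF_{t_{k}}$-measurable density $(q_{k}-q_{k-1})\eins_{B_{k-1}}$; right-continuity of the filtration — so that $\cF_{t_{k}+\varepsilon}\downarrow\cF_{t_{k}}$ and $\ex[\,\cdot\,|\cF_{t_{k}+\varepsilon_{n}}]\to\ex[\,\cdot\,|\cF_{t_{k}}]$ by reverse martingale convergence — is what forces the accumulated ``stopped'' events to decouple asymptotically from the future $Y_{t_{i}}$'s, so that the per-stage error is $O(1/n)$. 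Setting $\tau_{n}:=\sum_{k}(t_{k}+\varepsilon_{n})\eins_{A^{n}_{k}}$, with the last stage forced to mop up the remainder ($A^{n}_{r+1}=B_{r}$), a summation of the finitely many per-stage errors gives $\pr(Y_{\tau_{n}}>z)\to(\pr\otimes\pr^{U})(Y^{r}_{\taur}>z)$ at every continuity point $z$, that is $Y_{\tau_{n}}\Rightarrow Y^{r}_{\taur}$; taking the supremum over $\taur$ then finishes the proof.

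The hard part is precisely this last construction. Exact de-randomisation is impossible in general — for instance, when the filtration is locally constant around $t_{k}$ and $Y_{t_{k}}$ is degenerate there is no internal randomness available at that time — so one must accept $1/n$-approximate matching and control the way the per-stage errors accumulate over the $r+1$ stages; it is the interplay of Lyapunov's theorem on the atomless $\cF_{t}$'s with the reverse-martingale convergence supplied by right-continuity of the filtration that makes this possible, and this is why both parts of the hypothesis are needed (for purely atomic $\cF_{t}$ the statement genuinely fails, as randomised stopping times then strictly enlarge the set of attainable laws of $Y_{\tau}$).
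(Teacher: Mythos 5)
Your overall strategy (reduce to the law of the stopped payoff, use uniform integrability from $\sup_t Y_t\in H^{\Phi^{*}}$ and convexity/coercivity in $x$ to pass to limits of the inf-functional, discretize $\taur$ to a finite grid, then de-randomize the finitely valued randomized stopping time) parallels the paper's proof, which also discretizes (Lemma \ref{discretize}), passes to the limit via Corollary \ref{discretizedstop}, and then compares with ordinary grid-valued stopping times (Lemma \ref{missinglink}). The gap is in your de-randomization step. Your greedy, stage-by-stage Lyapunov selection produces $A^{n}_{k}\in\cF_{t_{k}+\varepsilon_{n}}$, $A^{n}_{k}\subseteq B_{k-1}$, matching $\ex[\eins_{A^{n}_{k}}g]=\ex[(q_{k}-q_{k-1})\eins_{B_{k-1}}g]$, whereas what is needed is $\ex[\eins_{A^{n}_{k}}g]\approx\ex[(q_{k}-q_{k-1})g]$. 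The discrepancy $\ex[(q_{k}-q_{k-1})\eins_{\Omega\setminus B_{k-1}}g]$ is a structural error, not an $O(1/n)$ one: the sets $A^{n}_{1},\dots,A^{n}_{k-1}$ chosen at earlier stages only satisfy finitely many linear moment constraints, and nothing prevents them from being strongly correlated with $q_{k}-q_{k-1}$ (which is already partially $\cF_{t_{k-1}}$-measurable) and with the future $Y_{t_{i}}$'s. Right-continuity of the filtration and reverse martingale convergence give no ``decoupling'' of this kind; they only control $\cF_{t_{k}+\varepsilon}$ versus $\cF_{t_{k}}$, not the joint law of a particular Lyapunov selection with later data. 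Consequently even the total mass stopped at each stage drifts away from $\ex[q_{k}-q_{k-1}]$, and forcing $A^{n}_{r+1}=B_{r}$ to ``mop up'' pushes the surplus to $T$, so $Y_{\tau_{n}}$ need not converge in law to $Y^{r}_{\taur}$ at all. (Incidentally, the delay $\varepsilon_{n}$ is only needed to deal with randomization at $t=0$; for $t_{k}>0$ the spaces $(\Omega,\cF_{t_{k}},\pr|_{\cF_{t_{k}}})$ are atomless by hypothesis, and the paper sidesteps $t=0$ by rounding $\taur$ up to the first positive grid point.)

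The missing ingredient is a result that handles \emph{all stages simultaneously}: one must approximate the whole adapted vector of kernel weights $(Z_{1},\dots,Z_{k_{j}})$, $Z_{k}\in L^{\infty}(\Omega,\cF_{t_{kj}},\pr|_{\cF_{t_{kj}}})$, $Z_{k}\geq 0$, $\sum_{k}Z_{k}=1$, by indicators $(\eins_{B_{1n}},\dots,\eins_{B_{k_{j}n}})$ of an \emph{adapted partition}, in the sense that $\ex[\eins_{B_{kn}}g]\to\ex[Z_{k}g]$ for the relevant test functions at every stage at once. This is exactly what the paper's Appendix \ref{AppendixC} provides: Proposition \ref{KreinMilmanallgemein} (a Lyapunov-type extreme-point theorem for adapted vectors, proved by backward induction over the stages with the thin-set hypothesis) yields Corollary \ref{Dichtheit} (weak* density of adapted indicator partitions in $\ocPinfty_{m}$ for atomless $\cF_{t}$), and Proposition \ref{angelic} (using that the $\cF_{t}$ are countably generated) turns this into sequential approximation; Lemma \ref{missinglink} then concludes with Fatou and an Arzel\`a--Ascoli equicontinuity argument, which also replaces your claim of exact weak convergence by the one-sided inequality that is actually needed. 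A greedy forward construction can be repaired, but only by pre-matching at stage $k$ suitably reweighted future targets such as $\frac{q_{i}-q_{i-1}}{1-q_{k}}\,g$ for all later stages $i$, which is essentially a reformulation of the simultaneous (multi-stage) matching problem; as written, your argument does not establish the key inequality.
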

The proof of Proposition \ref{derandomize2} is delegated to Section \ref{beweis derandomize2}.


\section{Proofs}
\label{proofs}
We shall start with some preparations which also will turn out to be useful later on. Let us recall (cf. \cite{EdgarMilletSucheston1981}) that every 
$\taur\in\cTr$ induces a 
stochastic kernel
$
K_{\taur}:\Omega\times\cB([0,T])\rightarrow [0,1]
$ 
with $K_{\taur}(\omega,\cdot)$ being the distribution of $\taur(\omega,\cdot)$ under $\pr^{U}$ for any $\omega\in\Omega.$ Here $\cB([0,T])$ stands for the usual Borel $\sigma-$algebra on $[0,T].$ This stochastic kernel has the following properties:
\begin{eqnarray*}
&& 
K_{\taur}(\cdot,[0,t])~\mbox{is}~\cF_{t}-\mbox{measurable for every}~t\geq 0,\\
&&
K_{\taur}(\omega,[0,t]) = \sup\{u\in [0,1]\mid \taur(\omega,u)\leq t\}.
\end{eqnarray*}
The  associated stochastic kernel $K_{\taur}$ is useful to characterize the distribution function $F_{Y^{r}_{\taur}}$ of $Y_{\taur}^{r}.$ 
\begin{lemma}
\label{stopped distribution}
For any $\taur\in\cTr$ with associated stochastic kernel $K_{\taur},$ the distribution function $F_{Y_{\taur}^{r}}$ of $Y_{\taur}^{r}$ may be represented in the following way
$$
F_{Y_{\taur}^{r}}(x) = \ex[K_{\taur}(\cdot,\{t\in [0,T]\mid Y_{t}\leq x\})]\quad\mbox{for}~ x\in\R.
$$
\end{lemma}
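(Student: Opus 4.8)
The plan is to read off the distribution function directly from its definition on the enlarged probability space $\big(\Omega\times[0,1],\cF\otimes\cB([0,1]),\pr\otimes\pr^{U}\big)$ and then peel it apart with Fubini's theorem. Recall that, by construction, $Y^{r}_{\taur}(\omega,u)=Y_{\taur(\omega,u)}(\omega)$, so that for $x\in\R$
\begin{align*}
F_{Y^{r}_{\taur}}(x)
&=(\pr\otimes\pr^{U})\big(\{(\omega,u)\,:\,Y_{\taur(\omega,u)}(\omega)\leq x\}\big)\\
&=\int_{\Omega}\pr^{U}\big(\{u\in[0,1]\,:\,Y_{\taur(\omega,u)}(\omega)\leq x\}\big)\,\pr(d\omega).
\end{align*}
The inner integrand must be checked to be $\cF$-measurable in $\omega$; I come back to this point below.

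Next I would fix $\omega$ and rewrite the inner set. Put $B_{x}(\omega):=\{t\in[0,T]\,:\,Y_{t}(\omega)\leq x\}$, which is a $\cB([0,T])$-measurable subset because $t\mapsto Y_{t}(\omega)$ is right-continuous. Since $\taur\leq T$, the value $\infty$ never occurs, so $\{u\in[0,1]\,:\,Y_{\taur(\omega,u)}(\omega)\leq x\}=\{u\in[0,1]\,:\,\taur(\omega,u)\in B_{x}(\omega)\}$. By the very definition of the associated kernel, $K_{\taur}(\omega,\cdot)$ is the image of $\pr^{U}$ under $u\mapsto\taur(\omega,u)$, hence $\pr^{U}\big(\{u\,:\,\taur(\omega,u)\in B_{x}(\omega)\}\big)=K_{\taur}(\omega,B_{x}(\omega))$. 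Substituting this back into the identity above yields $F_{Y^{r}_{\taur}}(x)=\ex\big[K_{\taur}(\cdot,B_{x}(\cdot))\big]$, which is the claim.

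The one genuine point to verify — and the main, if modest, obstacle — is the measurability issue: the set $B_{x}(\omega)$ depends on $\omega$, so $\omega\mapsto K_{\taur}(\omega,B_{x}(\omega))$ is not covered by the plain kernel axioms, and without it the right-hand expectation in the statement is not even well defined. Here I would use that $(\omega,t)\mapsto\eins_{\{Y_{t}(\omega)\leq x\}}$ is $\cF\otimes\cB([0,T])$-measurable — this follows from right-continuity and adaptedness of $(Y_{t})$, which make the process progressively, hence jointly, measurable — together with the standard fact that for a stochastic kernel $K_{\taur}$ and a bounded $\cF\otimes\cB([0,T])$-measurable $h$, the map $\omega\mapsto\int_{[0,T]}h(\omega,t)\,K_{\taur}(\omega,dt)$ is $\cF$-measurable. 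The latter is proved by a monotone-class argument starting from $h=\eins_{A\times[0,t]}$ with $A\in\cF$, where it reduces to the kernel property that $K_{\taur}(\cdot,[0,t])$ is $\cF_{t}$- (hence $\cF$-) measurable. Applying this with $h(\omega,t)=\eins_{\{Y_{t}(\omega)\leq x\}}$ gives that $\omega\mapsto\int_{[0,T]}\eins_{\{Y_{t}(\omega)\leq x\}}\,K_{\taur}(\omega,dt)=K_{\taur}(\omega,B_{x}(\omega))$ is $\cF$-measurable, which both legitimizes the Fubini step and makes the final expectation meaningful.
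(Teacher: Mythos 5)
Your argument is correct and follows essentially the same route as the paper's proof: write $F_{Y^{r}_{\taur}}(x)$ as an expectation on the enlarged space, disintegrate it by Fubini--Tonelli, and identify the inner $\pr^{U}$-integral for fixed $\omega$ with $K_{\taur}(\omega,\{t\in[0,T]\mid Y_{t}(\omega)\leq x\})$ via the fact that $K_{\taur}(\omega,\cdot)$ is the image of $\pr^{U}$ under $\taur(\omega,\cdot)$ (the paper handles the measurability/integrability bookkeeping by citing Edgar--Millet--Sucheston, Theorem 4.5, where you argue it directly). Your extra monotone-class verification is fine but not strictly needed: once $(\omega,u)\mapsto\eins_{\{Y_{\taur(\omega,u)}(\omega)\leq x\}}$ is jointly measurable (progressive measurability of $(Y_{t})$ plus $\taur$ being a stopping time on the enlarged space), Tonelli already yields the $\cF$-measurability of the partial integral, which is exactly $\omega\mapsto K_{\taur}(\omega,B_{x}(\omega))$.
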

\begin{proof}
Let $\taur\in\cTr,$ and let us fix $x\in\R.$ Then 
\begin{eqnarray*}
F_{Y_{\taur}^{r}}(x) 
= 
\ex[\eins_{]-\infty,x]}(Y_{\taur}^{r})] 
&=& 
\int_{0}^{1}\ex[\eins_{]-\infty,x]}(Y_{\taur(\cdot,u)}^{r})]\,du\\ 
&=& 
\ex\left[\int_{0}^{1}\eins_{]-\infty,x]}(Y_{\taur(\cdot,u)}^{r})\, du\right]
\end{eqnarray*}
holds (cf. \cite[Theorem 4.5]{EdgarMilletSucheston1981}), where the last equation on the right hand side is due to Fubini-Tonelli theorem. Then by definition of $K_{\taur},$ we obtain for every $\omega\in\Omega$
\begin{eqnarray*}
\int_{0}^{1}\eins_{]-\infty,x]}(Y^{r}_{\taur}(\omega,u))\, du
&=& 
\ex_{\pr^{U}}\left[\eins_{]-\infty,x]}(Y^{r}_{\taur(\omega,\cdot)}(\omega))\right]\\
&=&
\pr^{U}\left(\left\{Y^{r}_{\taur(\omega,\cdot)}(\omega)\leq x\right\}\right)\\ 
&=& 
K_{\taur}(\omega,\{t\in [0,T]\mid Y_{t}(\omega)\leq x\}). 
\end{eqnarray*}
This completes the proof.
\end{proof}
{Following a suggestion by one referee we placed the proof of Proposition \ref{solution} in front of that of Proposition \ref{minimax}.}
\subsection{Proof of Proposition \ref{solution}}
\label{proof of solution}
Let us introduce the filtered probability space $\tOFFP$ defined by 
$$
\widetilde{\cF}_{t} 
=
\bcswitch
\cF_{t}&t\leq T\\
\cF_{T}&t > T.
\ecswitch 
$$
We shall denote by $\widetilde{\cT}^{r}$ the set of randomized stopping times according to $\tOFFP.$ Furthermore, we may extend the processes 
$(Y_{t})_{t\in [0,T]}$ and $(Y^{r}_{t})_{t\in [0,T]}$ to right-continuous processes 
$(\widetilde{Y}_{t})_{t\in [0,\infty]}$ and $(\widetilde{Y}^{r}_{t})_{t\in [0,T]}$ in the following way
$$
\widetilde{Y}_{t} 
=
\bcswitch
Y_{t}&t\leq T\\
Y_{T}&t > T
\ecswitch 
\quad
\mbox{and}
\quad
\widetilde{Y}^{r}_{t} 
=
\bcswitch
Y^{r}_{t}&t\leq T\\
Y^{r}_{T}&t > T.
\ecswitch 
$$
Recall that we may equip $\widetilde{\cT}^{r}$ with the so called Baxter-Chacon topology which is compact in general, and even metrizable within our setting because $\cF_{T}$ is assumed to be countably generated (cf. Theorem 1.5 in \cite{BaxterChacon1977} and discussion afterwards). 

Next, consider the mapping
$$
\tilde{h}: \widetilde{\cT}^{r}\times\R\rightarrow\R,\, 
(\tilde{\tau}^{r},x)\mapsto \ex\left[\Phi^{*}(x + \tilde{Y}^{r}_{\tilde{\tau}^{r}}) - x\right].
$$
By assumption on $(Y_{t})_{t\in [0,T]}$, the processes $(\widetilde{Y}_{t})_{t\in [0,\infty]}$ and $(\widetilde{Y}^{r}_{t})_{t\in [0,T]}$ are quasi-left-continuous. Moreover, $\Phi^{*}$ is continuous due to Lemma \ref{optimizedcertaintyequivalent}, (i) in Appendix \ref{AppendixAA}, so that $\big(\Phi^{*}\big(x + \widetilde{Y}^{r}_{t}\big) - x\big)_{t\in [0,T]}$ is a quasi-left-continuous and right-continuous adapted process. Hence in view of \cite[Theorem 4.7]{EdgarMilletSucheston1981}, the mapping $\tilde{h}(\cdot,x)$ is continuous w.r.t. the Baxter-Chacon topology for every $x\in\R,$ and thus 
$\inf_{x\in\R}h(\cdot,x)$ is upper semicontinuous w.r.t. the Baxter-Chacon topology. Then by compactness of the Baxter-Chacon topology, we may find some randomized stopping time 
$\tilde{\tau}^{r}\in\widetilde{\cT}^{r}$ such that
$$
\inf_{x\in\R}h(\tilde{\tau}^{r},x) = \sup_{\tilde{\tau}^{r}\in\widetilde{\cT}^{r}}\inf_{x\in\R}\tilde h(\tilde{\tau}^{r},x).
$$
This completes the proof because 
$\tilde{Y}^{r}_{\tilde{\tau}^{r}} = Y^{r}_{\tilde{\tau}^{r}\wedge T}$ and 
$\tilde{\tau}^{r}\wedge T$ belongs to $\cTr$ for every $\tilde{\tau}^{r}\in\widetilde{\cT}^{r}.$ 
\hfill$\Box$


\subsection{Proof of Proposition \ref{minimax}}
\label{beweis minimax}
Let us define the mapping $h: \cTr\times\R\rightarrow\R$ by
$$
h(\taur,x) := \ex[\Phi^{*}(x + Y^{r}_{\tau^{r}}) - x].
$$
Since $\sup_{t\in [0,T]}Y_{t}$ is assumed to belong to $H^{\Phi^{*}},$ the mapping 
$\sup\limits_{\taur\in\cTr}h({\taur},\cdot)$ is finite and convex, and thus continuous. Moreover, by Lemma \ref{optimizedcertaintyequivalent} (cf. Appendix \ref{AppendixAA})
\begin{eqnarray*}
\lim_{x\to-\infty}\sup\limits_{\taur\in\cTr}h(\taur,x)\geq \lim_{x\to-\infty}(\Phi^{*}(x) - x) 
= \infty
&=&
\lim_{x\to\infty}(\Phi^{*}(x) - x)\\ 
&\leq& 
\lim_{x\to\infty} \sup\limits_{\taur\in\cTr}h(\taur,x).
\end{eqnarray*}
Hence $\inf_{x\in\R}\sup\limits_{\taur\in\cTr}h(\taur,x) = \inf_{x\in [-\varepsilon,\varepsilon]} \sup\limits_{\taur\in\cTr}h(\taur,x)$ for some $\varepsilon > 0.$ Thus $\sup\limits_{\taur\in\cTr}h(\taur,\cdot)$ attains its minimum at some $x^{*}$ due to continuity of $\sup\limits_{\taur\in\cTr}h(\taur,\cdot).$ 
Moreover, if $(Y_{t})_{t\in [0,T]}$ is quasi-left-continuous and if $\cF_{T}$ is countably generated, then $\inf_{x\in\R}h({\tau^{r*}},x) = \sup_{\taur\in\cTr}\inf_{x\in\R}h(\taur,x)$ for some $\tau^{r*}\in\cT^{r}$ due to Proposition \ref{solution}. It remains to show that 
$\sup_{\taur\in\cTr}\inf_{x\in\R}h(\taur,x) = \inf_{x\in\R}\sup_{\taur\in\cTr}h(\taur,x).$ 
Following the same line of reasoning as for the derivation of 
\eqref{DarstellungZielfunktion}, we may rewrite  $h$ in the following way.
\begin{equation}
\label{rewrite}
h(\taur,x) = \int_{0}^{\infty}\Phi^{*'}(x + z)[1 - F_{Y^{r}_{\taur}}(z)]~dz + 
\Phi^{*}(x) - x,
\end{equation}
where $F_{Y^{r}_{\taur}}$ stands for the distribution function of $Y^{r}_{\taur},$ and $\Phi^{*'}$ denotes the right-sided derivative of the convex function $\Phi^{*}.$ 
Obviously, we have
\begin{equation}
\label{convex}
h(\taur,\cdot)~\mbox{is convex and therefore continuous for every}~\taur\in\cTr.
\end{equation} 
Set $\beta := \inf\limits_{x\in\R}\sup\limits_{\tau^{r}\in\cT^{r}}\ex[\Phi^{*}(x + Y^{r}_{\tau^{r}}) - x] + 1 {= \inf\limits_{x\in\R}\sup\limits_{\tau^{r}\in\cT^{r}}h(\taur,x) + 1}$ which is a real number because {$\sup\limits_{\taur\in\cT^{r}}h(\taur,\cdot)$}
has been already proved to be a finite function {which attains its minimum on some compact interval of $\R$}. 
Furthermore, we may conclude from $h(\taur,x)\geq \Phi^{*}(x) - x$ for $x\in\R$ that
\begin{equation}
\label{infcompactneu}
I_{\beta} := \{x\in\R\mid \Phi^{*}(x) - x\leq\beta\}\,\mbox{ is a compact interval},
\end{equation}
and
\begin{equation}
\label{infcompact}
h(\tau^{r},x) > \beta\quad\mbox{for}\,\tau^{r}\in\cT^{r}\, \mbox{and}\, x\in\R\setminus I_{\beta}.
\end{equation}
By (\ref{infcompact}) we verify 
\begin{eqnarray*}
\sup_{\taur\in\cTr}\inf_{x\in\R}h(\taur,x) = \sup_{\taur\in\cTr}\inf_{x\in I_{\beta}}h(\taur,x)
\end{eqnarray*}
and
\begin{eqnarray*} 
\inf_{x\in\R}\sup_{\taur\in\cTr}h(\taur,x) = \inf_{x\in I_{\beta}}\sup_{\taur\in\cTr}h(\taur,x).
\end{eqnarray*}
{We want to apply Fan's minimax theorem (cf. \cite[Theorem 2]{Fan1953} or \cite{BorweinZhuang1986}) to $h_{\,|\,\cTr\times I_{\beta}}$. In view of \eqref{convex} and \eqref{infcompactneu}} it remains to 
show that for every $\taur_{1},\taur_{2}\in\cTr,$ and any $\lambda\in ]0,1[$ there exists some $\taur\in\cTr$ such that
\begin{eqnarray}
\label{concave} 
\lambda h(\taur_{1},x) + (1-\lambda) h(\taur_{2},x)\leq h(\taur,x)~\mbox{for all}~x\in\R. 
\end{eqnarray}
To this end let $\taur_{1},\taur_{2}\in\cTr$ with associated stochastic kernels 
$K_{\taur_{1}}, K_{\taur_{2}},$ and $\lambda\in ]0,1[.$ First, 
$K: = \lambda K_{\taur_{1}} + (1-\lambda) K_{\taur_{2}}: \Omega\times\cB([0,T])\rightarrow [0,1]$ defines a stochastic kernel satisfying
\begin{eqnarray*}
&&
K(\cdot,[0,t])~\mbox{is}~\cF_{t}-\mbox{measurable for every}~t\in [0,T],\\
&&
K(\omega,[0,T]) = 1.
\end{eqnarray*} 
Then
$$
\taur(\omega,u) := \inf\{t\in [0,T]\mid K(\omega,[0,t])\geq u\}
$$
defines some $\taur\in\cTr$ with $K_{\taur} = K.$ Furthermore, we obtain
$$
F_{Y^{r}_{\taur}} = \lambda F_{Y^{r}_{\taur_{1}}} + (1-\lambda) F_{Y^{r}_{\taur_{2}}}
$$
due to Lemma \ref{stopped distribution}. In view of (\ref{rewrite}) this implies (\ref{concave}) and the proof of Proposition \ref{minimax} is completed.
\hfill$\Box$
\subsection{Proof of Proposition \ref{derandomize2}}
\label{beweis derandomize2}
The starting idea for proving Proposition \ref{derandomize2} is to reduce the stopping problem (\ref{randomstop}) to suitably discretized random stopping times. 
The choice of the discretized randomized stopping times is suggested by the following lemma.
\begin{lemma}
\label{discretize}
For $\taur\in\cTr$ the construction
$$
\taur[j](\omega,u) := \min\{k/2^{j}\mid k\in\N, \taur(\omega,u)\leq k/2^{j}\}\wedge T
$$
defines a sequence $(\taur[j])_{j\in\N}$ in $\cTr$ satisfying the following properties.
\begin{enumerate}
\item [\rm{(i)}] $\taur[j]\searrow\taur$ pointwise, in particular it follows
$$
\lim\limits_{j\to\infty}Y^{r}_{\taur[j](\omega,u)}(\omega,u) = Y^{r}_{\taur(\omega,u)}(\omega,u)
$$
for any $\omega\in\Omega$ and every $u\in [0,1].$ 
\item [\rm{(ii)}] $\lim\limits_{j\to\infty}F_{Y^{r}_{\taur[j]}}(x) = F^{r}_{Y_{\taur}}(x)$ holds for any continuity point $x$ of $F_{Y_{\taur}}.$
\item [\rm{(iii)}] For any $x\in\R$ and every $j\in\N$ we have
$$
F_{Y^{r}_{\taur[j]}}(x) = \ex\left[\widehat{Y}_{t_{1j}}^{x} K_{\taur}(\cdot,[0,t_{1j}])\right] 
+ \sum\limits_{k=2}^{\infty}\ex\left[\widehat{Y}_{t_{kj}}^{x}\, K_{\taur}(\cdot,]t_{(k-1)j},
t_{kj}])\right],
$$
where  $t_{kj} := (k/2^{j})\wedge T$ for $k\in\N,$ and $\widehat{Y}^{x}_{t} := \eins_{]-\infty,x]}\circ Y_{t}$ for $t\in [0,T].$
\end{enumerate}
\end{lemma}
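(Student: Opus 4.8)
The plan is to prove the three assertions in turn, exploiting throughout the explicit description of the kernel $K_{\taur}$ and Lemma \ref{stopped distribution}. For (i), I would observe that by construction $\taur[j](\omega,u)$ is, for each fixed $(\omega,u)$, the smallest dyadic rational of level $j$ that dominates $\taur(\omega,u)$, capped at $T$; since the dyadic rationals are nested as $j$ increases, $\taur[j](\omega,u)$ is nonincreasing in $j$, and it converges down to $\taur(\omega,u)\wedge T = \taur(\omega,u)$ (the last equality because $\taur\leq T$). Monotonicity and left-continuity in $u$, as well as the stopping-time property of each $\taur[j](\cdot,u)$, follow directly because $t\mapsto \min\{k/2^{j}\mid k\in\N,\, t\leq k/2^{j}\}\wedge T$ is a nondecreasing left-continuous function of $t$ (it is in fact the generalized inverse of a step function), so composing with $\taur$ preserves these properties and the fact that $\{\taur[j](\cdot,u)\leq t\}=\{\taur(\cdot,u)\leq \lfloor 2^{j}t\rfloor/2^{j}\}\cap\{t\geq 0\}\in\cF_{t}$. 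The pointwise convergence of $Y^{r}_{\taur[j](\omega,u)}(\omega,u)$ to $Y^{r}_{\taur(\omega,u)}(\omega,u)$ is then immediate from right-continuity of $(Y_{t})$ together with $\taur[j]\searrow\taur$.

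For (ii), the statement is just the standard fact that if a sequence of random variables converges pointwise then the distribution functions converge at every continuity point of the limiting distribution function: by (i) we have $Y^{r}_{\taur[j]}\to Y^{r}_{\taur}$ pointwise on $\Omega\times[0,1]$, hence in distribution under $\pr\otimes\pr^{U}$, and convergence in distribution is equivalent to convergence of distribution functions at continuity points of the limit. I would spell this out in one line invoking the portmanteau/Helly argument rather than reproving it.

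The real content is (iii), and this is where I expect the work to be. The point is to compute $F_{Y^{r}_{\taur[j]}}(x)$ via Lemma \ref{stopped distribution}, which gives $F_{Y^{r}_{\taur[j]}}(x)=\ex\big[K_{\taur[j]}(\cdot,\{t\in[0,T]\mid Y_{t}\leq x\})\big]$. So I first need to express $K_{\taur[j]}$ in terms of $K_{\taur}$. Since $\taur[j](\omega,u)\leq t$ holds iff $\taur(\omega,u)\leq t_{kj}$ for the appropriate dyadic index, $\taur[j]$ takes values in the countable grid $\{t_{kj}\}_{k\geq 1}$, and its kernel is the pushforward of $K_{\taur}(\omega,\cdot)$ under the map $t\mapsto \min\{t_{kj}\mid t\leq t_{kj}\}$; concretely $K_{\taur[j]}(\omega,\{t_{1j}\})=K_{\taur}(\omega,[0,t_{1j}])$ and $K_{\taur[j]}(\omega,\{t_{kj}\})=K_{\taur}(\omega,]t_{(k-1)j},t_{kj}])$ for $k\geq 2$. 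Therefore, because $Y^{r}_{\taur[j]}$ only sees the values $Y_{t_{kj}}$, we get
\begin{eqnarray*}
K_{\taur[j]}\big(\omega,\{t\mid Y_{t}(\omega)\leq x\}\big)
&=&
\widehat{Y}^{x}_{t_{1j}}(\omega)\,K_{\taur}(\omega,[0,t_{1j}])
\\
&&
+\sum_{k=2}^{\infty}\widehat{Y}^{x}_{t_{kj}}(\omega)\,K_{\taur}(\omega,]t_{(k-1)j},t_{kj}]),
\end{eqnarray*}
and taking expectations, with monotone convergence to justify interchanging $\ex$ and the sum of nonnegative terms, yields the claimed formula. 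The main obstacle, and the step I would be most careful about, is the bookkeeping at the cap $T$: for all large $k$ one has $t_{kj}=T$, so the intervals $]t_{(k-1)j},t_{kj}]$ are eventually empty and the tail of the sum vanishes, which is consistent, but one must check that the atom that $\taur[j]$ may place at $T$ (coming from all of $K_{\taur}$ on $]T-2^{-j}\cdot\text{something},T]$ once $t_{(k-1)j}$ reaches $T$) is correctly accounted for exactly once; writing $t_{kj}=(k/2^{j})\wedge T$ and noting that consecutive indices with $t_{(k-1)j}=t_{kj}=T$ contribute the empty set handles this. I would also note measurability of each $K_{\taur}(\cdot,]t_{(k-1)j},t_{kj}])$ in $\cF_{t_{kj}}$ from the stated properties of $K_{\taur}$, which is what makes the expectations well defined.
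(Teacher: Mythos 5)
Your proposal is correct and follows essentially the same route as the paper: parts (i) and (ii) are treated as routine (the paper simply calls them obvious), and part (iii) is obtained, exactly as in the paper's proof, by combining Lemma \ref{stopped distribution} with the identification of $K_{\taur[j]}(\omega,\cdot)$ as the image of $K_{\taur}(\omega,\cdot)$ under dyadic rounding, i.e.\ $K_{\taur[j]}(\omega,\{t_{1j}\})=K_{\taur}(\omega,[0,t_{1j}])$ and $K_{\taur[j]}(\omega,\{t_{kj}\})=K_{\taur}(\omega,]t_{(k-1)j},t_{kj}])$, before taking expectations. Your explicit bookkeeping at the cap $T$ is a welcome extra precision but not a departure from the paper's argument.
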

\begin{proof}
Statements (i) and (ii) are obvious, so it remains to show (iii). 
To this end recall from Lemma \ref{stopped distribution}
\begin{equation}
\label{Ausgangspunkt}
F_{Y_{\taur[j]}}(x) = \ex[K_{\taur[j]}(\cdot,\{t\in [0,T]\mid Y_{t}\leq x\})]\quad\mbox{for}~ x\in\R.
\end{equation}
Since $K_{\taur[j]}(\omega,\cdot)$ is a probability measure, we also have
\begin{eqnarray}
\label{auseinanderziehen} \nonumber
&&
K_{\taur[j]}(\omega,\{t\in [0,T]\mid Y_{t}(\omega)\leq x\})\\ \nonumber
&=& 
K_{\taur[j]}(\omega,\{t\in [0,t_{1j}]\mid Y_{t}(\omega)\leq x\})\\ \nonumber 
&&\qquad + \sum_{k = 2}^{\infty} 
K_{\taur[j]}(\omega,\{t\in ]t_{(k-1)j},t_{kj}]\mid Y_{t}(\omega)\leq x\})\\ \nonumber
&=&
K_{\taur[j]}(\omega,\{t\in [0,t_{1j}]\mid \widehat{Y}_{t}^{x}(\omega) = 1\})\\ 
&&\qquad + \sum_{k = 2}^{\infty} 
K_{\taur[j]}(\omega,\{t\in ]t_{(k-1)j},t_{kj}]\mid \widehat{Y}_{t}^{x}(\omega) = 1\})
\end{eqnarray}
for every $\omega\in\Omega.$ 
Then by definitions of $K_{\taur[j]}$ and $K_{\taur},$ 
\begin{eqnarray} \label{AnwendungKerndefinition}\nonumber
&&
K_{\taur[j]}(\omega,\{t\in ]t_{(k-1)j},t_{kj}]\mid \widehat{Y}_{t}^{x}(\omega) = 1\})
\\ \nonumber
&=&
\pr^{U}(\{\taur[j](\omega,\cdot)\in ]t_{(k-1)j},t_{kj}],\,\widehat{Y}_{\taur[j](\omega,\cdot)}^{x}(\omega) = 1\})\\ \nonumber
&=& 
\pr^{U}(\{\taur[j](\omega,\cdot) = t_{kj},\,\widehat{Y}_{t_{kj}}^{x}(\omega) = 1\})\\ \nonumber
&=&
\widehat{Y}^{x}_{t_{kj}}(\omega)\,\pr^{U}(\{\taur[j](\omega,\cdot) = t_{kj}\})\\ \nonumber
&=& 
\widehat{Y}^{x}_{t_{kj}}(\omega)\,\pr^{U}(\{\taur(\omega,\cdot)\in ]t_{(k-1)j},t_{kj}]\})\\
&=&
\widehat{Y}^{x}_{t_{kj}}(\omega)\, K_{\taur}(\omega,]t_{(k-1)j},t_{kj}])
\end{eqnarray}
for $\omega\in\Omega$ and $k\in\N$ with $k\geq 2.$ Analogously, we also obtain 
\begin{equation}
\label{AnwendungKerndefinition2}
K_{\taur[j]}(\omega,\{t\in [0,t_{1j}]\mid \widehat{Y}_{t}^{x}(\omega) = 1\}) = 
\widehat{Y}_{t_{1j}}(\omega)\, K_{\taur}(\omega,[0,t_{1j}]).
\end{equation} 
Then statement (iii) follows from (\ref{Ausgangspunkt}) combining (\ref{auseinanderziehen}) with (\ref{AnwendungKerndefinition}) and (\ref{AnwendungKerndefinition2}). The proof is finished.
\end{proof}
We shall use the discretized randomized stopping times, as defined in Lemma \ref{discretize}, to show 
that we can restrict ourselves to discrete randomized stopping times in the stopping problem (\ref{randomstop}).  
\begin{corollary}
\label{discretizedstop}
If (\ref{Annahmen Young function}) is fulfilled, then for any $\taur\in\cTr,$ we have
\begin{enumerate}
\item [(i)] 
$
\lim\limits_{j\to\infty}\ex[\Phi^{*}(x_{j} + Y^{r}_{\taur[j]}) - x_{j}] = 
\ex[\Phi^{*}(x + Y^{r}_{\taur}) - x]
$
for any sequence $(x_{j})_{j\in\N}$ in $\R^\N$ converging to some $x\in\R;$
\item [(ii)]
$\lim\limits_{j\to\infty}\inf\limits_{x\in\R}\ex[\Phi^{*}(x + Y^{r}_{\taur[j]}) - x] = 
\inf\limits_{x\in\R}\ex[\Phi^{*}(x + Y^{r}_{\taur}) - x].
$
\end{enumerate}
\end{corollary}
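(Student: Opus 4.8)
The plan is to derive both statements from the pointwise convergence $Y^{r}_{\taur[j]}\to Y^{r}_{\taur}$ on $\Omega\times[0,1]$ established in Lemma \ref{discretize}(i), together with an integrable majorant supplied by the hypothesis $\sup_{t\in[0,T]}Y_{t}\in H^{\Phi^{*}}$. Statement (i) will be a routine dominated-convergence argument; statement (ii) will then follow from (i) plus a coercivity estimate, uniform in $j$, that confines all relevant $x$-minimizers to one fixed compact interval.

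For (i): fix $x_{j}\to x$ and choose $C\ge\sup_{j}|x_{j}|$. Continuity of $\Phi^{*}$ (Lemma \ref{optimizedcertaintyequivalent}) and Lemma \ref{discretize}(i) give $\Phi^{*}(x_{j}+Y^{r}_{\taur[j]})-x_{j}\to\Phi^{*}(x+Y^{r}_{\taur})-x$ pointwise on $\Omega\times[0,1]$. Writing $Y^{*}:=\sup_{t\in[0,T]}Y_{t}$, viewed as a function on $\Omega\times[0,1]$, we have $0\le Y^{r}_{\taur[j]}\le Y^{*}$, so monotonicity of $\Phi^{*}$ and the bound $|\Phi^{*}(y)|\le\Phi^{*}(|y|)$ from Remark \ref{relaxed2} yield $|\Phi^{*}(x_{j}+Y^{r}_{\taur[j]})-x_{j}|\le\Phi^{*}(C+Y^{*})+C$. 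Since the Orlicz heart $H^{\Phi^{*}}$ is a vector space containing the constants, $C+Y^{*}\in H^{\Phi^{*}}$ and hence $\Phi^{*}(C+Y^{*})\in L^{1}$; dominated convergence then gives (i).

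For (ii): put $g_{j}(x):=\ex[\Phi^{*}(x+Y^{r}_{\taur[j]})-x]$ and $g(x):=\ex[\Phi^{*}(x+Y^{r}_{\taur})-x]$. Both are finite (same majorant) and convex in $x$, and since $Y^{r}_{\taur[j]},Y^{r}_{\taur}\ge 0$ with $\Phi^{*}$ nondecreasing, $g_{j}(x)\ge\Phi^{*}(x)-x$ and $g(x)\ge\Phi^{*}(x)-x$; as $\Phi^{*}(x)-x\to\infty$ for $x\to\pm\infty$ (Lemma \ref{optimizedcertaintyequivalent}), each of these convex functions attains its infimum. Applying (i) with the constant sequence $x_{j}\equiv x^{*}$, where $x^{*}$ minimizes $g$, gives $\limsup_{j}\inf_{x}g_{j}(x)\le g(x^{*})=\inf_{x}g(x)$. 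For the reverse inequality set $\beta:=\inf_{x}g(x)+1$ and $I_{\beta}:=\{x\in\R\mid\Phi^{*}(x)-x\le\beta\}$, a compact interval containing $x^{*}$; pick a subsequence $(j_{k})$ with $\inf_{x}g_{j_{k}}(x)\to\liminf_{j}\inf_{x}g_{j}(x)$, let $\xi_{k}$ minimize $g_{j_{k}}$, and note $\xi_{k}\in I_{\beta}$ for $k$ large, since $\Phi^{*}(\xi_{k})-\xi_{k}\le g_{j_{k}}(\xi_{k})=\inf_{x}g_{j_{k}}(x)<\beta$ by the $\limsup$ bound already obtained. Passing to a convergent sub-subsequence $\xi_{k}\to\bar x\in I_{\beta}$ and running the argument of (i) once more along it, $g_{j_{k}}(\xi_{k})\to g(\bar x)\ge\inf_{x}g(x)$, whence $\liminf_{j}\inf_{x}g_{j}(x)\ge\inf_{x}g(x)$; combined with the $\limsup$ bound this proves (ii).

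The only genuinely delicate point is making the localization of $\inf_{x}$ uniform in $j$: this is secured by the single inequality $g_{j}(x)\ge\Phi^{*}(x)-x$, valid for every $j$, which traps all optimal (and near-optimal) values of $x$ in the common compact set $I_{\beta}$ and allows a convergent sequence of such minimizers to be fed into (i). Everything else — the dominating function, and the continuity and monotonicity of $\Phi^{*}$ — is already provided by Remark \ref{relaxed2}, Lemma \ref{optimizedcertaintyequivalent}, and the elementary vector-space property of the Orlicz heart.
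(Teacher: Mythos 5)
Your proposal is correct, and part (i) coincides with the paper's argument: pointwise convergence from Lemma \ref{discretize}(i), continuity of $\Phi^{*}$, and dominated convergence with the majorant $\Phi^{*}(C+\sup_{t}Y_{t})+C$ coming from $\sup_{t}Y_{t}\in H^{\Phi^{*}}$ (a standing assumption where the corollary is applied, exactly as in the paper). For part (ii) you take a slightly different, more self-contained route: the paper observes that (i) is precisely continuous convergence of the convex functions $h(\taur[j],\cdot)$ to $h(\taur,\cdot)$, hence epi-convergence, and then cites Theorem 7.31 of Rockafellar--Wets together with the localization \eqref{infcompactneu}--\eqref{infcompact} to conclude $\inf_{x}h(\taur[j],x)\to\inf_{x}h(\taur,x)$; you instead reprove this convergence of infima by hand, using the same uniform coercivity $g_{j}(x)\ge\Phi^{*}(x)-x$ to trap all minimizers in one compact level set $I_{\beta}$, extracting a convergent subsequence of minimizers, and feeding it back into (i). The two arguments rest on the same mechanism (continuous convergence plus uniform level-boundedness); the paper's version is shorter because it delegates the compactness bookkeeping to a standard epi-convergence theorem, while yours is elementary and needs no external variational-analysis result. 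Your subsequence step is legitimate as written (one can always embed the subsequence of minimizers into a full sequence converging to $\bar x$ before invoking (i)), so there is no gap.
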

\begin{proof}
Let the mapping $h:\cTr\times\R$ be defined by $h(\taur,x) = \ex[\Phi^{*}(x + Y^{r}_{\taur}) - x].$ For every $\taur\in\cTr,$ the mapping $h(\taur,\cdot)$ is convex and thus continuous. Recalling that $\sup\limits_{t\geq 0} Y_{t}\in H^{\Phi^{*}}\OFP,$ a direct application of Lemma \ref{discretize}, (i), along with the dominated convergence theorem yields part (i). 
 Using terminology from \cite{RockafellarWets1998} (see also \cite{WittingMueller-Funk1995}), 
statement (i) implies that the sequence 
$(h(\taur[j],\cdot))_{j\in\N}$ of continuous mappings $h(\taur[j],\cdot)$ epi-converges to the continuous mapping $h(\taur,\cdot).$ Moreover, in view of 
(\ref{infcompactneu}) and (\ref{infcompact}), we may conclude 
$$
\lim\limits_{j\to\infty}\inf\limits_{x\in\R}h(\taur[j],x) = \inf\limits_{x\in\R}h(\taur,x),
$$
drawing on Theorem 7.31 in \cite{RockafellarWets1998} (see also Satz B 2.18 in 
\cite{WittingMueller-Funk1995}).

\end{proof}
The following result provides the remaining missing link to prove Proposition \ref{derandomize2}.
\begin{lemma}
\label{missinglink}
Let (\ref{Annahmen Young function}) be fulfilled. Furthermore, let $\taur\in\cTr,$ and let us for any $j\in\N$ denote by $\cT[j]$ the set containing all nonrandomized stopping times from $\cT$ taking  values in 
$\{(k/2^{j})\wedge T\mid k\in\N\}$ with probability \(1.\) If $(\Omega,\cF_{t},\pr|_{\cF_{t}})$ is atomless with countably generated $\cF_{t}$ for every $t > 0,$ and if $Y_{t}\in H^{\Phi^{*}}$ for $t > 0,$ then
\begin{eqnarray}
\label{tauj_minimax}
\inf_{x\in\R}\ex[\Phi^{*}(x + Y^{r}_{\taur[j]}) - x] 
\leq 
\sup_{\tau\in\cT[j]}\inf_{x\in\R}\ex[\Phi^{*}(x + Y_{\tau}) - x].
\end{eqnarray}
\end{lemma}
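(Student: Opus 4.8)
The plan is to \emph{derandomize} $\taur[j]$: to exhibit one non-randomized $\tau\in\cT[j]$ with $\inf_{x\in\R}\ex[\Phi^{*}(x+Y_{\tau})-x]\geq\inf_{x\in\R}\ex[\Phi^{*}(x+Y^{r}_{\taur[j]})-x]$, which is exactly \eqref{tauj_minimax}. Since $\taur[j]\leq T$, Lemma~\ref{discretize} shows that $\taur[j]$ takes values in the \emph{finite} grid $G_{j}=\{t_{1j},\dots,t_{Nj}\}$, $t_{kj}=(k/2^{j})\wedge T$, $t_{Nj}=T$; the discretization rounds every value up to the next grid point $\geq t_{1j}$ (see the $k=1$ term in Lemma~\ref{discretize}(iii)), so $\taur[j]$ stops no earlier than $t_{1j}>0$, and likewise every $\tau\in\cT[j]$ takes values in $G_{j}$. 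Hence only the $\sigma$-algebras $\cF_{t_{1j}},\dots,\cF_{t_{Nj}}$ intervene, and all of them are atomless and countably generated by hypothesis. Writing $\pi_{k}:=K_{\taur[j]}(\cdot,\{t_{kj}\})$ for the pointwise stopping probabilities of $\taur[j]$ (nonnegative, $\cF_{t_{kj}}$-measurable, $\sum_{k}\pi_{k}=1$), Lemma~\ref{stopped distribution} together with Fubini gives, for every $x\in\R$,
$$
h(x):=\ex\!\left[\Phi^{*}(x+Y^{r}_{\taur[j]})-x\right]=\ex\!\left[\sum_{k=1}^{N}\pi_{k}\,\Phi^{*}(x+Y_{t_{kj}})\right]-x ;
$$
since $G_{j}$ is finite and $Y_{t_{kj}}\in H^{\Phi^{*}}$, the variable $Y^{r}_{\taur[j]}$ and all random variables below lie in $H^{\Phi^{*}}$. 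The map $h$ is convex, finite and, by \eqref{infcompactneu}--\eqref{infcompact}, coercive, so it attains its infimum $v:=\inf_{x}h(x)$ at some $x^{*}\in\R$; differentiating under the expectation (legitimate by monotonicity of $\Phi^{*\prime}$ and the above integrability), $x^{*}$ is characterised by $\ex[\Phi^{*\prime}_{-}(x^{*}+Y^{r}_{\taur[j]})]\leq1\leq\ex[\Phi^{*\prime}_{+}(x^{*}+Y^{r}_{\taur[j]})]$, where $\Phi^{*\prime}_{-},\Phi^{*\prime}_{+}$ are the left- and right-derivatives of $\Phi^{*}$.

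The heart of the argument is to produce $\tau\in\cT[j]$ that agrees with $\taur[j]$ in the \emph{finitely many} numbers $\ex[\Phi^{*}(x^{*}+Y^{r}_{\taur[j]})]$, $\ex[\Phi^{*\prime}_{-}(x^{*}+Y^{r}_{\taur[j]})]$ and $\ex[\Phi^{*\prime}_{+}(x^{*}+Y^{r}_{\taur[j]})]$. I would carry this out as a purification argument by backward induction over $k=N,N-1,\dots,1$: once the randomizations at $t_{(k+1)j},\dots,t_{Nj}$ have been replaced by pure stopping rules, the event $\{\tau>t_{(k-1)j}\}$ is $\cF_{t_{(k-1)j}}$-measurable and, on it, the continuation values of the three functionals are $\cF_{t_{kj}}$-measurable; one then replaces the $\cF_{t_{kj}}$-measurable conditional stopping probability $r_{k}$ of $\taur[j]$ at $t_{kj}$ by an $\cF_{t_{kj}}$-measurable event $B_{k}\subseteq\{\tau>t_{(k-1)j}\}$ having, conditionally on $\cF_{t_{(k-1)j}}$, the same expectation as $r_{k}$ \emph{and} the same expectations of the three functionals of the payoff received from $t_{kj}$ onwards. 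Because $\cF_{t_{kj}}$ is atomless and countably generated, the $\R^{4}$-valued conditional vector measure $B\mapsto\ex[\eins_{B}(1,f_{1},f_{2},f_{3})\mid\cF_{t_{(k-1)j}}]$ on $\cF_{t_{kj}}$ has conditionally convex range (a conditional form of Lyapunov's convexity theorem, equivalently a Dvoretzky--Wald--Wolfowitz purification), so the ``fractional'' point attained by $r_{k}$ is attained by an indicator $\eins_{B_{k}}$; at $k=1$ the conditioning is trivial and this is the classical Lyapunov theorem on the atomless space $(\Omega,\cF_{t_{1j}},\pr|_{\cF_{t_{1j}}})$. Making this precise — the measurable selection of the $B_{k}$ and the consistency of the resulting ``mass flow'' so that $\tau$ is a genuine element of $\cT[j]$ — is where the hypotheses that $\cF_{t}$ be atomless and countably generated for $t>0$ enter essentially, and it is the step I expect to require the most work.

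Granting this, the resulting $\tau\in\cT[j]$ satisfies $\ex[\Phi^{*}(x^{*}+Y_{\tau})-x^{*}]=\ex[\Phi^{*}(x^{*}+Y^{r}_{\taur[j]})-x^{*}]=v$, while $\ex[\Phi^{*\prime}_{\pm}(x^{*}+Y_{\tau})]$ equals the corresponding expectation for $Y^{r}_{\taur[j]}$; hence $0$ lies in the subdifferential at $x^{*}$ of the convex map $x\mapsto\ex[\Phi^{*}(x+Y_{\tau})-x]$, which therefore attains its minimum there, so $\inf_{x}\ex[\Phi^{*}(x+Y_{\tau})-x]=v$. Consequently $\sup_{\tau\in\cT[j]}\inf_{x}\ex[\Phi^{*}(x+Y_{\tau})-x]\geq v=\inf_{x}\ex[\Phi^{*}(x+Y^{r}_{\taur[j]})-x]$, which is \eqref{tauj_minimax}.
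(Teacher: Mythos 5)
Your strategy is sound and, conditional on the purification step you flag, it does prove \eqref{tauj_minimax}; but it is a genuinely different route from the paper's. The paper does \emph{not} purify exactly: starting from the same kernel masses $Z_{k}=K_{\taur}(\cdot,]t_{(k-1)j},t_{kj}])$ it invokes Corollary \ref{Dichtheit}, Lemma \ref{BanachAlaoglu} and Proposition \ref{angelic} to get a \emph{sequence} of adapted partitions $(B_{1n},\dots,B_{k_{j}n})$ with $\ex[\eins_{B_{kn}}g]\to\ex[Z_{k}g]$, hence nonrandomized $\tau_{n}=\sum_{k}t_{kj}\eins_{B_{kn}}\in\cT[j]$ whose distribution functions converge to $F_{Y^{r}_{\taur[j]}}$; it then needs Fatou (via \eqref{rewrite}) to get the pointwise bound $h(\taur[j],x)\leq\liminf_{n}h(\tau_{n},x)$, and an equicontinuity/Arzel\`a--Ascoli argument on the compact interval $I_{\beta}$ of \eqref{infcompactneu}--\eqref{infcompact} to convert pointwise control into control of the infima up to an arbitrary $\varepsilon$. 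You instead fix a minimizer $x^{*}$ of the randomized objective, match the three expectations $\ex[\Phi^{*}(x^{*}+\cdot)]$, $\ex[\Phi^{*\prime}_{\pm}(x^{*}+\cdot)]$ exactly by a single adapted partition, and conclude by the subdifferential criterion that the same $x^{*}$ minimizes $x\mapsto\ex[\Phi^{*}(x+Y_{\tau})-x]$; this bypasses Fatou, equicontinuity and Arzel\`a--Ascoli entirely and even produces a $\tau\in\cT[j]$ whose value equals that of $\taur[j]$. The step you describe as requiring the most work is, however, not something you need to build from scratch, and in the conditional form you envisage it is more than is needed: you do not need per-step matching of conditional expectations given $\cF_{t_{(k-1)j}}$ (with its measurable-selection issues), only the unconditional identities $\ex[\eins_{B_{k}}\varphi]=\ex[\pi_{k}\varphi]$ for the three $\cF_{t_{kj}}$-measurable, integrable test functions $\varphi\in\{\Phi^{*}(x^{*}+Y_{t_{kj}}),\Phi^{*\prime}_{-}(x^{*}+Y_{t_{kj}}),\Phi^{*\prime}_{+}(x^{*}+Y_{t_{kj}})\}$; this is precisely Proposition \ref{KreinMilman} of the paper's Appendix \ref{AppendixC} applied with these finite sets $M_{k}$, which are thin because the spaces $(\Omega,\cF_{t_{kj}},\pr|_{\cF_{t_{kj}}})$ are atomless (Kingman--Robertson), and the adaptedness of the partition is exactly what the backward induction in Proposition \ref{KreinMilmanallgemein} delivers. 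In short: your argument is correct once the purification is quoted in this aggregate form, and it trades the paper's approximation-plus-compactness analysis for a finite-dimensional Lyapunov/Dvoretzky--Wald--Wolfowitz matching plus elementary convex analysis; the paper's version, on the other hand, only ever needs the weak* density statement and so avoids pinning down the minimizer $x^{*}$ and the derivative interchange altogether.
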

\begin{proof}
Let $k_{j} := \min\{k\in\N\mid k/2^{j}\geq T\}.$ If $k_{j} = 1,$ then the statement of 
Lemma \ref{missinglink} is obvious. So let us assume $k_{j}\geq 2.$ Set $t_{kj} := (k/2^{j})\wedge T$ and let the mapping $h:\cTr\times\R\rightarrow\R$ be defined via $h(\taur,x) := \ex[\Phi^{*}(x + Y_{\taur}) - x].$
We already know from Lemma \ref{discretize} that 
\begin{equation}
\label{eins}
\hspace*{-0.25cm}F_{Y^{r}_{\taur[j]}}(x) = \ex\left[\widehat{Y}_{t_{1j}}^{x} K_{\taur}(\cdot,[0,t_{1j}])\right] 
+ \sum\limits_{k=2}^{k_{j}}\ex\left[\widehat{Y}_{t_{kj}}^{x} K_{\taur}(\cdot,]t_{(k-1)j},t_{kj}])\right]
\end{equation}
holds for any $x\in\R.$ Here $\widehat{Y}^{x}_{t} := \eins_{]-\infty,x]}\circ Y_{t}$ for $t\in [0,T].$ 
Next
$$
Z_{k} := 
\bcswitch
K_{\taur}(\cdot,[0, t_{1j}])&k = 1\\
K_{\taur}(\cdot,]t_{(k-1) j},t_{kj}])&k\in\{2,...,k_{j}\}
\ecswitch
$$
defines a random variable on $(\Omega,\cF_{t_{kj}},\pr_{|\cF_{t_{kj}}})$ which satisfies 
$0\leq Z_{k}\leq 1$ $\pr-$a.s.. In addition, we may observe that $\sum_{k=1}^{k_{j}}Z_{k} = 1$ holds $\pr-$a.s.. Since the probability spaces 
$\OFPk$ $(k=1,\dots,k_{j})$ are assumed to be atomless and countably generated, we may draw on Corollary \ref{Dichtheit} (cf. Appendix \ref{AppendixC}) along with Lemma 
\ref{BanachAlaoglu} (cf. Appendix \ref{AppendixC}) and Proposition \ref{angelic} (cf. Appendix \ref{AppendixA}) to find a sequence 
$\big((B_{1n},\dots,B_{k_{j}n})\big)_{n\in\N}$ in $\Timesk\cF_{t_{kj}}$ such that $B_{1n},\dots,B_{k_{j}n}$ is a partition of $\Omega$ for $n\in\N,$ and
$$
\lim_{n\to\infty}\ex\left[\eins_{B_{kn}}\cdot g\right] = \ex\left[Z_{k}\cdot g\right]
$$
holds for $g\in L^{1}(\Omega,\cF_{t_{kj}},\pr_{|\cF_{t_{kj}}})$ and $k\in\{1,\dots,k_{j}\}.$ In particular we have by (\ref{eins}) 
$$
F_{Y^{r}_{\taur[j]}}(x) =
\lim\limits_{n\to\infty} \sum\limits_{k=1}^{k_{j}}\ex\left[\widehat{Y}_{t_{kj}}^{x} 
\eins_{B_{kn}}\right]\, \mbox{for}\, x\in\R.
$$
So by Fatou's lemma along with \eqref{rewrite},
\begin{equation}
\label{zwei}
h(\taur[j],x) \leq 
\liminf\limits_{n\to\infty} 
\int_{0}^{\infty}\Phi^{*'}(x + z)\,\Big( 1 -
\sum\limits_{k=1}^{k_{j}}\ex\left[\widehat{Y}_{t_{kj}}^{z} 
\eins_{B_{kn}}\right]\Big)\, dz + \Phi^{*}(x) - x
\end{equation}
for $x\in\R.$ Here $\Phi^{*'}$ denotes the right-sided derivative of $\Phi^{*}.$ Next we can define a sequence $(\tau_{n})_{n\in\N}$ of nonrandomized stopping times from $\cT[j]$ via
$$
\tau_{n} := \sum\limits_{k=1}^{k_{j}}t_{kj}\, \eins_{B_{kn}}.
$$
The distribution function $F_{Y_{\tau_{n}}}$ of $Y_{\tau_{n}}$ satisfies
$$
F_{Y_{\tau_{n}}}(x) = 
\sum\limits_{k=1}^{k_{j}}\ex\left[\widehat{Y}_{t_{kj}}^{x} 
\eins_{B_{kn}}\right]\, \mbox{for}\, x\in\R
$$
so that by (\ref{rewrite})
\begin{equation}
\label{drei}
h(\tau_{n},x) = \int_{0}^{\infty}\Phi^{*'}(x + z)\,
\Big( 1 -\sum\limits_{k=1}^{k_{j}}\ex\left[\widehat{Y}_{t_{kj}}^{z} 
\eins_{B_{kn}}\right]\Big)\, dz + \Phi^{*}(x) - x
\end{equation}
for $x\in\R.$ The crucial point now is to show that
\begin{description}
\item [($\star$)] $\HHH := \big\{h(\tau,\cdot)_{|I_{\beta}}\mid \tau\in\cT[j]\big\}$ is equicontinuous,
\end{description}
where $I_{\beta}$ is the interval defined in \eqref{infcompactneu}. Note that $\big(h(\tau_{n},\cdot)|_{I_{\beta}}\big)_{n\in\N}$ is a sequence in $\HHH,$ and that $\big\{h(\tau,x)\mid \tau\in\cT[j]\big\}$ is bounded for every $x\in\R.$ Thus, in view of (\ref{convex}) the statement (\(\star\)) together with Arzela-Ascoli theorem implies that  we can find a subsequence 
$\big(h(\tau_{i(n)},\cdot)|_{I_{\beta}}\big)_{n\in\N}$ such that 
$$
\lim\limits_{n\to\infty}\sup_{x\in I_{\beta}}|h(\tau_{i(n)},x) - g(x)| = 0
$$
for some continuous mapping $g: I_{\beta}\rightarrow\R.$ Hence, we may conclude from (\ref{drei}) and (\ref{zwei})
\begin{equation}
\label{vier}
g(x) = \liminf\limits_{n\to\infty} h(\tau_{i(n)},x)\geq h(\taur[j],x)\, \mbox{ for }\, x\in I_{\beta}.
\end{equation}
For any $\varepsilon > 0,$ we may find some $n_{0}\in\N$ such that 
$\sup\limits_{x\in I_{\beta}}|h(\tau_{i(n_{0})},x) - g(x)| < \varepsilon,$ which implies by (\ref{vier}) together with (\ref{infcompact}):
\begin{eqnarray*}
\inf\limits_{x\in\R}h(\tau_{i(n_{0})},x) 
\stackrel{(\ref{infcompact})}{=} 
\inf\limits_{x\in I_{\beta}}h(\tau_{i(n_{0})},x)
&\stackrel{(\ref{vier})}{\geq}& 
\inf\limits_{x\in I_{\beta}}h(\taur[j],x) - \varepsilon\\
&\geq& 
\inf\limits_{x\in\R }h(\taur[j],x) - \varepsilon
\end{eqnarray*}
and \eqref{tauj_minimax} is proved.
Therefore it remains to show the statement (\(\star\)).

\paragraph{Proof of (\(\star\))}
First, observe that for $\tau\in\cT[j]$ and real numbers $x < y,$ the 
inequality $h(\tau,x) + x \leq h(\tau,y) + y$ holds. Hence
\begin{eqnarray}
\label{fuenf}
&&
|h(\tau,x) - h(\tau,y)|\nonumber\\ 
&\leq& 
\ex\left[\Phi^{*}(y + Y_{\tau})\right] - \ex\left[\Phi^{*}(x + Y_{\tau})\right] + 
|x - y|\nonumber\\ 
&=& 
\sum\limits_{k = 1}^{k_{j}}
\ex\Big[\eins_{\{t_{kj}\}}\circ \tau\, 
\underbrace{\Big(\Phi^{*}\big(y + Y_{t_{kj}}\big) - 
\Phi^{*}\big(x + Y_{t_{kj}}\big)\Big)}_{\geq 0}\Big] + |x - y|
\nonumber\\
&\leq&
\sum\limits_{k = 1}^{k_{j}}
\ex\Big[ 
\Phi^{*}\big(y + Y_{t_{kj}}\big) - 
\Phi^{*}\big(x + Y_{t_{kj}}\big)\Big] + |x - y|
\nonumber\\
&\leq& 
\sum\limits_{k = 1}^{k_{j}}|h(t_{kj},x) - h(t_{kj},y)| + (k_{j} + 1)\, |x-y|
\end{eqnarray} 
By convexity, the mappings $h(t_{kj},\cdot),$ $k=1,...,k_{j},$ are also locally Lipschitz continuous. Thus, in view of (\ref{fuenf}),  it is easy to verify that $\HHH$ is equicontinuous at every $x\in I_{\beta}.$ This proves  (\(\star\)).
\end{proof}
Now, we are ready to prove Proposition \ref{derandomize2}. By (\ref{Hilfsstoppproblem}) we have 
\begin{eqnarray*}
\sup\limits_{\tau^{r}\in\cT^{r}}\inf_{x\in\R}\ex[\Phi^{*}(x + Y^{r}_{\taur}) - x]
&\geq&
\sup\limits_{\tau\in\cT}\inf_{x\in\R}\ex[\Phi^{*}(x + Y_{\tau}) - x]\\
&=& 
\sup\limits_{\tau\in\cT}\sup_{\Q\in\cQ_{\Phi,0}}\left(\ex_{\Q}[Y_{\tau}]- \ex\left[\Phi\left(\frac{d\pr}{d\Q}\right)\right]\right) 
\end{eqnarray*}
Moreover, due to (ii) of Corollary \ref{discretizedstop} and Lemma \ref{missinglink} we  conclude that for any 
$\taur\in\cTr$ 
\begin{eqnarray*}
\inf_{x\in\R}\ex[\Phi^{*}(x + Y^{r}_{\taur}) - x] 
&=& 
\lim_{j\to\infty}\inf_{x\in\R}\ex[\Phi^{*}(x + Y^{r}_{\taur[j]}) - x]\\ 
&\leq& 
\sup_{\tau\in\cT}\inf_{x\in\R}\ex[\Phi^{*}(x + Y_{\tau}) - x].
\end{eqnarray*}
Thus  Proposition \ref{derandomize2} is proved.
\hfill$\Box$

 \subsection{Proof of Theorem \ref{new_representation}}
\label{beweis new dual representation}
First, we get from  Propositions \ref{minimax} and \ref{derandomize2} 
\begin{eqnarray*}
\inf_{x\in\R}\sup\limits_{\tau^{r}\in\cT^{r}}\ex[\Phi^{*}(x + Y^{r}_{\tau^r}) - x]
&=& 
\sup\limits_{\tau^{r}\in\cT^{r}}\inf_{x\in\R}\ex[\Phi^{*}(x + Y^{r}_{{\taur}}) - x]\\
&=&
\sup\limits_{\tau\in\cT}\inf_{x\in\R}\ex[\Phi^{*}(x + Y_{\tau}) - x]. 
\end{eqnarray*}
Furthermore, 
\begin{eqnarray*}
\inf_{x\in\R}\sup\limits_{\tau^{r}\in\cT^{r}}\ex[\Phi^{*}(x + Y^{r}_{\tau^r}) - x]
&\geq& 
\inf_{x\in\R}\sup\limits_{\tau\in\cT}\ex[\Phi^{*}(x + Y_{\tau}) - x]\\
&\geq& 
\sup\limits_{\tau\in\cT}\inf_{x\in\R}\ex[\Phi^{*}(x + Y_{\tau}) - x].
\end{eqnarray*}
Thus 
$$
\sup\limits_{\tau\in\cT}\inf_{x\in\R}\ex[\Phi^{*}(x + Y_{\tau}) - x] = 
\inf_{x\in\R}\sup\limits_{\tau\in\cT}\ex[\Phi^{*}(x + Y_{\tau}) - x]
$$
which completes the proof of Theorem \ref{new_representation}. 
\hfill$\Box$

\subsection{Proof of Proposition \ref{saddle-point}}
\label{proof of saddle-point}
Just simplifying notation, we assume that $\TTT = \{0,1,\dots,T\}$ with $T$ being a positive integer.   
By \eqref{discreteminimax} we have 
$$
\sup\limits_{\tau\in\cT_{\TTT}}\inf_{x\in\R}\ex[\Phi^{*}(x + Y_{\tau}) - x] = \inf_{x\in\R}\sup\limits_{\tau\in\cT_{\TTT}}\ex[\Phi^{*}(x + Y_{\tau}) - x].
$$
So it is left to show that there exists a solution $\tau^{*}$ of the maximization problem \eqref{primalproblem} and a solution $x^{*}$ of the minimization problem \eqref{dualproblem}. Indeed such a pair $(\tau^{*},x^{*})$  would be as required.
\medskip
 
In view of \eqref{infcompact}, we may find some compact interval $I$ of $\R$ such that 
\begin{equation}
\label{Hilfsproblem}
\sup_{\tau\in\cT_{\mathbb{T}}}\inf_{x\in\R}\ex\left[\Phi^{*}(x + Y_{\tau}) - x\right] = 
\sup_{\tau\in\cT_{\mathbb{T}}}\inf_{x\in I}\ex\left[\Phi^{*}(x + Y_{\tau}) - x\right].
\end{equation}
Let $\cC(I)$ denote the space of continuous real-valued mappings on $I.$ This space will be equipped with the sup-norm $\|\cdot\|_{\infty},$ whereas the product $\cC(I)^{T}$ is viewed to be endowed with the norm $\|\cdot\|_{\infty,T},$ defined by 
$
\|(f_{1},\dots,f_{T})\|_{\infty,T}:= \sum_{t=1}^{T}\|f_{t}\|_{\infty}.
$
The key in solving the maximization problem \eqref{primalproblem} is to show that 
\begin{equation}
\label{MengeK}
K := \left\{(G_{1,A_{1}},\dots G_{T,A_{n}})\mid (A_{1},\dots,A_{T})\in\cP_{T}\right\}
\end{equation} 
is a weakly compact subset of $\cC(I)^{T}$ w.r.t. the norm $\|\cdot\|_{\infty,T}.$
Here $\cP_{T}$ stands for the set of all $(A_{1},\dots,A_{T})$ satisfying $A_{t}\in\cF_{t}$ for $t\in\{1,\dots,T\}$ as well as $\pr(A_{t}\cap A_{s}) = 0$ for $t\neq s,$ and 
$\pr(\cup_{t = 1}^{T} A_{t}) = 1.$ Furthermore, define
$$
G_{t,A_{t}}: I\rightarrow\R,~x\mapsto \ex\left[\eins_{A_{t}}\cdot
\left(\Phi^{*}(x + Y_{t}) - x\right)\right]\quad\mbox{for}~t\in\{1,\dots,T\},~A_{t}\in\cF_{t}.
$$ 
Notice that any mapping $G_{t,A_{t}}$ is extendable to a real-valued convex function on \(\R\), and therefore also continuous.
\par
Before proceeding, we need some further notation, namely $\cP_{T}^{\infty}$ denoting the set of all $(f_{1},\dots,f_{T})$ satisfying $f_{t}\in L^{\infty}(\Omega,\cF_{t},\pr|_{\cF_{t}})$ with $f_{t}\geq 0~\pr-$a.s. for $t\in\{1,\dots,T\},$ and $\sum_{t=1}^{T}f_{t} = 1~\pr-$a.s.. Obviously, the subset $\{(\eins_{A_{1}},\dots,\eins_{A_{T}})\mid (A_{1},\dots,A_{T})\in\cP_{T}\}$ consists of extreme points of $\cP_{T}^{\infty}.$ Any 
$f_{t}\in L^{\infty}(\Omega,\cF_{t},\pr|_{\cF_{t}})$ may be associated with the mapping
$$
H_{t,f_{t}}: I\rightarrow\R,~x\mapsto \ex\left[f_{t}\cdot\left(\Phi^{*}(x + Y_{t}) - x\right)\right]\quad(t \in\{1,\dots,T\}).
$$
It is extendable to a real-valued convex function on \(\R\), and thus also continuous. Hence, the mapping 
$$
\Lambda: \TimestT L^{\infty}(\Omega,\cF_{t},\pr|_{\cF_{t}})\rightarrow \cC(I)^{T},~
(f_{1},\dots,f_{T})\mapsto (H_{1,f_{1}},\dots,H_{T,f_{T}})
$$
is well-defined, and obviously linear. In addition it satisfies the following convenient continuity property.
\begin{lemma}
\label{KompaktheitPartI}
Let $\TimestT \sigma(L^{\infty}_{t},L^{1}_{t})$ be the product topology of 
$\sigma(L^{\infty}_{t},L^{1}_{t})$ $(t = 1,\dots,T)$ on $\TimestT L^{\infty}(\Omega,\cF_{t},\pr|_{\cF_{t}}),$
where $\sigma(L^{\infty}_{t},L^{1}_{t})$ denotes the weak* topology on 
$L^{\infty}(\Omega,\cF_{t},\pr|_{\cF_{t}}).$ 

Then, $\cP_{T}^{\infty}$ is compact w.r.t. $\TimestT \sigma(L^{\infty}_{t},L^{1}_{t}),$ and the mapping $\Lambda$ is continuous w.r.t. $\TimestT \sigma(L^{\infty}_{t},L^{1}_{t})$ and the weak topology induced by $\|\cdot\|_{\infty,T}.$ In particular 
the image $\Lambda(\cP_{T}^{\infty})$ of $\cP_{T}^{\infty}$ under $\Lambda$ is weakly compact w.r.t. $\|\cdot\|_{\infty,T}.$
\end{lemma}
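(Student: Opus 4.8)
The plan is to establish the three assertions in turn: (a) $\cP_{T}^{\infty}$ is compact for the product of the weak* topologies; (b) $\Lambda$ is continuous into $\cC(I)^{T}$ endowed with its weak topology; and (c) conclude that $\Lambda(\cP_{T}^{\infty})$ is weakly compact, being the continuous image of a compact set.

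For (a), I would first note that every $(f_{1},\dots,f_{T})\in\cP_{T}^{\infty}$ satisfies $0\le f_{t}\le 1$ $\pr$-a.s., so $\cP_{T}^{\infty}$ is contained in the product of the closed unit balls $B_{t}$ of $L^{\infty}(\Omega,\cF_{t},\pr|_{\cF_{t}})$. Since $L^{\infty}(\Omega,\cF_{t},\pr|_{\cF_{t}})$ is the dual of $L^{1}(\Omega,\cF_{t},\pr|_{\cF_{t}})$, each $B_{t}$ is $\sigma(L^{\infty}_{t},L^{1}_{t})$-compact by Banach--Alaoglu (Lemma \ref{BanachAlaoglu}), hence $\prod_{t}B_{t}$ is compact for the product topology by Tychonoff's theorem, and it only remains to check that the two defining constraints of $\cP_{T}^{\infty}$ cut out a closed subset. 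The constraint $f_{t}\ge 0$ $\pr$-a.s.\ is $\sigma(L^{\infty}_{t},L^{1}_{t})$-closed, being $\bigcap_{g\in L^{1}_{t},\,g\ge 0}\{f_{t}:\ex[f_{t}g]\ge 0\}$. For the constraint $\sum_{t}f_{t}=1$ $\pr$-a.s.\ the point to exploit is that $\cF_{t}\subseteq\cF_{T}$: for each $A\in\cF_{T}$ the functional $(f_{1},\dots,f_{T})\mapsto\sum_{t}\ex[f_{t}\eins_{A}]=\sum_{t}\ex[f_{t}\,\ex[\eins_{A}\,|\,\cF_{t}]]$ is continuous for the product topology because $\ex[\eins_{A}\,|\,\cF_{t}]\in L^{1}_{t}$, and intersecting the level sets $\{\sum_{t}\ex[f_{t}\eins_{A}]=\pr(A)\}$ over all $A\in\cF_{T}$ exhibits $\{\sum_{t}f_{t}=1\ \pr\text{-a.s.}\}$ as closed. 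Thus $\cP_{T}^{\infty}$ is compact.

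For (b), I would use the Riesz representation $\cC(I)^{*}=\cM(I)$, the space of finite signed Radon measures on the compact interval $I$ from \eqref{Hilfsproblem}, so that $(\cC(I)^{T})^{*}\cong\cM(I)^{T}$ and, the weak topology being the initial topology generated by these functionals, it suffices to show that for every $(\mu_{1},\dots,\mu_{T})\in\cM(I)^{T}$ the map $(f_{1},\dots,f_{T})\mapsto\sum_{t=1}^{T}\int_{I}H_{t,f_{t}}\,d\mu_{t}$ is continuous for the product of the weak* topologies. The decisive step is a Fubini identity: fixing $t$ and setting $g_{t}(\omega):=\int_{I}\bigl(\Phi^{*}(x+Y_{t}(\omega))-x\bigr)\,\mu_{t}(dx)$, one gets $\int_{I}H_{t,f_{t}}\,d\mu_{t}=\ex[f_{t}\,g_{t}]$. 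To justify the interchange of integrals and to check that $g_{t}\in L^{1}(\Omega,\cF_{t},\pr|_{\cF_{t}})$, I would invoke that $\Phi^{*}$ is continuous, convex, nondecreasing with $\Phi^{*}(0)=0$ (Lemma \ref{optimizedcertaintyequivalent}) together with the domination (as in Remark \ref{relaxed2}) $\sup_{x\in I}|\Phi^{*}(x+Y_{t})-x|\le\Phi^{*}(c_{I}+\sup_{s\in[0,T]}Y_{s})+c_{I}$, where $c_{I}:=\sup_{x\in I}|x|$, whose right-hand side is $\pr$-integrable since $\sup_{s}Y_{s}\in H^{\Phi^{*}}$. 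This uniform bound makes Fubini applicable and shows $|g_{t}|$ is integrable, while $g_{t}$ is $\cF_{t}$-measurable as a (Borel) measurable function of $Y_{t}$; hence $f_{t}\mapsto\ex[f_{t}g_{t}]$ is $\sigma(L^{\infty}_{t},L^{1}_{t})$-continuous by the very definition of the weak* topology, and summing over $t$ gives the continuity of $\Lambda$. Part (c) is then immediate from (a) and (b).

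The main obstacle is the Fubini reduction in (b): one has to manufacture a genuine element $g_{t}\in L^{1}_{t}$ out of the whole family $\{\Phi^{*}(\cdot+Y_{t})-x\}_{x\in I}$ of integrands, and this is exactly where the standing integrability hypothesis $\sup_{t\in[0,T]}Y_{t}\in H^{\Phi^{*}}$ is used, via the uniform-in-$x$ domination above. The remainder is standard functional analysis: Banach--Alaoglu, Tychonoff, the Riesz representation theorem, and the reduction of continuity into a weak topology to continuity against each generating functional.
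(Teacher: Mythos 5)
Your proposal is correct and follows essentially the same route as the paper: compactness of $\cP_{T}^{\infty}$ via Banach--Alaoglu (plus Tychonoff and the closedness of the defining constraints, which the paper dismisses as obvious), continuity of $\Lambda$ from the product weak* topology into the weak topology of $\cC(I)^{T}$, and then weak compactness of $\Lambda(\cP_{T}^{\infty})$ as the continuous image of a compact set. The only difference is that where the paper cites Proposition 3.1 of \cite{Edwards1987} for the continuity of $\Lambda$, you supply the underlying argument directly (Riesz representation of $(\cC(I)^{T})'$, Fubini with the uniform dominating bound $\Phi^{*}(c_{I}+\sup_{s}Y_{s})+c_{I}\in L^{1}$ coming from $\sup_{s}Y_{s}\in H^{\Phi^{*}}$, and $\cF_{t}$-measurability of $g_{t}$), which is exactly the kind of reasoning the citation stands in for.
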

\begin{proof}
The continuity of $\Lambda$ follows in nearly the same way as in the proof of Proposition 3.1 from \cite{Edwards1987}. Moreover, $\cP_{T}^{\infty}$ is obviously closed w.r.t. the product topology $\TimestT \sigma(L^{\infty}_{t},L^{1}_{t}),$ and even compact due to Banach-Alaoglu theorem. Then by continuity of $\Lambda,$ the set $\Lambda(\cP_{T}^{\infty})$ is weakly compact w.r.t. $\|\cdot\|_{\infty,T}.$ This completes the proof.
\end{proof}
We need some further preparation to utilize Lemma \ref{KompaktheitPartI}.
\begin{lemma}
\label{thin}
Let $s,t\in\{1,\dots,T\}$ with $t\leq s,$ and let $A\in\cF_{T}.$ If $(\Omega,\cF_{t},\pr|_{\cF_{t}})$ is atomless and if $\left\{\ex\left[\eins_{A}\cdot\Phi^{*}(x + Y_{s})~|~\cF_{t}\right]\mid x\in\R\right\}$ is a thin subset of $L^{1}(\Omega,\cF_{t},\pr|_{\cF_{t}})$, then 
$\left\{\ex\left[\eins_{A}\cdot\left(\Phi^{*}(x + Y_{s}) - x\right)~|~\cF_{t}\right]\mid x\in\R\right\}$ is a thin subset of $L^{1}(\Omega,\cF_{t},\pr|_{\cF_{t}}).$
\end{lemma}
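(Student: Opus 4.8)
The plan is to reduce the statement to the elementary fact that adjoining a single extra element to a thin set preserves thinness over an atomless probability space, and then to settle the latter by a splitting argument.

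First I would use linearity of the conditional expectation to write, for every $x\in\R$,
\[
\ex\left[\eins_{A}\cdot\left(\Phi^{*}(x + Y_{s}) - x\right)\,|\,\cF_{t}\right] = Z_{x} - x\,c,
\]
where $Z_{x} := \ex\left[\eins_{A}\cdot\Phi^{*}(x + Y_{s})\,|\,\cF_{t}\right]$ and $c := \ex\left[\eins_{A}\,|\,\cF_{t}\right] = \pr(A\,|\,\cF_{t})$. Writing $M' := \left\{Z_{x}\mid x\in\R\right\}$ for the thin set supplied by the hypothesis and noting that $c$ is a single fixed element of $L^{1}(\Omega,\cF_{t},\pr|_{\cF_{t}})$, we see that the set $M := \left\{\ex\left[\eins_{A}\cdot\left(\Phi^{*}(x + Y_{s}) - x\right)\,|\,\cF_{t}\right]\mid x\in\R\right\}$ lies in the linear span of $M'\cup\{c\}$. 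Consequently, every $g\in L^{\infty}(\Omega,\cF_{t},\pr|_{\cF_{t}})$ with $\ex[g\cdot Z] = 0$ for all $Z\in M'$ and with $\ex[g\cdot c] = 0$ automatically satisfies $\ex[g\cdot Z] = 0$ for all $Z\in M$. Hence it suffices to prove that $M'\cup\{c\}$ is a thin subset of $L^{1}(\Omega,\cF_{t},\pr|_{\cF_{t}})$.

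To this end I would fix $B\in\cF_{t}$ with $\pr(B) > 0$ and, using that $(\Omega,\cF_{t},\pr|_{\cF_{t}})$ is atomless, split $B$ into disjoint $B_{1},B_{2}\in\cF_{t}$ with $B_{1}\cup B_{2} = B$ and $\pr(B_{1}),\pr(B_{2}) > 0$. Applying the thinness of $M'$ to $B_{1}$ and to $B_{2}$ furnishes, for $i = 1,2$, a nonzero $g_{i}\in L^{\infty}(\Omega,\cF_{t},\pr|_{\cF_{t}})$ vanishing outside $B_{i}$ with $\ex[g_{i}\cdot Z] = 0$ for every $Z\in M'$. Since $g_{1}$ and $g_{2}$ have disjoint supports, $g := \alpha g_{1} + \beta g_{2}$ is a nonzero element of $L^{\infty}$ vanishing outside $B$ for every $(\alpha,\beta)\neq(0,0)$, and it still annihilates $M'$; choosing $(\alpha,\beta)\neq(0,0)$ to solve the single homogeneous linear equation $\alpha\,\ex[g_{1}\cdot c] + \beta\,\ex[g_{2}\cdot c] = 0$ then yields a nonzero test function $g$, vanishing outside $B$, with $\ex[g\cdot Z] = 0$ for all $Z\in M'\cup\{c\}$. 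Thus $M'\cup\{c\}$ is thin, and therefore so is $M$.

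I do not expect any serious obstacle here; the proof is essentially routine once the decomposition $Z_{x}-xc$ is observed. The one point requiring care is purely bookkeeping: ``thin'' is being used relative to the space $(\Omega,\cF_{t},\pr|_{\cF_{t}})$, so the splitting of $B$ and all the test functions $g,g_{1},g_{2}$ must be taken $\cF_{t}$-measurable. Alternatively, one may simply invoke the general principle that the union of a thin set with a finite-dimensional subspace of $L^{1}$ over an atomless probability space is again thin (cf.\ \cite{KingmanRobertson1968} and \cite[Proposition~2.6]{Anantharaman2012}), applied here to the one-dimensional subspace $\R c$.
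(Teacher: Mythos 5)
Your proof is correct and follows essentially the same route as the paper: split the test set into two disjoint $\cF_{t}$-measurable pieces of positive measure using atomlessness, obtain from the hypothesis two nonzero annihilators of $\{\ex[\eins_{A}\Phi^{*}(x+Y_{s})\,|\,\cF_{t}]\mid x\in\R\}$ with disjoint supports, and take a nontrivial linear combination killing the single extra linear constraint coming from the $-x$ term (your $\ex[g\cdot c]=0$ is, by the tower property, exactly the paper's condition $\ex[(\lambda_{1}f_{1}+\lambda_{2}f_{2})\cdot\eins_{A}]=0$). The only difference is presentational: you package the argument as "a thin set plus one additional element is thin," while the paper carries out the same computation directly.
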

\begin{proof}
Let $A\in\cF_{t}$ with $\pr(A) > 0.$ Since $(\Omega,\cF_{t},\pr|_{\cF_{t}})$ is atomless, we may find disjoint $B_{1}, B_{2}\in\cF_{t}$ contained in $A$ with $\pr(B_{1}), \pr(B_{2}) > 0.$ Then by assumption there exist nonzero $f_{1}, f_{2}\in L^{\infty}(\Omega,\cF_{t},\pr|_{\cF_{t}})$ such that $f_{i}$ vanishes outside $B_{i}$ as well as 
$\ex\left[~f_{i}\cdot\ex\left[\eins_{A}\cdot\Phi^{*}(x + Y_{s})~|~\cF_{t}\right]~\right] = 0$ for $x\in\R$ and $i\in\{1,2\}.$

Moreover, we may choose $\lambda_{1},\lambda_{2}\in\R$ with $\lambda_{i}\not= 0$ for at least one $i\in\{1,2\}$ and 
$\ex\left[\left(\lambda_{1}f_{1} + \lambda_{2} f_{2}\right)\cdot\eins_{A}\right] = 0.$ Finally, $\lambda_{1}f_{1} + \lambda_{2} f_{2}\in L^{\infty}(\Omega,\cF_{t},\pr|_{\cF_{t}})\setminus\{0\},$ and, setting $f := \lambda_{1}f_{1} + \lambda_{2} f_{2},$ 
\begin{eqnarray*}
&&
\ex\left[~f\cdot\ex\left[\eins_{A}\cdot\left(\Phi^{*}(x + Y_{s}) - x\right)~|~\cF_{t}\right]~\right]\\ 
&=& 
\sum_{i=1}^{2}\lambda_{i}~\ex\left[~f_{i}\cdot\ex\left[\eins_{A}\cdot\Phi^{*}(x + Y_{s}) ~|~\cF_{t}\right]~\right] 
-
x~\ex\left[\left(\lambda_{1} f_{1} + \lambda_{2} f_{2}\right)\cdot\eins_{A}\right] 
= 
0
\end{eqnarray*}
for $x\in\R.$ This completes the proof.
\end{proof}
The missing link in concluding the desired compactness of the set $K$ from 
\eqref{MengeK} 
is provided by the following auxiliary result.
\begin{lemma}
\label{Schlussel}
Let $(\Omega,\cF_{t},\pr|_{\cF_{t}})$ be atomless for $t\in\{1,\dots,T\},$ and furthermore let 
the subset $\left\{\ex\left[\eins_{A}\cdot\Phi^{*}(x + Y_{s})~|~\cF_{t}\right]\mid x\in\R\right\}$ of $L^{1}(\Omega,\cF_{t},\pr|_{\cF_{t}})$ be thin for arbitrary $s,t\in\{1,\dots,T\}$ with $t\leq s$ and $A\in\cF_{T}.$ 

Then for any $(f_{1},\dots,f_{T})\in\cP^{\infty}_{T},$ there exist $(A_{1},\dots,A_{T})\in\cP_{T}$ and mappings $g_{t}\in L^{\infty}(\Omega,\cF_{t},\pr|_{\cF_{t}})$ 
$(t=1,\dots,T)$ such that $\Lambda(g_{1},\dots,g_{T}) \equiv 0,$ and 
$$
(f_{1},\dots,f_{T}) = (\eins_{A_{1}},\dots,\eins_{A_{T}}) + (g_{1},\dots, g_{T})\quad\pr-\mbox{a.s.}.
$$
\end{lemma}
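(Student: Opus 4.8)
The plan is to derive Lemma \ref{Schlussel} from a Lyapunov-type ``bang-bang'' statement: $\Lambda$ attains on $\cP^{\infty}_{T}$ exactly the values it attains on the indicator points $\{(\eins_{A_{1}},\dots,\eins_{A_{T}})\mid(A_{1},\dots,A_{T})\in\cP_{T}\}$. Fix $(f_{1},\dots,f_{T})\in\cP^{\infty}_{T}$ and set
$$
\mathcal{E}:=\left\{(h_{1},\dots,h_{T})\in\cP^{\infty}_{T}\,:\,\Lambda(h_{1},\dots,h_{T})=\Lambda(f_{1},\dots,f_{T})\right\}.
$$
This set is nonempty, convex, and --- by the continuity of $\Lambda$ into the weak topology of $\cC(I)^{T}$ established in Lemma \ref{KompaktheitPartI} --- closed in $\cP^{\infty}_{T}$ with respect to $\prod_{t=1}^{T}\sigma(L^{\infty}_{t},L^{1}_{t})$; being a closed subset of the compact set $\cP^{\infty}_{T}$ of Lemma \ref{KompaktheitPartI}, it is compact, so by the Krein--Milman theorem it has an extreme point $(h^{*}_{1},\dots,h^{*}_{T})$. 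The lemma follows as soon as every extreme point of $\mathcal{E}$ is shown to be an indicator tuple $(\eins_{A_{1}},\dots,\eins_{A_{T}})$ with $(A_{1},\dots,A_{T})\in\cP_{T}$: one then puts $g_{t}:=f_{t}-\eins_{A_{t}}\in L^{\infty}(\Omega,\cF_{t},\pr|_{\cF_{t}})$ and gets $\Lambda(g_{1},\dots,g_{T})=\Lambda(f_{1},\dots,f_{T})-\Lambda(\eins_{A_{1}},\dots,\eins_{A_{T}})\equiv 0$.

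So the core is to show that an extreme point $(h^{*}_{1},\dots,h^{*}_{T})$ of $\mathcal{E}$ is $\{0,1\}$-valued in each coordinate. Suppose not, and let $t_{0}$ be the least index with $\pr(\epsilon\le h^{*}_{t_{0}}\le 1-\epsilon)>0$ for some $\epsilon>0$; put $E:=\{\epsilon\le h^{*}_{t_{0}}\le 1-\epsilon\}\in\cF_{t_{0}}$. Since $h^{*}_{1},\dots,h^{*}_{t_{0}-1}$ are $\{0,1\}$-valued with pairwise disjoint supports and $\sum_{t}h^{*}_{t}=1$, one has $h^{*}_{s}=0$ on $E$ for $s<t_{0}$ and $\sum_{t\ge t_{0}}h^{*}_{t}=1$ on $E$; in particular $t_{0}<T$ (otherwise $h^{*}_{t_{0}}=1$ on $E$, absurd) and $\sum_{t>t_{0}}h^{*}_{t}=1-h^{*}_{t_{0}}\ge\epsilon$ on $E$. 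With $\delta:=\epsilon/(T-t_{0})$, a greedy choice then yields pairwise disjoint $B_{t_{0}+1},\dots,B_{T}$ with $B_{t}\in\cF_{t}$, $\bigcup_{t>t_{0}}B_{t}=E$, and $\delta\le h^{*}_{t}\le 1-\epsilon$ on $B_{t}$. For $\psi\in L^{\infty}(\Omega,\cF_{t_{0}},\pr|_{\cF_{t_{0}}})$ with $\|\psi\|_{\infty}\le 1$ vanishing outside $E$, define $\phi^{\psi}=(\phi^{\psi}_{1},\dots,\phi^{\psi}_{T})$ by $\phi^{\psi}_{t}:=0$ for $t<t_{0}$, $\phi^{\psi}_{t_{0}}:=\psi$, and $\phi^{\psi}_{t}:=-\psi\,\eins_{B_{t}}$ for $t_{0}<t\le T$. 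Each coordinate is $\cF_{t}$-measurable, $\sum_{t}\phi^{\psi}_{t}=\psi-\psi\,\eins_{E}=0$ (as $\psi$ vanishes outside $E$), and for $s_{0}:=\tfrac12\min(\delta,\epsilon)>0$ the bounds above give $h^{*}\pm s_{0}\phi^{\psi}\in\cP^{\infty}_{T}$. Since $\psi$ is $\cF_{t_{0}}$-measurable, a short computation (pulling $\psi$ and $\eins_{B_{t}}$ into conditional expectations) shows that $\Lambda(\phi^{\psi})\equiv 0$ holds precisely when $\ex[\psi\,Z]=0$ for every $Z$ in
$$
N_{t_{0}}\cup\bigcup_{t=t_{0}+1}^{T}N_{t}^{B_{t}},\qquad
N_{t_{0}}:=\{\ex[\Phi^{*}(x+Y_{t_{0}})-x\,|\,\cF_{t_{0}}]:x\in I\},
$$
$$
N_{t}^{B_{t}}:=\{\ex[\eins_{B_{t}}(\Phi^{*}(x+Y_{t})-x)\,|\,\cF_{t_{0}}]:x\in I\}.
$$

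By \eqref{Annahmen Young function}, Lemma \ref{thin}, and the thinness hypothesis (used with $t_{0}\le t$ and $B_{t}\in\cF_{T}$, respectively with $\Omega\in\cF_{T}$), each of $N_{t_{0}},N_{t_{0}+1}^{B_{t_{0}+1}},\dots,N_{T}^{B_{T}}$ is a thin subset of $L^{1}(\Omega,\cF_{t_{0}},\pr|_{\cF_{t_{0}}})$. Granting that a finite union of thin subsets of an atomless $L^{1}$-space is again thin, the defining property of thinness applied to $E$ (which has positive probability) produces a nonzero $\psi$ --- after rescaling, $\|\psi\|_{\infty}\le1$ --- vanishing outside $E$ with $\ex[\psi\,Z]=0$ for all $Z$ in the above union; then $h^{*}=\tfrac12(h^{*}+s_{0}\phi^{\psi})+\tfrac12(h^{*}-s_{0}\phi^{\psi})$ displays $h^{*}$ as the midpoint of two distinct points of $\mathcal{E}$ (distinct because $\psi\neq0$), contradicting extremality. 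Hence $(h^{*}_{1},\dots,h^{*}_{T})$ is an indicator tuple, and the lemma is proved.

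I expect the only delicate point to be the step just invoked. The perturbation must simultaneously be adapted coordinate by coordinate, be balanced ($\sum_{t}\phi_{t}=0$), and keep every coordinate inside $[0,1]$; reconciling these three requirements forces the ``compensating mass'' to be distributed over all levels $t_{0}<t\le T$, and this produces one orthogonality constraint against a thin subset of $L^{1}(\Omega,\cF_{t_{0}},\pr|_{\cF_{t_{0}}})$ per level --- so the required $\psi$ exists only because the union of these finitely many thin sets is still thin, a closure property of the class of thin sets in an atomless space (cf. \cite{KingmanRobertson1968}, \cite{Anantharaman2012}). This is exactly why the hypothesis is imposed for all $s\ge t$ and all $A\in\cF_{T}$, and why atomlessness is assumed at every level $t$. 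Everything else --- the $[0,1]$-bounds on $h^{*}\pm s_{0}\phi^{\psi}$, the adaptedness of the $\phi^{\psi}_{t}$, and the computation identifying $\ker\Lambda$ with the stated orthogonality conditions --- is routine manipulation with conditional expectations, on top of the Banach--Alaoglu and Krein--Milman arguments already prepared in Lemma \ref{KompaktheitPartI}.
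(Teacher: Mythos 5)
Your argument is correct, and it reaches the conclusion by a more direct route than the paper, though the underlying idea is the same Lyapunov-type one. The paper's proof of Lemma \ref{Schlussel} is a two-line reduction: apply Lemma \ref{thin} and then invoke Proposition \ref{KreinMilman} (Appendix \ref{AppendixC}) with $M_{t} = \{\Phi^{*}(x+Y_{t})-x \mid x\in\R\}$; Proposition \ref{KreinMilman} in turn rests on Proposition \ref{KreinMilmanallgemein}, which characterizes \emph{all} extreme points of the moment-constrained sets $K_{k}(h)$ by a backward induction over the levels, the induction hypothesis supplying the adapted partition used to compensate the perturbation, and Anantharaman's Theorem 2.4 being used to normalize the perturbing function. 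You instead work directly with the single constrained set $\mathcal{E}=\{h\in\cP^{\infty}_{T}:\Lambda(h)=\Lambda(f)\}$, get an extreme point from Banach--Alaoglu plus Krein--Milman (exactly the compactness prepared in Lemmas \ref{BanachAlaoglu} and \ref{KompaktheitPartI}), and show it is an indicator tuple by perturbing at the lowest non-indicator level $t_{0}$, with the compensating adapted partition $B_{t_{0}+1},\dots,B_{T}$ of $E$ built greedily from the level sets of the extreme point itself; your explicit $\epsilon$-$\delta$ bounds replace the paper's normalization of $g$. The structure of the perturbation ($\psi$ at level $t_{0}$, $-\psi\eins_{B_{t}}$ at later levels) is exactly the one appearing inside Proposition \ref{KreinMilmanallgemein}, and the one nontrivial external ingredient you flag --- that a finite union of thin subsets is again thin --- is precisely what the paper cites (\cite[Proposition 2.1]{Anantharaman2012}), so there is no gap there; the only cosmetic point is that your greedy partition covers $E$ only up to a $\pr$-null set, which is harmless since all constraints are almost-sure or integral ones. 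What the paper's detour buys is reusability: Proposition \ref{KreinMilmanallgemein} is stated for arbitrary thin families $M_{i}$ and arbitrary mass $h$ and is also the engine behind Corollary \ref{Dichtheit}, which is needed later in Lemma \ref{missinglink}; your argument is shorter and self-contained but proves only the statement of Lemma \ref{Schlussel}.
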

\begin{proof}
Let $s,t\in\{1,\dots,T\}$ with $t\leq s$ and $A\in\cF_{T}.$  We may draw on Lemma \ref{thin} to observe that $\{\ex\left[\eins_{A}\cdot(\Phi^{*}(x + Y_{s}) - x)|\cF_{t}\right]\mid x\in\R\}$ is a thin subset of $L^{1}(\Omega,\cF_{t},\pr|_{\cF_{t}}).$ Then the statement of Lemma \ref{Schlussel} follows immediately from Proposition \ref{KreinMilman} (cf. Appendix \ref{AppendixC}) applied to the sets $M_{t}$ ($t = 1,\dots,T$), where 
$M_{t}:= \{\Phi^{*}(x + Y_{t}) - x\mid x\in\R\}.$ 
\end{proof}
Under the assumptions of Lemma \ref{Schlussel}, the set $K$ defined in \eqref{MengeK} coincides with $\Lambda(\cP_{T}^{\infty}),$ which in turn is weakly compact w.r.t. $\|\cdot\|_{\infty,T}$ due to Lemma \ref{KompaktheitPartI}.
\begin{corollary}
\label{KompaktheitPartII}
Under the assumptions of Lemma \ref{Schlussel}, the set $K$ (cf. \eqref{MengeK})  is weakly compact w.r.t. $\|\cdot\|_{\infty,T}.$
\end{corollary}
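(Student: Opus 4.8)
The plan is to show that $K$ coincides with the image $\Lambda(\cP_{T}^{\infty})$ and then to invoke Lemma \ref{KompaktheitPartI}, which already yields the weak compactness of that image. The inclusion $K\subseteq\Lambda(\cP_{T}^{\infty})$ is immediate: for $(A_{1},\dots,A_{T})\in\cP_{T}$ the tuple $(\eins_{A_{1}},\dots,\eins_{A_{T}})$ lies in $\cP_{T}^{\infty}$, and comparing the definitions of $G_{t,A_{t}}$ and $H_{t,f_{t}}$ gives $H_{t,\eins_{A_{t}}}=G_{t,A_{t}}$ for each $t$, so $(G_{1,A_{1}},\dots,G_{T,A_{T}})=\Lambda(\eins_{A_{1}},\dots,\eins_{A_{T}})\in\Lambda(\cP_{T}^{\infty})$.

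For the reverse inclusion I would use Lemma \ref{Schlussel}. Given $(f_{1},\dots,f_{T})\in\cP_{T}^{\infty}$, the lemma produces $(A_{1},\dots,A_{T})\in\cP_{T}$ and mappings $g_{t}\in L^{\infty}(\Omega,\cF_{t},\pr|_{\cF_{t}})$ with $\Lambda(g_{1},\dots,g_{T})\equiv 0$ and $(f_{1},\dots,f_{T})=(\eins_{A_{1}},\dots,\eins_{A_{T}})+(g_{1},\dots,g_{T})$ $\pr$-a.s. Since $\Lambda$ is linear, $\Lambda(f_{1},\dots,f_{T})=\Lambda(\eins_{A_{1}},\dots,\eins_{A_{T}})+\Lambda(g_{1},\dots,g_{T})=(G_{1,A_{1}},\dots,G_{T,A_{T}})\in K$. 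Hence $\Lambda(\cP_{T}^{\infty})\subseteq K$, and together with the first inclusion, $K=\Lambda(\cP_{T}^{\infty})$.

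Finally, Lemma \ref{KompaktheitPartI} states that $\Lambda(\cP_{T}^{\infty})$ is weakly compact w.r.t. $\|\cdot\|_{\infty,T}$, so the same holds for $K$, which is the assertion. There is essentially no obstacle at this point: the real work has been done in Lemma \ref{KompaktheitPartI} (Banach-Alaoglu compactness of $\cP_{T}^{\infty}$ and weak continuity of $\Lambda$) and in Lemma \ref{Schlussel} (the thinness-based decomposition resting on Proposition \ref{KreinMilman}), and the corollary merely records their combination. The only thing requiring a moment's attention is to make sure Lemma \ref{Schlussel} is applied with the same index range and the same filtration $(\cF_{t})$ that enter the definition of $\Lambda$, so that the linearity computation is literally valid.
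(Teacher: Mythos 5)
Your argument is correct and is exactly the paper's proof: the paper also notes that by Lemma \ref{Schlussel} together with the linearity of $\Lambda$ the set $K$ coincides with $\Lambda(\cP_{T}^{\infty})$, whose weak compactness w.r.t. $\|\cdot\|_{\infty,T}$ is Lemma \ref{KompaktheitPartI}. Nothing is missing.
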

Now we are ready to select a solution of the maximization problem \eqref{primalproblem}. 

\medskip

\noindent
\underline{Existence of a solution of maximization problem \eqref{primalproblem}}:\\[0.2cm]
Let the assumptions of Proposition \ref{saddle-point} be fulfilled. In view of \eqref{Hilfsproblem} it suffices to solve
$$
\mbox{maximize}\quad \inf_{x\in I}\ex[\Phi^{*}(x + Y_{\tau}) - x]\,\mbox{over}\,\tau\in\cT_{\TTT}.
$$
Let us assume that 
$\sup_{\tau\in\cT_{\TTT}}\inf_{x\in I}\ex[\Phi^{*}(x + Y_{\tau}) - x] > 
\inf_{x\in I}\ex[\Phi^{*}(x + Y_{0}) - x]$ because otherwise $\tau\equiv 0$ would be optimal. Since $\pr(A)\in\{0,1\}$ for $A\in\cF_{0}$ by assumption, any stopping time 
$\tau\in\TTT\setminus\{0\}$ is concentrated on $\{1,\dots,T\}.$

By Corollary \ref{KompaktheitPartII}, the set $K$ (cf. \eqref{MengeK}) is weakly compact w.r.t. the norm $\|\cdot\|_{\infty,T}.$ Furthermore, the concave mapping 
$L:\cC(I)^{T}\rightarrow\R,$ defined by $L(r_{1},\dots,r_{T}) :=  \inf_{x\in I} \sum_{t = 1}^{T} r_{t}(x),$ is continuous w.r.t. $\|\cdot\|_{\infty,T}.$ This means that $-L$ is convex as well as well as $\|\cdot\|_{\infty,T}-$continuous, and thus also weakly lower semicontinuous because $\|\cdot\|_{\infty,T}-$closed convex subsets are also weakly closed. Hence $L$ is weakly upper semicontinuous, and therefore its restriction to $K$ attains a maximum. In particular, the set
$$
\left\{\inf_{x\in I}\ex\left[\Phi^{*}(x + Y_{\tau}) - x\right]\mid \tau\in\cT_{\TTT}\setminus\{0\}\right\} = L(K)
$$
has a maximum. This shows that we may find a solution of \eqref{primalproblem}.\hfill$\Box$
\medskip

\noindent
\underline{Existence of a solution of problem \eqref{dualproblem}}:\\[0.2cm]
By $l(x) := \sup_{\tau\in\cT_{\TTT}}\ex[\Phi^{*}(x + Y_{\tau}) - x]$ we may define a convex, and therefore also continuous mapping $l:\R\rightarrow\R.$ Moreover by Lemma 
\ref{optimizedcertaintyequivalent} (cf. Appendix \ref{AppendixAA}),
$$
\lim_{x\to-\infty}l(x)\geq\lim_{x\to-\infty}\big(\Phi^{*}(x) - x\big) = \infty = 
 \lim_{x\to\infty}\big(\Phi^{*}(x) - x\big)\leq \lim_{x\to\infty}l(x).
$$
This means that $\inf_{x\in\R}l(x) = \inf_{x\in [-\varepsilon,\varepsilon]}l(x)$ for some $\varepsilon > 0.$ Hence $l$ attains its minimum at some $x^{*}\in [-\varepsilon,\varepsilon]$ because $l$ is continuous. Any such $x^{*}$ is a solution of the problem \eqref{dualproblem}.
\hfill$\Box$

\begin{appendix}
\section{Appendix}
\label{AppendixAA}
\begin{lemma}
\label{optimizedcertaintyequivalent}
Let $\Phi: [0,\infty[\rightarrow [0,\infty]$ be a lower semicontinuous, convex mapping satisfying $\inf_{x\geq 0}\Phi(x) = 0,$ and  
$\lim_{x\to\infty} \frac{\Phi(x)}{x} = \infty.$ Furthermore, let 
$\cQ_{\Phi,0}$ denote the set of all probability measures $\Q$ on $\cF$ which are absolutely continuous w.r.t. $\pr$ such that the Radon-Nikodym derivative $\frac{d\Q}{d\pr}$ satisfies $\ex\left[\Phi\left(\frac{d\Q}{d\pr}\right)\right] < \infty.$ Then the following statements hold true.
\begin{enumerate}
\item [(i)]
If $\Phi(x_{0}) < \infty$ for some $x_{0} > 0,$ then the Fenchel-Legendre transform $\Phi^{*}:\R\rightarrow\R\cup\{\infty\}$ of $\Phi$ is a nondecreasing, convex finite mapping. In particular its restriction $\Phi^{*}\bigl |_{[0,\infty[}$ to 
$[0,\infty[$ is a finite Young-function, which in addition satisfies the condition
$\lim_{x\to\infty}(\Phi^{*}(x) - x) = \infty$ if $x_{0} > 1,$ and 
$\lim_{x\to-\infty}(\Phi^{*}(x) - x) = \infty$ in the case of $x_{0} < 1.$
\item [(ii)] 
If $\Phi(x_{0}),\Phi(x_{1}) < \infty$ for some $x_{0} < 1 < x_{1},$ then for any $X$ from $H^{\Phi^{*}} := H^{\Phi^{*}\bigl |_{[0,\infty[}},$ we obtain
$$
\sup_{\Q\in\cQ_{\Phi,0}}\left(\ex_{\Q}[X]- \ex\left[\Phi\left(\frac{d\Q}{d\pr}\right)\right]\right) = \inf_{x\in\R}\ex[\Phi^{*}(x + X) - x],
$$
where the supremum on the left hand side of the equality is attained for some 
$\Q\in\cQ_{\Phi,0}.$
\end{enumerate}
\end{lemma}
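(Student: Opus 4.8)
The plan is to establish (i) by elementary convex analysis and (ii) by conjugate duality, minimising $x\mapsto\ex[\Phi^{*}(x+X)-x]$ and then reading off the optimal density. For (i): $\Phi^{*}$ is a supremum of the affine functions $y\mapsto xy-\Phi(x)$ with nonnegative slopes $x\ge0$, hence convex and nondecreasing, with $\Phi^{*}(0)=-\inf_{x\ge0}\Phi(x)=0$. Superlinearity of $\Phi$ makes $xy-\Phi(x)\to-\infty$ as $x\to\infty$, so by lower semicontinuity of $\Phi$ each $\Phi^{*}(y)$ is a maximum over a compact set, hence finite; it is $>-\infty$ because $\inf\Phi=0$ is attained at some $\bar x\ge0$ (lower semicontinuity plus superlinearity), giving $\Phi^{*}(y)\ge\bar x\,y$. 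A finite convex function on $\R$ is continuous, so $\Phi^{*}|_{[0,\infty[}$ is left-continuous with $\lim_{x\downarrow0}\Phi^{*}(x)=\Phi^{*}(0)=0$, while $\Phi^{*}(y)\ge x_{0}y-\Phi(x_{0})\to\infty$ (using $x_{0}>0$) gives $\lim_{x\to\infty}\Phi^{*}(x)=\infty$; thus $\Phi^{*}|_{[0,\infty[}$ is a finite Young function. The tail statements follow from $\Phi^{*}(y)-y\ge(x_{0}-1)y-\Phi(x_{0})$, whose right side $\to+\infty$ as $y\to+\infty$ when $x_{0}>1$ and as $y\to-\infty$ when $x_{0}<1$.

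For (ii) fix $X\in H^{\Phi^{*}}$ and put $g(x):=\ex[\Phi^{*}(x+X)-x]$. First I would check that $g$ is a finite convex function: by monotonicity and convexity of $\Phi^{*}$, $\Phi^{*}(x+X)\le\Phi^{*}(|x|+|X|)\le\tfrac12\Phi^{*}(2|x|)+\tfrac12\Phi^{*}(2|X|)\in L^{1}$ since $X\in H^{\Phi^{*}}$, while an affine minorant of $\Phi^{*}$ together with $H^{\Phi^{*}}\subseteq L^{1}$ bounds $\Phi^{*}(x+X)$ below by an integrable function. The easy inequality ``$\le$'' is Fenchel--Young applied pointwise: for $\Q\in\cQ_{\Phi,0}$ with density $Z=d\Q/d\pr$, $Z(x+X)\le\Phi(Z)+\Phi^{*}(x+X)$ $\pr$-a.s.; integrating, using $\ex[Z]=1$ and the $\pr$-integrability of $ZX$ (Young's inequality, as noted in the set-up), and rearranging yields $\ex_{\Q}[X]-\ex[\Phi(Z)]\le g(x)$ for every $x$, hence $\sup_{\Q}\le\inf_{x}g(x)$.

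For the reverse inequality and attainment, I would first minimise $g$. Writing $g(x)=\ex[\psi(x+X)]+\ex[X]$ with $\psi:=\Phi^{*}-\mathrm{id}$, part (i) and the hypothesis $x_{0}<1<x_{1}$ make $\psi$ convex with $\psi(y)\to\infty$ as $|y|\to\infty$, hence bounded below, so Fatou's lemma forces $g(x_{n})\to\infty$ whenever $|x_{n}|\to\infty$; being continuous, convex and coercive, $g$ attains its minimum at some $x^{*}$. Differentiating under the expectation at $x^{*}$ — justified by dominated convergence, the difference quotients of $x\mapsto\Phi^{*}(x+X)$ with step $0<|h|\le1$ being bounded by the $L^{1}$ function $\Phi^{*}(x^{*}+1+X)-\Phi^{*}(x^{*}-1+X)$ — gives $g'_{\pm}(x^{*})=\ex[(\Phi^{*})'_{\pm}(x^{*}+X)]-1$, so minimality yields $\ex[(\Phi^{*})'_{-}(x^{*}+X)]\le1\le\ex[(\Phi^{*})'_{+}(x^{*}+X)]$, both finite and nonnegative. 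Since $\partial\Phi^{*}(y)=[(\Phi^{*})'_{-}(y),(\Phi^{*})'_{+}(y)]$, the affine map $t\mapsto\ex[(1-t)(\Phi^{*})'_{-}(x^{*}+X)+t(\Phi^{*})'_{+}(x^{*}+X)]$ takes the value $1$ at some $t^{*}\in[0,1]$; I set $Z^{*}:=(1-t^{*})(\Phi^{*})'_{-}(x^{*}+X)+t^{*}(\Phi^{*})'_{+}(x^{*}+X)$, which satisfies $Z^{*}\ge0$, $\ex[Z^{*}]=1$ and $Z^{*}\in\partial\Phi^{*}(x^{*}+X)$ $\pr$-a.s. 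By the Fenchel--Young equality (and $\Phi^{**}=\Phi$ via Fenchel--Moreau), this inclusion is equivalent to $\Phi(Z^{*})+\Phi^{*}(x^{*}+X)=(x^{*}+X)Z^{*}$ $\pr$-a.s. The convexity bound $(\Phi^{*})'_{+}(x^{*}+X)\,|X|\le\Phi^{*}(|x^{*}|+2|X|+1)-\Phi^{*}(x^{*}+X)\in L^{1}$ gives $XZ^{*}\in L^{1}$, so $\ex[\Phi(Z^{*})]=x^{*}+\ex[XZ^{*}]-\ex[\Phi^{*}(x^{*}+X)]<\infty$ and $\Q^{*}:=Z^{*}\,d\pr\in\cQ_{\Phi,0}$; finally $\ex_{\Q^{*}}[X]-\ex[\Phi(Z^{*})]=\ex[\Phi^{*}(x^{*}+X)]-x^{*}=g(x^{*})=\inf_{x}g(x)$, which together with ``$\le$'' yields the identity and exhibits $\Q^{*}$ as a maximiser.

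The delicate step is this last construction: justifying the interchange of $\ex$ and $d/dx$ and, above all, the integrability bookkeeping that places $Z^{*}$ in $\cQ_{\Phi,0}$ (i.e., $XZ^{*}\in L^{1}$, hence $\ex[\Phi(Z^{*})]<\infty$). Both rely precisely on the two-sided growth condition $\Phi(x_{0}),\Phi(x_{1})<\infty$ with $x_{0}<1<x_{1}$ and on the assumption $X\in H^{\Phi^{*}}$ rather than the larger $L^{\Phi^{*}}$, which is exactly where the hypotheses of the lemma are consumed.
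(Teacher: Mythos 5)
Your proof of part (i) is essentially the paper's own argument: the same reduction of the supremum to a compact set via superlinearity and lower semicontinuity of $\Phi$, and the same affine minorants $y\mapsto x_{0}y-\Phi(x_{0})$ for the tail conditions, so there is nothing to add there (your detour through attainment of $\inf\Phi$ is not even needed). For part (ii), however, you take a genuinely different and self-contained route. The paper checks that $X\mapsto\inf_{x\in\R}\ex[\Phi^{*}(x-X)-x]$ is a real-valued convex risk measure on the Orlicz heart (monotonicity, cash-invariance, convexity, plus a Jensen-based coercivity argument) and then cites Theorem 4.3 and formula (5.23) of Cheridito and Li (2009) to obtain in one stroke the robust representation, the attainment of the supremum, and the identification of the penalty as $\ex\left[\Phi\left(\frac{d\Q}{d\pr}\right)\right]$. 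You instead prove the duality directly: the inequality $\sup_{\Q}\le\inf_{x}$ by pointwise Fenchel--Young, and the reverse inequality with attainment by minimising the coercive convex function $g(x)=\ex[\Phi^{*}(x+X)-x]$ at some $x^{*}$, differentiating under the expectation, and constructing an explicit optimal density $Z^{*}$ as a convex combination of $(\Phi^{*})'_{-}(x^{*}+X)$ and $(\Phi^{*})'_{+}(x^{*}+X)$ normalised to $\ex[Z^{*}]=1$, with the Fenchel--Young equality (via $\Phi^{**}=\Phi$ for the extension of $\Phi$ by $+\infty$ on $]-\infty,0[$) and your bound $(\Phi^{*})'_{+}(x^{*}+X)\,|X|\le\Phi^{*}(|x^{*}|+2|X|+1)-\Phi^{*}(x^{*}+X)$ ensuring $XZ^{*}\in L^{1}$ and $\ex[\Phi(Z^{*})]<\infty$, so that $Z^{*}\,d\pr\in\cQ_{\Phi,0}$ is a maximiser. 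Your version needs no external representation theorem and exhibits the optimal measure in closed form as a subgradient of $\Phi^{*}$ at $x^{*}+X$, at the cost of the first-order-condition and integrability bookkeeping, which you carry out correctly; the paper's version is shorter and delegates exactly that bookkeeping (and the penalty identification) to the cited Orlicz-heart duality. Both uses of the hypotheses ($X\in H^{\Phi^{*}}$ and $x_{0}<1<x_{1}$) appear in your argument where they should, so I see no gap.
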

\begin{proof}
Let \(\Phi(x_0)<\infty\) for some \(x_0>0.\) Obviously, $\Phi^{*}$ is a nondecreasing convex function satisfying the properties 
\begin{equation}
\label{young}
\Phi^{*}(0) = -\inf_{x\geq 0}\Phi(x) = 0\,\,\mbox{and}\, 
\lim_{y\to\infty}\Phi^{*}(y)\geq \lim_{y\to\infty}(x_{0}y - \Phi(x_{0})) = \infty.
\end{equation}
Next, we want to verify the finiteness of $\Phi^{*}.$ Since $\Phi^{*}$ is {nondecreasing}, and $\Phi^{*}(y)\geq x_{0}y - \Phi(x_{0}) > - \infty$ holds for any $y\in\R,$ it suffices to show that $\Phi^{*}(y) < \infty$ for every $y\geq 0.$ For that purpose consider the mapping
$$
\beta: [0,\infty[\times [0,\infty[\rightarrow [-\infty,\infty[,\, (y,x)\mapsto xy - \Phi(x). 
$$
By assumption on $\Phi,$ we have 
$$
\lim_{x\to\infty} \beta(y,x) = 
\lim_{x\to\infty} x \left(y -  \frac{\Phi(x)}{x}\right) = - \infty < \beta(y,x_{0})\quad \mbox{for}\, y \geq 0.
$$ 
Hence for any $y\geq 0,$ we may find some $z_{y}\in [x_{0},\infty[$ such that we obtain 
$\Phi^{*}(y) = \sup_{0\leq x\leq z_{y}}\beta(y,x).$ Moreover,  
$\beta(y,\cdot)$ is upper semicontinuous for $y\geq 0.$ Hence, for every $y\geq 0,$ there is some $x\in [0,z_{y}]$ with  $\Phi^{*}(y) = \sup_{0\leq x\leq z_{y}}\beta(y,x) = \beta(y,x) < \infty.$

As a finite convex function $\Phi^{*}$ is continuous. Since it is also nondecreasing, we may conclude from \eqref{young} that its restriction to $[0,\infty[$ is a finite Young function. Let us now assume that $x_{0} > 1.$ Then 
$$
\lim_{y\to\infty}(\Phi^{*}(y) - y)\geq \lim_{y\to\infty}\big((x_{0} - 1) y - \Phi(x_{0})\big) = \infty.
$$
Analogously, $\lim_{y\to-\infty}(\Phi^{*}(y) - y) = \infty$ may be derived in the case of $x_{0} < 1.$ Thus we have proved the full statement (i).

Let us turn over to the proof of statement (ii), and let us consider the mapping
$$
\rho: H^{\Phi^{*}}\rightarrow [-\infty,\infty[,\, X\mapsto \inf_{x\in\R}\ex[\Phi^{*}(x - X) - x]
$$
Then, due to convexity of $\Phi^{*},$ we may apply Jensen's inequality along with statement (i) to conclude 
$$
\lim_{x\to - \infty}\ex[\Phi^{*}(x - X) - x]\geq 
\lim_{x\to - \infty}([\Phi^{*}(x - \ex[X])]-x) = \infty\quad\mbox{for}\, X\in H^{\Phi^{*}},
$$ 
and
$$
\lim_{x\to \infty}\ex[\Phi^{*}(x - X) - x]\geq 
\lim_{x\to \infty}[\Phi^{*}(x - \ex[X]) - x] = \infty\quad\mbox{for}\, X\in H^{\Phi^{*}}.
$$
Thus, for any $X\in H^{\Phi^{*}},$ we find some $\delta_{X} > 0$ such that 
$$
\rho(X) = \inf_{x\in [-\delta_{X},\delta_{X}]}\ex[\Phi^{*}(x - X) - x].
$$ 
In addition, for $X\in H^{\Phi^{*}},$ the mapping $x\mapsto \ex[\Phi^{*}(x - X) - x]$ is a convex mapping on $\R,$ hence its restriction to $[-\delta_{X},\delta_{X}]$ is continuous. This implies that $\rho$ is a real-valued function. 

Moreover, it is easy to check that $\rho$ is a so called convex risk measure, defined to mean that it satisfies the following properties.
\begin{itemize}
    \item monotonicity: $\rho(X)\ge\rho(Y)$ for all $X,Y\in H^{\Phi^{*}}$ with $X\le Y$,
    \item cash-invariance: $\rho(X+m)=\rho(X)-m$ for all $X\in H^{\Phi^{*}}$ and $m\in\R$,
    \item convexity: $\rho(\lambda X + (1-\lambda) Y)\le\lambda\rho(X)+ (1-\lambda)\rho(Y)$ for all $X,Y\in H^{\Phi^{*}},$ $\lambda\in [0,1].$
\end{itemize}
Then we obtain from Theorem 4.3 in \cite{CheriditoLi2009} that 
$$
\rho(X) = \max_{\Q\in\cQ_{\Phi,0}} \left(\ex_{\Q}[-X]- \rho^{*}(\Q)\right)
$$
holds for all $X\in H^{\Phi^{*}},$ where
$$
\rho^{*}(\Q) := \sup_{X\in H^{\Phi^{*}}}\left(\ex_{\Q}[-X] - \rho(X)\right).
$$
By routine procedures we may verify
$$
\rho^{*}(\Q) = \sup_{X\in H^{\Phi^{*}}}\left(\ex_{\Q}[X] - \ex[\Phi^{*}(X)]\right)
$$
for $\Q\in\cQ_{\Phi,0}.$ Since $\lim_{x\to -\infty}[\Phi^{*}(x) - x] = \lim_{x\to \infty}[\Phi^{*}(x) - x] = \infty$ due to statement (i), we may conclude from 
\cite[(5.23)]{CheriditoLi2009}
$$
\Phi^{*}(\Q) = \sup_{X\in H^{\Phi^{*}}}\left(\ex_{\Q}[X] - \ex[\rho^{*}(X)]\right) = 
\ex\left[\Phi\left(\frac{d\Q}{d\pr}\right)\right]\quad\mbox{for all}\, \Q\in\cQ_{\Phi,0}.
$$
This completes the proof.
\end{proof}

\section{Appendix}
\label{AppendixA}
Let $(\oOmega,\ocF,(\ocF)_{i\in\{1,\dots,m\}},\oP)$ be a filtered probability space, and let the product space 
$\Timesim L^{\infty}(\oOmega,\ocF_{i},\oP|_{\ocF_{i}})$ be endowed with the product topology 
$\Timesim\sigma(L^{\infty}_{i},L^{1}_{i})$ of the weak* topologies 
$\sigma(L^{\infty}_{i},L^{1}_{i})$ on $L^{\infty}(\oOmega,\ocF_{i},\oP|_{\ocF_{i}})$ 
(for $i=1,\dots,m$).
\begin{proposition}
\label{angelic}
Let $L^{1}(\oOmega,\ocF_{i},\oP|_{\ocF_{i}})$ be separable w.r.t. the weak topology $\sigma(L^{1}_{i},L^{\infty}_{i})$ for $i\in\{1,\dots,m\},$ and let $\cA\subseteq\Timesim L^{\infty}(\oOmega,\ocF_{i},\oP|_{\ocF_{i}})$ be relatively compact w.r.t. $\Timesim\sigma(L^{\infty}_{i},L^{1}_{i}).$\par 
Then for any $X$ from the 
$\Timesim\sigma(L^{\infty}_{i},L^{1}_{i})-$closure of $\cA,$ we may find a sequence $(X_{n})_{n\in\N}$ in $\cA$ which converges to $X$ w.r.t. the $\Timesim\sigma(L^{\infty}_{i},L^{1}_{i}).$
\end{proposition}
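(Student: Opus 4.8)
The plan is to prove that the $\prod_{i=1}^{m}\sigma(L^{\infty}_{i},L^{1}_{i})$-closure of $\cA$, call it $\overline{\cA}$, is a \emph{compact metrizable} space. The assertion then follows at once, since in a metrizable space the closure of any subset coincides with its sequential closure, so every $X\in\overline{\cA}$ is the limit of a sequence drawn from $\cA$.

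The first step is to reduce the separability hypothesis to norm separability of each $L^{1}(\oOmega,\ocF_{i},\oP|_{\ocF_{i}})$: since the norm closure and the $\sigma(L^{1}_{i},L^{\infty}_{i})$-closure of a convex set agree by Mazur's theorem, the norm-closed convex hull of a countable weakly dense subset is all of $L^{1}_{i}$, and passing to rational convex combinations yields a countable norm-dense subset. The second step is that $\cA$ is norm bounded in each coordinate: $\overline{\cA}$ is compact by hypothesis, hence each coordinate projection $\Pi_{i}(\overline{\cA})$ is $\sigma(L^{\infty}_{i},L^{1}_{i})$-compact, hence pointwise bounded as a family of continuous linear functionals on the Banach space $L^{1}_{i}$, hence norm bounded by the Banach--Steinhaus theorem; writing $\Pi_{i}(\overline{\cA})\subseteq r_{i}B_{L^{\infty}_{i}}$ for suitable $r_{i}>0$, we get $\overline{\cA}\subseteq\prod_{i=1}^{m}r_{i}B_{L^{\infty}_{i}}$.

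The third step is metrizability. Fixing for each $i$ a countable norm-dense sequence $(g_{i,k})_{k\in\N}$ in $B_{L^{1}_{i}}$, the formula $d_{i}(f,f'):=\sum_{k\geq 1}2^{-k}\bigl|\ex\bigl[(f-f')\,g_{i,k}\bigr]\bigr|$ defines a metric on $r_{i}B_{L^{\infty}_{i}}$ (it separates points because $f-f'$ is norm continuous on $L^{1}_{i}$ and vanishes on a norm-dense set) whose topology is coarser than $\sigma(L^{\infty}_{i},L^{1}_{i})$ restricted to $r_{i}B_{L^{\infty}_{i}}$; as the latter is compact by Banach--Alaoglu and the metric topology is Hausdorff, the identity map is a homeomorphism and the two topologies coincide on $r_{i}B_{L^{\infty}_{i}}$. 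A finite product of metrizable spaces being metrizable, $\prod_{i=1}^{m}r_{i}B_{L^{\infty}_{i}}$ is metrizable, hence so is its compact subset $\overline{\cA}$. Invoking the metrizability of $\overline{\cA}$ one last time completes the proof.

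I expect the only delicate point to be the metrizability bookkeeping: it is essential to work inside \emph{bounded} sets, where $\sigma(L^{\infty}_{i},L^{1}_{i})$ is metrizable, rather than on all of $L^{\infty}_{i}$, and to use the compactness of $\overline{\cA}$ to upgrade the a priori coarser metric topology to equality with the product weak* topology there. The Mazur reduction, the Banach--Steinhaus boundedness, and the passage from a metric closure to sequential limits are all routine.
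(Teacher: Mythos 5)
Your proof is correct, but it takes a different, self-contained route from the paper. The paper's argument is a reduction to a cited result: it identifies the topological dual of $E:=\prod_{i=1}^{m}L^{\infty}(\oOmega,\ocF_{i},\oP|_{\ocF_{i}})$ (with the product weak* topology) with $\prod_{i=1}^{m}L^{1}(\oOmega,\ocF_{i},\oP|_{\ocF_{i}})$ via the pairing $\Gamma(g_{1},\dots,g_{m})(f_{1},\dots,f_{m})=\sum_{i}\ex[f_{i}g_{i}]$, observes that the assumed weak separability of each $L^{1}_{i}$ makes $\sigma(E',E)$ separable, and then invokes a sequential-density (angelicity) theorem from Floret's lecture notes on weakly compact sets. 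You instead prove directly that the closure $\overline{\cA}$ is compact and metrizable: weak separability of $L^{1}_{i}$ upgraded to norm separability via Mazur and rational convex combinations, coordinatewise norm boundedness of $\overline{\cA}$ via Banach--Steinhaus, metrizability of the weak* topology on norm-bounded balls through the countable separating family $(g_{i,k})$ together with the compact-to-Hausdorff homeomorphism trick, and finally closure equals sequential closure in a metric space. All of these steps are sound (the only cosmetic slip is the phrase ``$f-f'$ is norm continuous on $L^{1}_{i}$,'' where you mean the functional $g\mapsto\ex[(f-f')g]$ it induces). What each approach buys: the paper's proof is shorter but leans on an external reference whose hypotheses the reader must check; yours is elementary and self-contained, and in fact establishes the slightly stronger conclusion that $\overline{\cA}$ is metrizable, which is essentially the mechanism underlying the cited angelicity result anyway.
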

\begin{proof}
Setting $E := \Timesim L^{\infty}(\oOmega,\ocF_{i},\oP|_{\ocF_{i}}),$ we shall denote by $E'$ the topological dual of $E$ w.r.t. $\Timesim\sigma(L^{\infty}_{i},L^{1}_{i}).$ It is easy to check that $$
\Gamma(g_{1},\dots,g_{m})(f_{1},\dots f_{m}) := 
\sum_{i=1}^{m}\ex\left[f_{i}\cdot g_{i}\right],
$$ 
where $g_{i}\in L^{1}(\oOmega,\ocF_{i},\oP|_{\ocF_{i}})$ and 
$f_{i}\in L^{\infty}(\oOmega,\ocF_{i},\oP|_{\ocF_{i}})$ (for $i=1,\dots,m$) defines 
a linear operator from $\Timesim L^{1}(\oOmega,\ocF_{i},\oP|_{\ocF_{i}})$ onto $E'$ which is continuous w.r.t. the product topology $\Timesim\sigma(L^{1}_{i},L^{\infty}_{i})$ of the weak topologies $\sigma(L^{1}_{1},L^{\infty}_{1}),\dots,\sigma(L^{1}_{m},L^{\infty}_{m})$ and the weak topology $\sigma(E',E).$\par  
Since $\Timesim\sigma(L^{1}_{i},L^{\infty}_{i})$ is separable by assumption, we may conclude that $\sigma(E',E)$ is separable too. Then the statement of the Proposition 
\ref{angelic} follows immediately from \cite{Floret1978}, p.30.
\end{proof}
\section{Appendix}
\label{AppendixC}
Let for $m\in\N$ denote by $(\oOmega,\ocF,(\ocF_{i})_{i\in\{1,\dots,m\}},\oP)$ a filtered probability space, and let the set $\overline{\cP}_{m}$ gather all sets
$(A_{1},\dots,A_{m})$ from $\Timesim\ocF_{i}$ satisfying $\oP(A_{i}\cap A_{j}) = 0$ for 
$i\not= j$ and $\oP(\bigcup_{i=1}^{m} A_{i}) = 1.$ We shall endow respectively the product spaces 
$\Timesikm L^{\infty}(\oOmega,\ocF_{i},\oP|_{\ocF_{i}})$ with the product topologies 
$\Timesikm\sigma(L^{\infty}_{i},L^{1}_{i})$ of the weak* topologies 
$\sigma(L^{\infty}_{i},L^{1}_{i})$ on $L^{\infty}(\oOmega,\ocF_{i},\oP|_{\ocF_{i}})$ 
(for $k\in\{1,\dots,m\}$ and $i=k,\dots,m$). Fixing $k\in\{1,\dots,m\}$ and 
nonnegative $h\in L^{\infty}(\oOmega,\ocF_{k},\oP|_{\ocF_{k}}),$ the subset 
$\ocPinfty_{mk}(h)\subseteq \Timesikm L^{\infty}(\oOmega,\ocF_{i},\oP|_{\ocF_{i}})$ is defined to consist of all $(f_{k},\dots,f_{m})\in \Timesikm L^{\infty}(\oOmega,\ocF_{i},\oP|_{\ocF_{i}})$ such that $f_{i}\geq 0$ $\oP-$a.s. for any $i\in\{k,\dots,m\}$ and 
$\sum_{i=k}^{m}f_{i} = h$ $\oP-$a.s.. For abbreviation we shall use notation 
$\ocPinfty_{m} := \ocPinfty_{m1}(1).$
\begin{lemma}
\label{BanachAlaoglu}
$\ocPinfty_{mk}(h)$ is a compact subset of $\Timesikm L^{\infty}(\oOmega,\ocF_{i},\oP|_{\ocF_{i}})$ w.r.t. the topology $\Timesikm\sigma(L^{\infty}_{i},L^{1}_{i})$ for 
$k\in\{1,\dots,m\}$ and arbitrary nonnegative $h\in L^{\infty}(\oOmega,\ocF_{k},\oP|_{\ocF_{k}}).$
\end{lemma}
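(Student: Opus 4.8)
The plan is a standard compactness argument: one realizes $\ocPinfty_{mk}(h)$ as a closed subset of a compact product of balls and then appeals to the Banach--Alaoglu and Tychonoff theorems. First I would observe that if $(f_{k},\dots,f_{m})\in\ocPinfty_{mk}(h)$, then $0\leq f_{i}\leq\sum_{j=k}^{m}f_{j}=h$ $\oP-$a.s. for every $i\in\{k,\dots,m\}$, so that $\|f_{i}\|_{\infty}\leq\|h\|_{\infty}$. Consequently
$$
\ocPinfty_{mk}(h)\subseteq\prod_{i=k}^{m}B_{i},\qquad B_{i}:=\{f\in L^{\infty}(\oOmega,\ocF_{i},\oP|_{\ocF_{i}})\,:\,\|f\|_{\infty}\leq\|h\|_{\infty}\}.
$$
Since $L^{\infty}(\oOmega,\ocF_{i},\oP|_{\ocF_{i}})$ is the norm dual of $L^{1}(\oOmega,\ocF_{i},\oP|_{\ocF_{i}})$, each $B_{i}$ is compact for $\sigma(L^{\infty}_{i},L^{1}_{i})$ by Banach--Alaoglu, and hence $\prod_{i=k}^{m}B_{i}$ is compact w.r.t. $\Timesikm\sigma(L^{\infty}_{i},L^{1}_{i})$ by Tychonoff's theorem.

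Next I would check that $\ocPinfty_{mk}(h)$ is closed in $\prod_{i=k}^{m}B_{i}$. Its definition imposes, for each $i$, the pointwise constraint $f_{i}\geq 0$ $\oP-$a.s., together with the single linear constraint $\sum_{i=k}^{m}f_{i}=h$ $\oP-$a.s. For fixed $i$, the positive cone $\{f:f\geq 0\ \oP-\text{a.s.}\}$ in $L^{\infty}(\oOmega,\ocF_{i},\oP|_{\ocF_{i}})$ equals the intersection, over nonnegative $g\in L^{1}(\oOmega,\ocF_{i},\oP|_{\ocF_{i}})$, of the half-spaces $\{f:\ex[fg]\geq 0\}$; each of these is $\sigma(L^{\infty}_{i},L^{1}_{i})$-closed because $f\mapsto\ex[fg]$ is weak* continuous, and pulling them back along the $i$-th coordinate projection and intersecting over $i$ yields a closed subset of the product. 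As for the linear constraint, since $(\ocF_{i})_{i}$ is a filtration, $h$ and all the $f_{i}$ lie in $L^{\infty}(\oOmega,\ocF_{m},\oP|_{\ocF_{m}})$, so $\sum_{i=k}^{m}f_{i}=h$ $\oP-$a.s. holds if and only if $\sum_{i=k}^{m}\ex[f_{i}g]=\ex[hg]$ for every $g\in L^{1}(\oOmega,\ocF_{m},\oP|_{\ocF_{m}})$. For such a $g$ one has $\ex[f_{i}g]=\ex[f_{i}\,\ex[g\,|\,\ocF_{i}]]$ with $\ex[g\,|\,\ocF_{i}]\in L^{1}(\oOmega,\ocF_{i},\oP|_{\ocF_{i}})$, so that $(f_{k},\dots,f_{m})\mapsto\sum_{i=k}^{m}\ex[f_{i}g]-\ex[hg]$ is continuous for $\Timesikm\sigma(L^{\infty}_{i},L^{1}_{i})$; intersecting the zero sets of these maps over all $g$ keeps $\ocPinfty_{mk}(h)$ closed. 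Being a closed subset of a compact set, $\ocPinfty_{mk}(h)$ is then compact.

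The only genuinely delicate point is the linear constraint, whose summands $f_{i}$ a priori belong to $L^{\infty}$-spaces over \emph{different} $\sigma$-algebras; this is handled by testing against $L^{1}(\oOmega,\ocF_{m},\oP|_{\ocF_{m}})$ and pushing conditional expectations onto the individual coordinates, after which the required continuity reduces to that of the finitely many coordinate pairings $f_{i}\mapsto\ex[f_{i}\cdot\ex[g\,|\,\ocF_{i}]]$. The remaining ingredients — the norm bound $\|f_{i}\|_{\infty}\leq\|h\|_{\infty}$, Banach--Alaoglu for the balls $B_{i}$, and Tychonoff for their product — are entirely routine.
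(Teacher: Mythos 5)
Your argument is correct and is exactly the route the paper has in mind: the paper dispatches this lemma in one line as ``obvious in view of the Banach--Alaoglu theorem,'' and your proposal simply fills in the routine details (the bound $\|f_{i}\|_{\infty}\leq\|h\|_{\infty}$, compactness of the product of balls via Banach--Alaoglu and Tychonoff, and weak*-closedness of the positivity and sum constraints, with the cross-$\sigma$-algebra issue correctly handled by conditioning test functions from $L^{1}(\oOmega,\ocF_{m},\oP|_{\ocF_{m}})$ onto each $\ocF_{i}$). Nothing to correct.
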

\begin{proof}
The statement of Lemma \ref{BanachAlaoglu} is obvious in view of the Banach-Alaoglu theorem.
\end{proof}
\begin{proposition}
\label{KreinMilmanallgemein}
Let $M_{i}\subseteq L^{1}(\oOmega,\ocF_{i},\oP|_{\ocF_{i}})$ be nonvoid for $i = 1,\dots,m$ such that $\{\ex\left[\eins_{A}\cdot f~|~\ocF_{i}\right]\mid f\in M_{j}\}$ is a thin subset of $L^{1}(\oOmega,\ocF_{i},\oP|_{\ocF_{i}})$ for $i,j\in\{1,\dots,m\}$ with $i\leq j$ and any $A\in\ocF_{m}.$ Furthermore, let us fix $(f_{1},\dots,f_{m})\in\ocPinfty_{m}$ and consider the set $N_{1}$ consisting of all 
$(h_{1},\dots,h_{m})$ from $\Timesim L^{\infty}(\oOmega,\ocF_{i},\oP|_{\ocF_{i}})$ satisfying $\ex\left[h_{i}\cdot \varphi_{i}\right] = \ex\left[f_{i}\cdot \varphi_{i}\right]$ for 
any $\varphi_{i}\in M_{i},$ $i=1,\dots,m.$
Then the set $N_{1}\cap~\ocPinfty_{m}$ has extreme points, and for each extreme point 
$(h^{*}_{1},\dots,h^{*}_{m}),$ there exists some 
$(A_{1},\dots,A_{m})\in\overline{\cP}_{m}$ such that 
$h^*_{i} = \eins_{A_{i}}$ $\oP-$ a.s. holds for $i = 1,\dots,m.$
\end{proposition}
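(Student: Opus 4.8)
The plan is to combine a compactness argument (to get existence of extreme points via the Krein--Milman theorem) with a Lyapunov/Dvoretzky--Wald--Wolfowitz-type \emph{purification} argument (to identify the extreme points), the latter being where the thinness hypotheses enter. Throughout we equip $\Timesim L^{\infty}(\oOmega,\ocF_{i},\oP|_{\ocF_{i}})$ with the product of the weak* topologies, a Hausdorff locally convex space.

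\emph{Existence of extreme points.} First, $\ocPinfty_{m}$ is convex and, by Lemma \ref{BanachAlaoglu}, compact, while $N_{1}$ is convex and weak*-closed, since for each $\varphi_{i}\in M_{i}\subseteq L^{1}(\oOmega,\ocF_{i},\oP|_{\ocF_{i}})$ the functional $(h_{1},\dots,h_{m})\mapsto\ex[h_{i}\varphi_{i}]$ is weak* continuous. Hence $N_{1}\cap\ocPinfty_{m}$ is a nonempty (it contains $(f_{1},\dots,f_{m})$), convex, compact set, so by Krein--Milman it has extreme points. Moreover, a tuple of indicators in $\ocPinfty_{m}$ is automatically an extreme point of $\ocPinfty_{m}$ (from $\eins_{A_{i}}=\lambda p_{i}+(1-\lambda)q_{i}$ with $p,q\in\ocPinfty_{m}$, $0\le p_{i},q_{i}\le 1$, one gets $p_{i}=q_{i}=\eins_{A_{i}}$), and a tuple $(\eins_{A_{1}},\dots,\eins_{A_{m}})$ lies in $\ocPinfty_{m}$ exactly when $(A_{1},\dots,A_{m})\in\overline{\cP}_{m}$, since $\sum_{i}\eins_{A_{i}}=1$ a.s.\ is equivalent to essential disjointness of the $A_{i}$ together with $\oP(\bigcup_{i}A_{i})=1$. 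So the only thing left is the converse: to show that an arbitrary extreme point consists of indicators.

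\emph{Characterization via purification.} Suppose $(h_{1}^{*},\dots,h_{m}^{*})$ is an extreme point of $N_{1}\cap\ocPinfty_{m}$ and some $h_{i}^{*}$ is not $\oP$-a.s.\ $\{0,1\}$-valued; let $k$ be the smallest such index. Then $h_{i}^{*}=\eins_{A_{i}}$ with disjoint $A_{i}\in\ocF_{i}$ for $i<k$, and on $C:=\oOmega\setminus\bigcup_{i<k}A_{i}\in\ocF_{k-1}$ one has $\sum_{i\ge k}h_{i}^{*}=1$; since the thinness hypotheses are inherited by the restriction to $C$, we may reduce to $k=1$, i.e.\ $h_{1}^{*}$ is not $\{0,1\}$-valued. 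Pick $n$ with $\oP(A)>0$ for $A:=\{1/n<h_{1}^{*}<1-1/n\}\in\ocF_{1}$. The goal is to build a nonzero \emph{balanced perturbation} $(g_{1},\dots,g_{m})$ with $g_{i}\in L^{\infty}(\ocF_{i})$, $\ex[g_{i}\varphi_{i}]=0$ for all $\varphi_{i}\in M_{i}$, $\sum_{i}g_{i}=0$, and $|g_{i}|\le\min(h_{i}^{*},1-h_{i}^{*})$; then $(h^{*}+g)$ and $(h^{*}-g)$ both lie in $N_{1}\cap\ocPinfty_{m}$, contradicting extremality. The source $g_{1}$ is produced first: using $\ex[\eins_{A}f\,|\,\ocF_{1}]=\eins_{A}\,\ex[f\,|\,\ocF_{1}]$ (valid as $A\in\ocF_{1}$) together with the fact that a finite union of thin sets is thin, the subset $\{\eins_{A}f:f\in M_{1}\}\cup\bigcup_{j\ge2}\{\eins_{A}\,\ex[f\,|\,\ocF_{1}]:f\in M_{j}\}$ of $L^{1}(\ocF_{1})$ is thin, hence there is a nonzero $g_{1}\in L^{\infty}(\ocF_{1})$ vanishing outside $A$, scaled so that $|g_{1}|\le1/n$, with $\ex[g_{1}f]=0$ for every $f\in\bigcup_{j}M_{j}$. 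The compensating mass $-g_{1}$ is then distributed over $g_{2},\dots,g_{m}$ by an induction along the filtration: at level $i\ge2$ one peels off a piece $g_{i}\in L^{\infty}(\ocF_{i})$ annihilating $M_{i}$ (and the still-unprocessed $M_{j}$, $j>i$, again by a finite-union-of-thin-sets argument in $L^{1}(\ocF_{i})$) that is small enough to keep $h_{i}^{*}\pm g_{i}\ge0$, the identity $\sum_{i}h_{i}^{*}=1$ on $A$ providing the room.

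\emph{The main obstacle.} The delicate point is precisely this distribution of the compensating mass. The simplex constraint $\sum_{i}h_{i}=1$ forces the perturbations to have zero sum, whereas the filtration forces each $g_{i}$ to be $\ocF_{i}$-measurable although the "room'' $\{h_{i}^{*}>0\}$ for perturbing the $i$-th coordinate is in general only $\ocF_{j}$-measurable for later indices $j$. Reconciling these requirements while keeping, at each step of the induction, only a thin family of linear functionals to annihilate, and while maintaining positivity, is exactly what the asymmetric form of the thinness hypothesis ($i\le j$, conditioning the later $M_{j}$ down onto $\ocF_{i}$ and preserving thinness after multiplication by $\eins_{A}$) is designed for. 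This is a generalisation of the Dvoretzky--Wald--Wolfowitz purification principle to the thin-set setting, in the spirit of \cite{KingmanRobertson1968}, \cite{Anantharaman2012} and \cite{Edwards1987}.
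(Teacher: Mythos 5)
Your first step (compactness of $N_{1}\cap\ocPinfty_{m}$ via Banach--Alaoglu and Krein--Milman, giving existence of extreme points) matches the paper. The gap is in the characterization step: the entire difficulty of the proposition sits in the construction of the compensating perturbations $g_{2},\dots,g_{m}$, and you do not construct them -- you only assert that one can ``peel off'' pieces by an induction along the filtration, and your closing paragraph explicitly concedes that this distribution of mass is the delicate point. It is not a routine detail. For instance, the natural proportional compensation $g_{i}:=-g_{1}\,h_{i}^{*}/(1-h_{1}^{*})$ on $A$ is $\ocF_{i}$-measurable, balanced and positivity-preserving, but its orthogonality to $M_{i}$ requires $\ex\bigl[g_{1}\cdot\ex[h_{i}^{*}\varphi_{i}\,|\,\ocF_{1}]/(1-h_{1}^{*})\bigr]=0$, and the hypothesis only gives thinness of families of the form $\{\ex[\eins_{A}\varphi_{i}\,|\,\ocF_{t}]\mid\varphi_{i}\in M_{i}\}$ with \emph{indicator} weights, not with general $[0,1]$-valued $\ocF_{i}$-measurable weights $h_{i}^{*}$; exact orthogonality cannot be obtained by approximating $h_{i}^{*}$ by simple functions. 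So a one-shot purification at the top level, as you sketch it, does not go through from the stated hypotheses.

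The paper resolves exactly this obstruction by proving a stronger statement by backward induction over the coordinates: for every $k$ and every nonnegative $h\in L^{\infty}(\oOmega,\ocF_{k},\oP|_{\ocF_{k}})$, each extreme point of $K_{k}(h):=N_{k}\cap\ocPinfty_{mk}(h)$ has the form $(h\cdot\eins_{A_{k}},\dots,h\cdot\eins_{A_{m}})$. When an extreme point of $K_{k-1}(h)$ is considered, its tail $(h^{*}_{k},\dots,h^{*}_{m})$ is an extreme point of $K_{k}(h-h^{*}_{k-1})$, hence \emph{already purified}: $h^{*}_{i}=(h-h^{*}_{k-1})\eins_{A_{i}}$ for a partition. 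The perturbation is then $\varepsilon g$ at coordinate $k-1$ compensated by $\mp\varepsilon g\,\eins_{A_{i}}$ at coordinates $i\geq k$; the partition structure makes the sum cancel exactly and keeps positivity on $D_{\varepsilon}$, and the orthogonality of the compensating pieces reduces precisely to the assumed thin sets $\{\ex[\eins_{A_{i}}\varphi_{i}\,|\,\ocF_{k-1}]\mid\varphi_{i}\in M_{i}\}$ (together with $M_{k-1}$ and the fact that finite unions of thin sets are thin, plus Theorem 2.4 of Anantharaman to take $|g|=1$ on $D_{\varepsilon}$). The contradiction shows $\oP(\{h^{*}_{k-1}>0\}\cap\{h-h^{*}_{k-1}>0\})=0$, i.e.\ the $(k-1)$-th coordinate is purified as well. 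To repair your proposal you would need to supply this inductive structure (or an equivalent device ensuring the compensating weights are indicators of sets in $\ocF_{i}$, $i\geq k$), since that is where the asymmetric thinness hypothesis is actually usable.
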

\begin{proof}
We shall use ideas from the proof of Proposition 6 in \cite{KuehnRoesler1998}.
\medskip

First, let us, for any $k\in\{1,\dots,m\}$, denote by $N_{k}$ the set of all $(h_{k},\dots,h_{m})$ from $\Timesikm L^{\infty}(\oOmega,\ocF_{i},\oP|_{\ocF_{i}})$ satisfying $\ex\left[h_{i}\cdot \varphi_{i}\right] = \ex\left[f_{i}\cdot \varphi_{i}\right]$ for 
$\varphi_{i}\in M_{i}$ and $i=k,\dots,m.$ It is 
closed w.r.t. $\Timesikm\sigma(L^{\infty}_{i},L^{1}_{i}).$ Hence by Lemma \ref{BanachAlaoglu}, the set 
$K_{k}(h) := N_{k}\cap~\ocPinfty_{mk}(h)$ is compact w.r.t. $\Timesikm\sigma(L^{\infty}_{i},L^{1}_{i})$ for every nonnegative $h\in L^{\infty}(\oOmega,\ocF_{k},\oP|_{\ocF_{k}}).$ 
Since it is also convex, we may use the Krein-Milman theorem to conclude that each set $K_{k}(h)$ has some extreme point if it is nonvoid. Notice that $K_{1}(1)$ contains at least $(f_{1},\dots,f_{m})$ so that it has some extreme point. We shall now show by backward induction that for any $k\in\{1,\dots,m\}$ and any nonnegative 
$h\in L^{\infty}(\oOmega,\ocF_{k},\oP|_{\ocF_{k}})$ with nonvoid $K_{k}(h)$ 
\begin{itemize}
\item[$(\star\star)$]
each of its extreme points $(h^{*}_{k},\dots,h^{*}_{m})$ satisfies $h^{*}_{i} = h\cdot\eins_{A_{i}}$ $\oP-$a.s. 
($i=k,\dots,m$) for some $(A_{1},\dots,A_{m})\in\overline{\cP}_{m}$ with 
$A_{i} = \emptyset$ if $i < k.$ 
\end{itemize}
Obviously, this would imply the statement of Proposition \ref{KreinMilmanallgemein}.
\medskip

For $k = m,$ the set $K_{m}(h)$ is nonvoid iff 
$\ex\left[h\cdot\varphi_{m}\right] = \ex\left[f_{m}\cdot\varphi_{m}\right]$ holds for every $\varphi_{m}\in M_{m}.$ In this case, $h$ is the only extreme point, which has trivial representation $h = h\cdot\eins_{\Omega}$ 
corresponding to $(\emptyset,\dots,\emptyset,\Omega)\in\overline{\cP}_{m}.$ 
\medskip

Now let us assume that for some $k\in\{2,\dots,m\}$ and every nonvoid $K_{k}(h)$ statement $(\star\star)$ is satisfied. Let $h\in L^{\infty}(\oOmega,\ocF_{k-1},\oP|_{\ocF_{k-1}})$ be nonnegative with $K_{k-1}(h)\not=\emptyset,$ and select any extreme point 
$(h^*_{k-1},\dots,h^*_{m})$ of $K_{k-1}(h).$ Then $h - h^*_{k-1}$ belongs to 
$L^{\infty}(\oOmega,\ocF_{k-1},\oP|_{\ocF_{k-1}})$ and is nonnegative. Moreover, 
$(h^*_{k},\dots,h^*_{m})\in K_{k}(h - h^*_{k-1}),$ and it is easy to check that 
$(h^*_{k},\dots,h^*_{m})$ is even an extreme point of $K_{k}(h - h^*_{k-1}).$ Hence by assumption, there exists some $(A_{1},\dots,A_{m})\in\overline{\cP}_{m}$ satisfying 
$A_{i} = \emptyset$ if $i\leq k - 1$ and $h^*_{i} = (h - h^*_{k-1})\cdot \eins_{A_{i}}$ $\oP-$a.s. for $i = k,\dots,m.$
\medskip

Setting $D := \{h^*_{k-1} > 0\}\cap \{h - h^*_{k-1} > 0\},$ we want to show $\oP(D) = 0.$ This will be done by contradiction assuming $\oP(D) > 0.$ Then $\oP(D_{\varepsilon}) > 0$ for some $\varepsilon > 0,$ where $D_{\varepsilon} := \{h^*_{k-1} > \varepsilon\}\cap \{h - h^*_{k-1} > \varepsilon\}.$\par
We may observe by assumption that  
$\{\ex\left[\eins_{A_{i}}\cdot\varphi_{i}~|~\ocF_{k-1}\right]\mid \varphi_{i} \in M_{i}\}$ 
(with $i=k,\dots,m$) as well as $M_{k-1}$ are all thin subsets of $L^{1}(\oOmega,\ocF_{k-1},\oP|_{\ocF_{k-1}}).$ Since finite unions of thin subsets are thin subsets again (cf. \cite[Proposition 2.1]{Anantharaman2012}), we may find some nonzero 
$g\in L^{\infty}(\oOmega,\ocF_{k-1},\oP|_{\ocF_{k-1}})$ vanishing outside $D_{\varepsilon},$ and satisfying $\ex\left[g\cdot\varphi_{k-1}\right] = 0$ for 
$\varphi_{k-1} \in M_{k-1}$ as well as 
$$
\ex\left[g\cdot\eins_{A_{i}}\cdot\varphi_{i}\right] 
= 
\ex\left[g\cdot\ex\left[\eins_{A_{i}}\cdot\varphi_{i}~|~\ocF_{k-1}\right]\right] 
= 
0
\quad (\varphi_{i}\in M_{i},~i\in\{k,\dots,m\}).
$$
According to Theorem 2.4 in \cite{Anantharaman2012}, we may choose $g$ such that 
$$
\oP(\{|g| = 1\}\cap D_{\varepsilon}) = \oP(D_{\varepsilon})
$$ 
holds. Now, define $(\widehat{h}_{k-1},\dots,\widehat{h}_{m})$ and $(\overline{h}_{k-1},\dots,\overline{h}_{m})$ by
$$
\widehat{h}_{i} := 
\bcswitch
h^*_{i} + \varepsilon~g~& i = k-1\\
h^*_{i} - \varepsilon~g~\eins_{A_{i}}& \mbox{otherwise}
\ecswitch
~\mbox{and}~~    
\overline{h}_{i} := 
\bcswitch
h^*_{i} - \varepsilon~g~& i = k - 1\\
h^*_{i} + \varepsilon~g~\eins_{A_{i}}&\mbox{otherwise}
\ecswitch.
$$
Since $\oP(A_{i}\cap A_{j}) = 0$ for $i\not= j$ and $\oP(\bigcup_{i=k}^{m}A_{i}) = 1,$ we obtain $\sum_{i=k}^{m}g\cdot \eins_{A_{i}} = g$ $\oP-$a.s.. So by construction,  
$(\widehat{h}_{1},\dots,\widehat{h}_{m}), (\overline{h}_{1},\dots,\overline{h}_{m})$ differ, and belong both to $K_{k-1}(h).$ Moreover, $h^*_{i} = \widehat{h}_{i}/2 + \overline{h}_{i}/2$ for 
$i = k-1,\dots,m.$ This contradicts the fact that $(h^*_{k-1},\dots,h^*_{m})$ is an extreme point of $K_{k-1}(h).$ Therefore, $\oP(D) = 0.$\par 
Now define $(B_{1},\dots,B_{m})\in\Timesim\ocF_{i}$ by
$$
B_{i} := 
\bcswitch
\{h^*_{k-1} > 0, h = h^*_{k-1}\}& i = k-1\\
A_{i}\cap \{h^*_{k-1} = 0\}& i\in\{k,\dots,m\}\\
\emptyset&\mbox{otherwise}.
\ecswitch
$$
Obviously, $\oP(B_{i}\cap B_{j}) = 0$ for $i\not= j$ follows from $\oP(A_{i}\cap A_{j}) = 0$ for $i\not= j.$ Moreover, $\oP(\bigcup_{i=1}^{m}B_{i})\geq 
\oP(\Omega\setminus D~\cap~ \bigcup_{i=k}^{m}A_{i}) = 1.$ In particular 
$(B_{1},\dots,B_{m})\in\overline{\cP}_{m}.$ Finally, it may be verified easily that $h^*_{i} = h\cdot\eins_{B_{i}}$ $\oP-$a.s. holds for $i= k-1,\dots,m.$ Hence $K_{k-1}(h)$ fulfills statement $(\star\star)$ completing the proof.
\end{proof}
\begin{proposition}
\label{KreinMilman}
Let $M_{i}\subseteq L^{1}(\oOmega,\ocF_{i},\oP|_{\ocF_{i}})$ be nonvoid for $i = 1,\dots,m$ such that $\{\ex\left[\eins_{A}\cdot f~|~\ocF_{i}\right]\mid f\in M_{j}\}$ is a thin subset of $L^{1}(\oOmega,\ocF_{i},\oP|_{\ocF_{i}})$ for $i,j\in\{1,\dots,m\}$ with $i\leq j$ and any $A\in\ocF_{m}.$\par
Then for any $(f_{1},\dots,f_{m})\in\ocPinfty_{m},$ there exist  
$(A_{1},\dots,A_{m})\in\overline{\cP}_{m}$ and $g_{i}\in L^{\infty}(\oOmega,\ocF_{i},\oP|_{\ocF_{i}})$ $(i=1,\dots,m)$ such that 
$$
\ex\left[g_{i}\cdot \varphi_{i}\right] = 0\quad\mbox{for}~\varphi_{i}\in M_{i}~\mbox{with}~i = 1,\dots,m,
$$ 
and
$$
(f_{1},\dots,f_{m}) = (\eins_{A_{1}},\dots,\eins_{A_{m}}) + (g_{1},\dots,g_{m})\quad\oP-\mbox{a.s..}
$$
\end{proposition}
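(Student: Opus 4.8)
The plan is to read Proposition~\ref{KreinMilman} off directly from Proposition~\ref{KreinMilmanallgemein}, so that no new machinery is needed. First I would fix $(f_{1},\dots,f_{m})\in\ocPinfty_{m}$ and introduce, exactly as in Proposition~\ref{KreinMilmanallgemein}, the affine set $N_{1}$ of all $(h_{1},\dots,h_{m})\in\Timesim L^{\infty}(\oOmega,\ocF_{i},\oP|_{\ocF_{i}})$ satisfying $\ex[h_{i}\cdot\varphi_{i}]=\ex[f_{i}\cdot\varphi_{i}]$ for every $\varphi_{i}\in M_{i}$ and every $i\in\{1,\dots,m\}$. The thinness hypothesis imposed in Proposition~\ref{KreinMilman} is verbatim the hypothesis of Proposition~\ref{KreinMilmanallgemein}, so that result applies: the convex set $N_{1}\cap\ocPinfty_{m}$ is nonempty, since it contains $(f_{1},\dots,f_{m})$ itself, it admits extreme points, and each extreme point $(h_{1}^{*},\dots,h_{m}^{*})$ has the form $h_{i}^{*}=\eins_{A_{i}}$ $\oP$-a.s. for a suitable $(A_{1},\dots,A_{m})\in\overline{\cP}_{m}$.

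Next I would pick one such extreme point and define $g_{i}:=f_{i}-\eins_{A_{i}}$ for $i=1,\dots,m$. Since $f_{i}$ and $\eins_{A_{i}}$ both belong to $L^{\infty}(\oOmega,\ocF_{i},\oP|_{\ocF_{i}})$, so does $g_{i}$. The membership $(h_{1}^{*},\dots,h_{m}^{*})\in N_{1}$ gives $\ex[\eins_{A_{i}}\cdot\varphi_{i}]=\ex[f_{i}\cdot\varphi_{i}]$ for all $\varphi_{i}\in M_{i}$, hence $\ex[g_{i}\cdot\varphi_{i}]=0$ for all $\varphi_{i}\in M_{i}$ and all $i$. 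By the very definition of $g_{i}$ one then has $(f_{1},\dots,f_{m})=(\eins_{A_{1}},\dots,\eins_{A_{m}})+(g_{1},\dots,g_{m})$ $\oP$-a.s., which is the asserted decomposition, with $(A_{1},\dots,A_{m})\in\overline{\cP}_{m}$ as required.

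I do not expect any genuine obstacle here: all of the substance --- compactness of $\ocPinfty_{m}$ via the Banach--Alaoglu theorem (Lemma~\ref{BanachAlaoglu}), the Krein--Milman theorem to produce an extreme point, and the backward induction on $k$ that forces every extreme point of $N_{1}\cap\ocPinfty_{m}$ to be a tuple of indicator functions of an almost-partition --- is already carried out in the proof of Proposition~\ref{KreinMilmanallgemein}. The present statement merely re-expresses that result after translating by the fixed base point $(f_{1},\dots,f_{m})$; the only thing worth noting explicitly is the harmless observation that $N_{1}\cap\ocPinfty_{m}\neq\emptyset$, which legitimises speaking of its extreme points.
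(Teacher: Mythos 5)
Your proposal is correct and follows essentially the same route as the paper: fix $(f_{1},\dots,f_{m})$, form $N_{1}$, invoke Proposition \ref{KreinMilmanallgemein} to obtain an extreme point of $N_{1}\cap\ocPinfty_{m}$ of the form $(\eins_{A_{1}},\dots,\eins_{A_{m}})$, and set $g_{i}:=f_{i}-\eins_{A_{i}}$. The paper's proof is exactly this short deduction, so there is nothing to add.
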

\begin{proof}
Let us fix any $(f_{1},\dots,f_{T})\in\ocPinfty_{m},$ and let $N_{1}$ denote the set consisting of all 
$(h_{1},\dots,h_{m}),$ where $h_{i}\in L^{\infty}(\oOmega,\ocF_{i},\pr|_{\ocF_{i}})$ such that $\ex\left[h_{i}\cdot \varphi_{i}\right] = \ex\left[f_{i}\cdot \varphi_{i}\right]$ for $\varphi_{i}\in M_{i}.$ By Proposition \ref{KreinMilmanallgemein}, we may select an extreme point $(h_{1},\dots,h_{m})$ of $N_{1}\cap~\ocPinfty_{m}$ and some 
$(A_{1},\dots,A_{m})\in\overline{\cP}_{m}$ such that $h_{i} = \eins_{A_{i}}$ $\oP-$a.s. holds for $i=1,\dots,m.$ Then $(g_{1},\dots,g_{m}) := (f_{1} - h_{1},\dots,f_{m} - h_{m})$ and $(A_{1},\dots,A_{m})$ are as required.
\end{proof}
\begin{corollary}
\label{Dichtheit}
If $(\oOmega,\ocF_{i},\oP|_{\ocF_{i}})$ is atomless for every $i\in\{1,\dots,m\},$ then $\ocPinfty_{m}$ is the $\Timesim\sigma(L^{\infty}_{i},L^{1}_{i})-$closure of
$$
\{(\eins_{A_{1}},\dots,\eins_{A_{m}})\mid (A_{1},\dots,A_{m})\in\overline{\cP}_{m}\}.
$$
\end{corollary}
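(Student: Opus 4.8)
The plan is to prove the two inclusions separately, the nontrivial one being a direct application of Proposition \ref{KreinMilman}. For the elementary inclusion, note that for any $(A_{1},\dots,A_{m})\in\overline{\cP}_{m}$ the indicators $\eins_{A_{1}},\dots,\eins_{A_{m}}$ are nonnegative and sum to $1$ $\oP$-a.s., so $\{(\eins_{A_{1}},\dots,\eins_{A_{m}})\mid(A_{1},\dots,A_{m})\in\overline{\cP}_{m}\}\subseteq\ocPinfty_{m}$. Since $\ocPinfty_{m}=\ocPinfty_{m1}(1)$ is compact, hence closed, w.r.t. $\Timesim\sigma(L^{\infty}_{i},L^{1}_{i})$ by Lemma \ref{BanachAlaoglu}, the $\Timesim\sigma(L^{\infty}_{i},L^{1}_{i})$-closure of $\{(\eins_{A_{1}},\dots,\eins_{A_{m}})\mid(A_{1},\dots,A_{m})\in\overline{\cP}_{m}\}$ is contained in $\ocPinfty_{m}$ as well.

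For the reverse inclusion I would fix $(f_{1},\dots,f_{m})\in\ocPinfty_{m}$ and show that every basic $\Timesim\sigma(L^{\infty}_{i},L^{1}_{i})$-neighbourhood of $(f_{1},\dots,f_{m})$ meets $\{(\eins_{A_{1}},\dots,\eins_{A_{m}})\mid(A_{1},\dots,A_{m})\in\overline{\cP}_{m}\}$. Such a neighbourhood is specified by some $\varepsilon>0$ and finitely many test functions $\varphi_{i,1},\dots,\varphi_{i,k_{i}}\in L^{1}(\oOmega,\ocF_{i},\oP|_{\ocF_{i}})$ for $i=1,\dots,m$. Put $M_{i}:=\{\varphi_{i,1},\dots,\varphi_{i,k_{i}}\}$. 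Because each $(\oOmega,\ocF_{i},\oP|_{\ocF_{i}})$ is atomless and every finite subset of $L^{1}(\oOmega,\ocF_{i},\oP|_{\ocF_{i}})$ is thin in that situation, the set $\{\ex[\eins_{A}\cdot f\mid\ocF_{i}]\mid f\in M_{j}\}$ is finite and therefore a thin subset of $L^{1}(\oOmega,\ocF_{i},\oP|_{\ocF_{i}})$ for any $i\leq j$ and any $A\in\ocF_{m}$. Hence the hypotheses of Proposition \ref{KreinMilman} are satisfied for this choice of $M_{1},\dots,M_{m}$.

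Applying Proposition \ref{KreinMilman} to $(f_{1},\dots,f_{m})$ then yields $(A_{1},\dots,A_{m})\in\overline{\cP}_{m}$ and $g_{i}\in L^{\infty}(\oOmega,\ocF_{i},\oP|_{\ocF_{i}})$ with $\ex[g_{i}\cdot\varphi_{i}]=0$ for all $\varphi_{i}\in M_{i}$ and $(f_{1},\dots,f_{m})=(\eins_{A_{1}},\dots,\eins_{A_{m}})+(g_{1},\dots,g_{m})$ $\oP$-a.s.. In particular $\ex[(\eins_{A_{i}}-f_{i})\cdot\varphi_{i,\ell}]=-\ex[g_{i}\cdot\varphi_{i,\ell}]=0$ for all $i$ and $\ell$, so $(\eins_{A_{1}},\dots,\eins_{A_{m}})$ lies in the prescribed neighbourhood. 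As the neighbourhood was arbitrary, $(f_{1},\dots,f_{m})$ belongs to the closure of $\{(\eins_{A_{1}},\dots,\eins_{A_{m}})\mid(A_{1},\dots,A_{m})\in\overline{\cP}_{m}\}$, and combining this with the first inclusion gives the asserted identity.

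The substantive work here is entirely carried by Proposition \ref{KreinMilman}; once one grants it, the only thing to be careful about is the bookkeeping, namely recognising that a basic neighbourhood is encoded by finitely many functionals, that these functionals form a thin family over atomless spaces, and that the exact $M_{i}$-orthogonality produced by Proposition \ref{KreinMilman} translates into membership in that neighbourhood. I do not expect any obstacle beyond this routine translation, and in particular no compactness or extreme-point argument is needed afresh at this stage.
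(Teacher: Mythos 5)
Your proposal is correct and follows essentially the same route as the paper: reduce to basic $\Timesim\sigma(L^{\infty}_{i},L^{1}_{i})$-neighbourhoods given by finitely many $L^{1}$-test functions, note that the resulting conditional-expectation families are finite and hence thin over atomless spaces, and apply Proposition \ref{KreinMilman} to produce an indicator tuple meeting the neighbourhood exactly. The only difference is that you spell out the easy inclusion via closedness of $\ocPinfty_{m}$ (Lemma \ref{BanachAlaoglu}), which the paper leaves implicit.
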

\begin{proof}
Let $(f_{1},\dots,f_{m})\in\ocPinfty_{m}$ be arbitrary. Consider the subsets 
$$
U_{i\varepsilon}(M_{i}) := \{\varphi\in L^{\infty}(\oOmega,\ocF_{i},\oP|_{\ocF_{i}})\mid 
\big|\ex\left[(f_{i} - \varphi)\cdot f\big|\right] < \varepsilon~\mbox{for}~f\in M_{i}\},
$$
where $\varepsilon > 0,$ and $M_{i}$ any nonvoid, finite subset of $L^{1}(\oOmega,\ocF_{i},\oP|_{\ocF_{i}}).$ The sets $\Timesim U_{i\varepsilon}(M_{i})$ constitute a basis of the $\Timesim\sigma(L^{\infty}_{i},L^{1}_{i})-$neighbourhoods of $(f_{1},\dots,f_{m}).$ So let us select any $\varepsilon > 0$ and nonvoid finite subsets $M_{i}$ of 
$L^{1}(\oOmega,\ocF_{i},\oP|_{\ocF_{i}})$ for $i=1,\dots,m.$\par

Let $i,j\in\{1,\dots,m\}$ with $i\leq j,$ and $A\in\ocF_{m}.$ Then the set 
consisting of all $\ex\left[\eins_{A}\cdot f~|~\ocF_{i}\right]$ with $f\in M_{j}$ is a nonvoid finite subset of $L^{1}(\oOmega,\ocF_{i},\oP|_{\ocF_{i}}),$ in particular it is 
thin because $(\oOmega,\ocF_{i},\oP|_{\ocF_{i}})$ is assumed to be atomless (cf. 
\cite[Lemma 2]{KingmanRobertson1968}). Hence we may apply Proposition \ref{KreinMilman} to select some $(A_{1},\dots,A_{m})\in\overline{\cP}_{m}$ satisfying 
$\ex\left[(f_{i} - \eins_{A_{i}})\cdot f\right] = 0$ for $f\in M_{i}$ and $i\in\{1,\dots,m\}.$ This means 
$$
(\eins_{A_{1}},\dots,\eins_{A_{m}})\in\Timesim U_{i\varepsilon}(M_{i}),
$$
and completes the proof.
\end{proof}
\end{appendix}

\section*{Acknowledgements}
The authors would like to thank Alexander Schied and Mikhail Urusov for fruitful discussions and helpful remarks.

\end{document}